\renewcommand{\epsilon}{\varepsilon}
\renewcommand{\phi}{\varphi}
\newcommand{\C}{\mathbb{C}}
\newcommand{\R}{\mathbb{R}}
\newcommand{\M}{\mathcal{M}}
\DeclareMathOperator{\Tr}{Tr}
\newcommand{\ketbra}[2]{|#1\rangle\langle#2|} 
\newtheorem{theorem}{Theorem}[section]
\newtheorem{definition}[theorem]{Definition}
\newtheorem{proposition}[theorem]{Proposition}
\newtheorem{lemma}[theorem]{Lemma}
\newtheorem{example}[theorem]{Example}
\newtheorem{remark}[theorem]{Remark}
\begin{document}

	\title[Measurement incompatibility vs. Bell non-locality]{Measurement incompatibility vs. Bell non-locality:\\an approach via tensor norms}
	
	\author{Faedi Loulidi}
	
	\author{Ion Nechita}
	\email{$\{$loulidi,nechita$\}$@irsamc.ups-tlse.fr}
	\address{Laboratoire de Physique Th\'eorique, Universit\'e de Toulouse, CNRS, UPS, France}	
	
	\begin{abstract}
		Measurement incompatibility and quantum non-locality are two key features of quantum theory. Violations of Bell inequalities require quantum entanglement and incompatibility of the measurements used by the two parties involved in the protocol. We analyze the converse question: for which Bell inequalities is the incompatibility of measurements enough to ensure a quantum violation? We relate the two questions by comparing two tensor norms on the space of dichotomic quantum measurements: one characterizing measurement compatibility and the second one characterizing violations of a given Bell inequality. We provide sufficient conditions for the equivalence of the two notions in terms of the matrix describing the correlation Bell inequality. We show that the CHSH inequality and its variants are the only ones satisfying it. 
	\end{abstract}
	
	\date{\today}
	
	\maketitle
	
	\tableofcontents
	
	\section{Introduction}
	Since its discovery, quantum mechanics was formalized as a theory with many foundational aspects which differ significantly from classical mechanics. Some of these deep questions, and the relation among them are still subject to investigation nowadays. Understanding these notions and their interplay is crucial for the development of the second quantum revolution.
	
	Two of the most important conceptual revolutions put forward by quantum mechanics are the notions of \emph{non-locality of correlations} and the \emph{incompatibility of quantum measurements}. The latter notion, that of measurement incompatibility is one of the most unintuitive aspects of the quantum world, when examined from a classical perspective: there exist (quantum) measurements which cannot be performed simultaneously on a given quantum system. 
	
	It is well-known that quantum non-locality is one of the fundamental aspects of quantum theory that gives rise to a lot of questions about quantum reality. John Bell \cite{Bell} gave a complete answer to the debate about the non-locality and elucidated the intrinsic probabilistic aspect of quantum theory. The answer he provides is that any local theory must obey some inequality, while if one applies the predictions of quantum mechanics, the aforementioned inequality can be violated. This means that the quantum world is completely non-local, which, in turn, means that there are phenomena that we could not understand with our classical macroscopic point of view. Such conclusion provides a complete answer about the intrinsic reality of the quantum world. Such violations of correlation inequalities were completely confirmed experimentally in Alain Aspect’s experiment \cite{Aspect}, and in a loophole-free manner in \cite{hensen2015loophole}. 
	
	In the modern language of quantum information theory, such correlation inequalities can be understood as \emph{non-local games} \cite{palazuelos2016survey}. In such a game, two players, called traditionally Alice and Bob, play cooperatively against a Referee. Alice and Bob are space-like separated, hence once the game starts they can no longer communicate. However, they both know the rules of the game, and they can meet before the game starts and make a strategy. Technically, the games we consider are defined by a matrix $M$, which encodes the pay-off the players receive; in particular, we shall consider in this work exclusively XOR games with $N$ questions and two answers. 
	
	Two scenarios are of particular importance for us: Alice and Bob are either allowed to use \emph{classical strategies} (where they share classical randomness) or \emph{quantum strategies} (where they share a bipartite entangled quantum state). It turns out that the optimal probabilities to win the game with classical or, respectively, quantum strategies, can be formulated as two different \emph{tensor norms} of the matrix that encodes the rules of the game (seen as a $2-$tensor). This is one of the instances where tensor norms (and Banach space theory in general) has found applications in the theory of non-local games. 
	
	The main goal of the current work is to relate, in a \emph{quantitative manner} the notion of \emph{measurement incompatibility} to that of \emph{Bell inequality violations} in a very general setting. Our original motivation was the seminal work \cite{wolf2009measurements}, where the authors connected, in a qualitative manner, the incompatibility of Alice's measurements in the CHSH game, with possible violations of the CHSH inequality. Our results can be seen to build on this example, generalizing it in two different directions: 
	\begin{itemize}
		\item we go beyond the CHSH game, allowing all (reasonable) correlation XOR games
		\item we relate, in a quantitative manner, the largest possible violation of a Bell inequality to the incompatibility robustness of Alice's measurements. 
	\end{itemize}
	In order to achieve these goals, our framework is different than the usual setting of non-local games, in the respect that
	\begin{center}
		\boxed{
			\emph{Alice's dichotomic measurements are fixed.}}
	\end{center}
	Optimizing over Bob's choice of $N$ dichotomic measurements and over the players' shared entangled state, we can express the quantum bias of the given non-local game $M$ as a tensor norm of Alice's $N$-tuple of measurements, which we denote by $\|A\|_M$. In particular, with Alice's choice of measurements fixed to be $A$, the players will violate the Bell inequality corresponding to $M$ if and only if $\|A\|_M > \beta(M)$, where $\beta(M)$ is the classical bias of the game.
	
	On the quantum measurement (in-)compatibility side, we revisit the construction from \cite{bluhm2022incompatibility}, where the compatibility of a $N$-tuple of dichotomic quantum measurements has been described with the help of a tensor norm, dubbed $\|A\|_c$. We give a direct proveof the result showing that $N$ dichotomic measurements $A = (A_1, \ldots, A_N)$ are compatible iff $\|A\|_{c}\leq 1$. The value of the compatibility norm $\|\cdot\|_c$ is related to the notion of \emph{compatibility robustness}: the value of the norm is precisely the quantity of (white) noise one needs to add to the tuple of dichotomic measurements in order to render them compatible. 
	
	Having formulated the key physical principles of this work (quantum incompatibility and Bell non-locality), we get now to our main point: the relation between them. This question has received already a lot of attention in the literature. The starting point is the equivalence first observed in \cite{wolf2009measurements}: for the CHSH game \cite{clauser1969proposed} with two questions, Alice's pair of measurements are incompatible if and only if there exist an entangled state and a choice for Bob's pair of measurements such that they can obtain a violation of the CHSH Bell inequality. It is equally well-known that the two notions are not equivalent in more general situations,  see \cite{quintino2014joint}. 
	
	In this work, we provide a definitive answer to this question, using the framework of tensor norms. More precisely, we express the following quantities as tensor norms: 
	\begin{itemize}
		\item \emph{incompatibility}: how much (white) noise one needs to add to a tuple of dichotomic POVMs to render them compatible
		\item \emph{quantum bias of a correlation game}: what is the maximal value of the game (normalized to have classical bias 1), when Alice's tuple of dichotomic measurements are fixed.
	\end{itemize}
	
	We then discuss how these norms compare, and when they are equal. We provide sufficient conditions for equality, and then show that only the CHSH game (and its permutations) satisfy them, emphasizing the special role of the CHSH inequality. 
	
	\begin{figure}
		\centering
		\includegraphics[width=\textwidth]{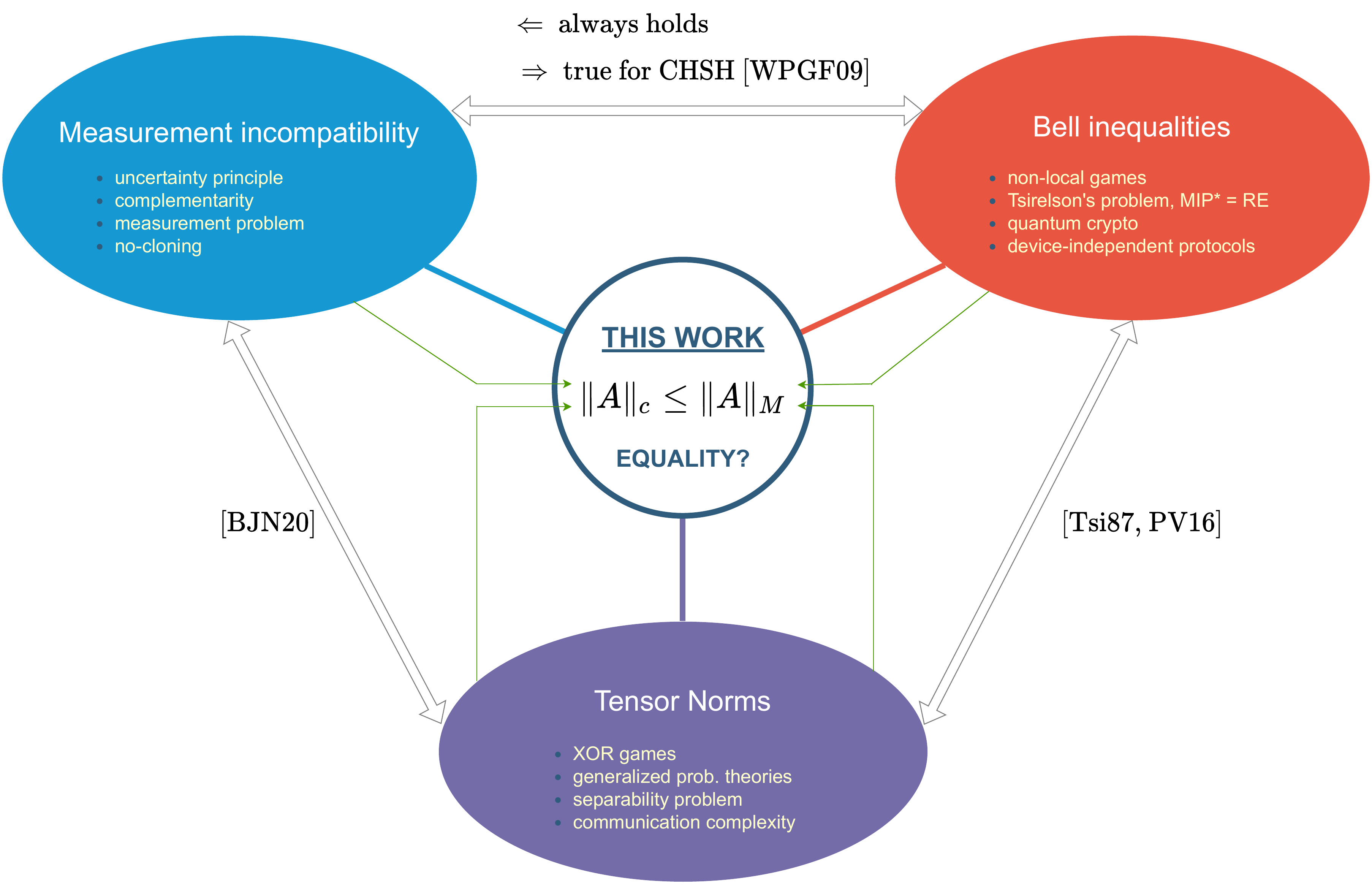}
		\caption{In this work, we relate measurement incompatibility with Bell inequalities using the formalism of tensor norms. Pairwise connections having already been established in the literature, we bring the three concepts together for the first time.}
		\label{fig:non-locality}
	\end{figure}
	
	Our paper is organised as follows. In Section \ref{mainresult} we (informally) state the main results of our paper and their interpretation. In Section \ref{sec:compatibility} we recall the notion of compatibility for quantum measurements. We present in Section \ref{sec:Tensor Norms} the basic definitions of tensor norms from Banach space theory, focusing on the examples needed in this work. In Section \ref{sec:non-locality} we introduce the framework of Bell non-locality as non-local games and relate the values of these games to tensor norms. In Section \ref{sec:non-locality-norm} we introduce the main definition of the non-locality norm $\|A\|_M$ that will characterise the violation of the Bell inequality. In Section \ref{sec: compatibility-norm} we introduce the compatibility norm $\|A\|_{c}$ that will characterise the compatibility of Alice's measurement. We present in the Section \ref{sec: non-locality and incompatibility} our main theorems, discussing also under which conditions the violation of a Bell inequality implies measurement incompatibility. In our framework, we provide a conceptual explanation of the main result in \cite{wolf2009measurements}, and we also analyze new Bell inequalities, such as different deformations of the CHSH inequality and the pure correlation part of the $I_{3322}$ tight Bell inequality; for the latter, the two notions are not equivalent, as noticed in \cite{quintino2014joint}.    
	
	\section{Main results}\label{mainresult}
	In this section we introduce the main definitions and the main results of our work. Our goal is to unify two fundamental notions of quantum theory, \emph{measurement incompatibility} and \emph{Bell inequality violations}. To do so, we shall work in the framework of \emph{non-local games}, where the rules of a correlation game are encoded in a real $N \times N$ matrix $M$, and \emph{Alice's dichotomic measurements are fixed}. Note that in this work we shall be considering general (not necessarily projective) measurements, mathematically encoded by POVMs. 
	
	The maximum value of the game $M$, when Alice's measurements are fixed, is given by the following quantity. 
	
	\begin{definition}[The $M$-Bell-locality tensor norm] Let $M$ an invertible \emph{Bell functional}  and Alice's $N$-tuple of dichotomic measurements $A = (A_1, \ldots, A_N)$, we define the following tensor norm: 
		$$\|A\|_{M}:=\sup_{\|\psi\| = 1} \sup_{\|B_y\| \leq 1} \Big \langle \psi \Big | \sum_{x,y = 1}^N\,M_{xy} \,A_x \otimes B_y \Big | \psi \Big \rangle=\lambda_{\max}\bigg[\sum_{y=1}^N \bigg| \sum_{x=1}^N  M_{xy}\,A_x\bigg|\bigg].$$
	\end{definition}
	The quantity $\|A\|_M$ is the maximum value of the game $M$, when optimizing over quantum strategies, with Alice's measurements being fixed. The measurements $A = (A_1, \ldots, A_N)$ are called \emph{$M$-Bell-local} there is no violation of the Bell inequality corresponding to $M$: $\|A\|_M\leq \beta(M)$, with $\beta(M)$ being \emph{the classical bias} of the game (which, importantly, can also be expressed as a tensor norm). If this is not the case, we call Alice's measurements \emph{$M$-Bell-non-local}.
	
	Regarding compatibility, we are concerned with the same question as before: are Alice's dichotomic measurements compatible or not? The following quantity was introduced, in the more abstract setting of generalized probabilistic theories in \cite{bluhm2022incompatibility}, see also \cite{bluhm2022tensor}. 
	
	\begin{definition}[The compatibility tensor norm]
		For a tensor $A \in  \mathbb R^N \otimes \mathcal M_d^{sa}(\mathbb C)$, we define the following quantity: 
		\begin{equation*}
			\|A\|_c := \inf \Bigg\{ \Big\|\sum_{j=1}^K H_j\Big\|_\infty \, : \, A = \sum_{j=1}^K z_j \otimes H_j, \, \text{ s.t. } \, \forall j \in [K], \, \|z_j\|_\infty \leq 1 \text{ and } H_j \geq 0\Bigg\}.
		\end{equation*}
	\end{definition}
	
	The compatibility norm, together with the injective tensor product of $\ell_\infty$ and $S_\infty$ norms, completely characterize compatibility of tuples of dichotomic quantum measurements \cite{bluhm2022incompatibility,bluhm2022tensor}. 
	
	\begin{proposition}
		Let $A = (A_1, \ldots, A_N)$ be a $N$-tuple of self-adjoint $d \times d$ complex matrices. Then:
		\begin{enumerate}
			\item $A$ is a collection of dichotomic quantum observables (i.e.~$\|A_i\|_\infty \leq 1$ $\forall i$) if and only if $\|A\|_\epsilon \leq 1$.
			\item $A$ is a collection of \emph{compatible} dichotomic quantum observables if and only if $\|A\|_c \leq 1$.
		\end{enumerate}
	\end{proposition}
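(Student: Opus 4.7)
The two parts are largely independent. For \textbf{part (1)}, this is a standard identification in Banach space theory: for any Banach space $Y$, one has the isometric identification $\ell_\infty^N \otimes_\epsilon Y = \ell_\infty^N(Y)$, which in our setting yields $\|A\|_\epsilon = \max_{i \in [N]} \|A_i\|_\infty$. The equivalence with the dichotomic condition $\|A_i\|_\infty \leq 1$ for all $i$ is then immediate.

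For \textbf{part (2)}, the first step is to translate compatibility into tensor form: the dichotomic observables $(A_1, \ldots, A_N)$ are compatible if and only if there exist PSD matrices $\{G_\sigma\}_{\sigma \in \{\pm 1\}^N}$ with $\sum_\sigma G_\sigma = I$ and $A_i = \sum_\sigma \sigma_i G_\sigma$, equivalently $A = \sum_\sigma \sigma \otimes G_\sigma$. The direction $(\Rightarrow)$ is then immediate: since $\|\sigma\|_\infty = 1$ for every $\sigma \in \{\pm 1\}^N$, the joint POVM decomposition is admissible in the definition of $\|A\|_c$, giving $\|A\|_c \leq \|\sum_\sigma G_\sigma\|_\infty = \|I\|_\infty = 1$.

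The content is in the direction $(\Leftarrow)$. Assume $\|A\|_c \leq 1$, fix $\eta > 0$, and pick a decomposition $A = \sum_j z_j \otimes H_j$ with $\|z_j\|_\infty \leq 1$, $H_j \geq 0$, and $\|\sum_j H_j\|_\infty \leq 1 + \eta$. The key observation is that the extreme points of the unit ball of $\ell_\infty^N$ are precisely the sign vectors $\{\pm 1\}^N$, so each $z_j$ admits a convex-combination expansion $z_j = \sum_\sigma \lambda_j^\sigma \sigma$ with $\lambda_j^\sigma \geq 0$ and $\sum_\sigma \lambda_j^\sigma = 1$. Setting $G_\sigma := \sum_j \lambda_j^\sigma H_j \geq 0$ and $S := \sum_j H_j = \sum_\sigma G_\sigma$ yields $A = \sum_\sigma \sigma \otimes G_\sigma$ with $0 \leq S \leq (1+\eta) I$. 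To correct the normalization, perform a \emph{symmetric padding}: define $G'_\sigma := G_\sigma + 2^{-N}\bigl((1+\eta) I - S\bigr) \geq 0$. Since $\sum_{\sigma \in \{\pm 1\}^N} \sigma_i = 0$ for every $i \in [N]$, the padding is invisible to $A$, so that $\sum_\sigma \sigma_i G'_\sigma = A_i$, while $\sum_\sigma G'_\sigma = (1+\eta) I$. Hence $\{G'_\sigma / (1+\eta)\}_\sigma$ is a bona fide joint POVM for the tuple $A/(1+\eta)$. Letting $\eta \to 0$ and invoking closedness of the set of compatible tuples (which is elementary in finite dimension) concludes the proof.

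The main obstacle is locating the right tensor-language reformulation of compatibility; once one recognizes that joint POVMs correspond to expansions along the sign vectors, the rest follows from the extreme-point structure of the $\ell_\infty^N$ ball together with the observation that sign vectors have vanishing coordinate sums. The symmetric padding then handles cleanly the cosmetic issue that the infimum defining $\|A\|_c$ need not be attained.
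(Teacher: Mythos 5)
Your proof is correct. Part (1) and the forward direction of part (2) coincide with the paper's argument: the joint POVM indexed by sign vectors yields the decomposition $A = \sum_{\sigma} \sigma \otimes G_\sigma$ with $\sum_\sigma G_\sigma = I_d$, hence $\|A\|_c \leq 1$. The reverse direction is where you diverge. The paper takes a decomposition $A = \sum_j z_j \otimes H_j$ with $\sum_j H_j = I_d$ and reads it off directly as a compatibility certificate via its post-processing reformulation of compatibility (Proposition \ref{prop:compatibility-postprocessing}), with $(H_j)_j$ as the joint POVM and $p_i(\pm\,|\,j) = (1 \pm z_j(i))/2$ as the classical post-processing. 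You instead unpack this into a genuine joint POVM indexed by $\{\pm 1\}^N$, using the fact that the extreme points of the unit ball of $\ell_\infty^N$ are exactly the sign vectors; the two routes are equivalent, and yours has the advantage of being self-contained, not relying on the marginal-versus-post-processing equivalence. You are also more careful than the paper on one point: the paper tacitly assumes that $\|A\|_c \leq 1$ furnishes a decomposition with $\sum_j H_j$ exactly equal to $I_d$, i.e., that the infimum in the definition of $\|\cdot\|_c$ is attained. Your $\eta$-padding (which is admissible because $\sum_{\sigma \in \{\pm1\}^N} \sigma_i = 0$) combined with the closedness of the set of compatible tuples handles this cleanly; alternatively, attainment follows from a compactness argument in finite dimension. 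No gaps.
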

	The compatibility norm allows Alice to know whether her measurements are compatible ($\|A\|_c\leq 1$) or not ($\|A\|_c > 1$); in the latter case, the, the minimal quantity of white noise that needs to be mixed in the measurements in order to render them compatible is $1/\|A\|_c$, providing an operational interpretation of the compatibility norm.
	
	To sum up, in the setting  of tensor norms, 
	\begin{itemize}
		\item Alice's measurements are \emph{$M$-Bell-local} if and only if $\|A\|_M\leq\beta (M)=\|M\|_{\ell_1^N \otimes_{\epsilon} \ell_1^N}.$
		\item Alice's measurements are compatible if and only if $\|A\|_c\leq 1$.
	\end{itemize}
	
	To understand the relation between non-locality and the compatibility, we now have to compare the two norms $\|\cdot\|_c$ and $\|\cdot\|_M$.
	
	\begin{theorem}
		Consider an $N$-input 2-output non-local game $M$, corresponding to a matrix $M \in \mathcal M_N(\mathbb R)$. Then, for any $N$-tuple of self-adjoint matrices $A = (A_1, \ldots, A_N)$, we have
		$$\|A\|_M\leq\|A\|_{c} \, \beta(M)=\|A\|_{c}\,\|M\|_{\ell_1^N \otimes_{\epsilon} \ell_1^N}.$$
		In particular, if Alice's measurements $A$ are $M$-Bell-non-local, then they must be incompatible. 
	\end{theorem}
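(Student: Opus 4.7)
The plan is to take an arbitrary admissible decomposition realizing $\|A\|_c$, substitute it into the bilinear form defining $\|A\|_M$, and use positivity of the resulting pieces together with a scalar estimate expressible in terms of the injective tensor norm of $M$. Concretely, fix any decomposition $A_x = \sum_{j=1}^K (z_j)_x H_j$ with $\|z_j\|_\infty \leq 1$ and $H_j \geq 0$, together with a unit vector $\psi$ and self-adjoint contractions $B_1,\dots,B_N$. Substituting gives
\begin{equation*}
\sum_{x,y=1}^N M_{xy}\, A_x \otimes B_y \;=\; \sum_{j=1}^K H_j \otimes C_j, \qquad C_j \;:=\; \sum_{x,y=1}^N M_{xy}\,(z_j)_x\,B_y.
\end{equation*}

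The key step is to bound each $C_j$ separately. By the triangle inequality and $\|B_y\|_\infty \leq 1$,
\begin{equation*}
\|C_j\|_\infty \;\leq\; \sum_{y=1}^N \Big|\sum_{x=1}^N M_{xy}\, (z_j)_x\Big| \;\leq\; \sup_{\|z\|_\infty \leq 1} \sum_{y=1}^N \Big|\sum_{x=1}^N M_{xy}\, z_x\Big| \;=\; \beta(M),
\end{equation*}
where the last equality is the standard dual description of $\|M\|_{\ell_1^N \otimes_\epsilon \ell_1^N}$. Next, the positivity $H_j \geq 0$ is what promotes this scalar bound to an operator-level inequality: writing $\phi_j := (H_j^{1/2}\otimes I)\psi$, one has
\begin{equation*}
\bigl|\langle \psi | H_j \otimes C_j | \psi\rangle\bigr| = \bigl|\langle \phi_j | I \otimes C_j | \phi_j \rangle\bigr| \leq \|C_j\|_\infty\, \|\phi_j\|^2 = \|C_j\|_\infty\, \langle \psi | H_j \otimes I | \psi\rangle.
\end{equation*}

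Summing over $j$ and using $\sum_j H_j \geq 0$ with $\|C_j\|_\infty \leq \beta(M)$ gives
\begin{equation*}
\Bigl|\bigl\langle \psi \bigl| {\textstyle\sum_{x,y}} M_{xy}\, A_x \otimes B_y \bigr| \psi\bigr\rangle\Bigr| \leq \beta(M)\, \Bigl\|\sum_{j=1}^K H_j\Bigr\|_\infty.
\end{equation*}
Taking the supremum over $\psi$ and over the contractions $B_y$ on the left, and the infimum over admissible decompositions on the right, yields the claimed $\|A\|_M \leq \beta(M)\,\|A\|_c$; the ``in particular'' statement is an immediate contrapositive since compatibility means $\|A\|_c \leq 1$. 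The only step requiring a touch of care is the $H_j^{1/2}$-trick that moves the scalar bound on $C_j$ past the positive factor $H_j$ in the tensor product; everything else is bookkeeping and the standard identification of $\beta(M)$ with an $\ell_\infty \times \ell_\infty$ supremum.
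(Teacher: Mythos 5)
Your proof is correct and follows essentially the same strategy as the paper's: substitute an admissible decomposition $A_x=\sum_j (z_j)_x H_j$ into $\|A\|_M$, bound the scalar coefficients by $\sup_{\|z\|_\infty\leq 1}\sum_y|\sum_x M_{xy}z_x|=\|M\|_{\ell_1^N\otimes_\epsilon\ell_1^N}=\beta(M)$, and use positivity of the $H_j$ to promote this to the claimed operator bound. The only (cosmetic) difference is that you work directly with the variational definition $\sup_{\psi,B_y}\langle\psi|\sum M_{xy}A_x\otimes B_y|\psi\rangle$ via the $H_j^{1/2}$ sandwich, whereas the paper starts from the reformulation $\|A\|_M=\lambda_{\max}[\sum_y|\sum_x M_{xy}A_x|]$; your version has the small advantage of not relying on that lemma, and of handling a possibly non-attained infimum cleanly by taking an arbitrary decomposition first.
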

	
	In the theorem above, we have upper bounded the $M$-Bell-locality norm by the compatibility norm that depends only on Alice's measurement times the classical bias of the game. This inequality is a \emph{quantitative} version of the well-known, \emph{qualitative} fact that if Alice's measurement are compatible, she will never observe any Bell inequality violation (i.e.~her measurements are $M$-Bell-local). 
	
	One of our main contributions is to raise and answer the converse question: we want to upper bound the compatibility norm by the $M$-Bell-locality norm. In physical terms, we are asking whether, given a Bell inequality $M$ and a tuple of measurements, can Alice observe violations of $M$ using her measurements? We have the following theorem, providing a (partial) answer to this question.
	
	\begin{theorem}
		Let $M \in \mathcal M_N(\mathbb R)$ be an invertible matrix. Then, for any $N$-tuple of self-adjoint matrices $A = (A_1, A_2, \ldots, A_N)$, we have
		$$\|A\|_c \leq \|A\|_M\|M^{-1}\|_{\ell_\infty^N \otimes_\epsilon \ell_\infty^N}.$$
	\end{theorem}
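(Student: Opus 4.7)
The plan is to exploit the invertibility of $M$ to switch from Alice's observables $(A_x)_{x=1}^N$ to the rotated family $\tilde A_y := \sum_x M_{xy} A_x$, for which $\|A\|_M$ has a direct spectral interpretation, and then to assemble an explicit feasible decomposition witnessing the upper bound on $\|A\|_c$.

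First, I would rewrite the formula for $\|A\|_M$ given in its definition as $\|A\|_M = \|\sum_{y=1}^N |\tilde A_y|\|_\infty$, using that the operator inside $\lambda_{\max}$ is positive. Since each $\tilde A_y$ is self-adjoint, its Jordan decomposition produces positive parts $\tilde A_y^\pm \geq 0$ with $\tilde A_y = \tilde A_y^+ - \tilde A_y^-$ and $|\tilde A_y| = \tilde A_y^+ + \tilde A_y^-$. Inverting the change of basis gives
$$A_x = \sum_{y=1}^N (M^{-1})_{yx}\,\tilde A_y = \sum_{y=1}^N (M^{-1})_{yx}\,\tilde A_y^+ - \sum_{y=1}^N (M^{-1})_{yx}\,\tilde A_y^-,$$
which is almost a decomposition of the form appearing in the definition of $\|A\|_c$, except that the scalar coefficients need not lie in the $\ell_\infty$ unit ball.

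Second, I would rescale. Let $C := \|M^{-1}\|_{\ell_\infty^N \otimes_\epsilon \ell_\infty^N}$; since the extreme points of the $\ell_1^N$ unit ball are the signed coordinate vectors, this injective tensor norm equals $\max_{x,y}|(M^{-1})_{xy}|$. For each $(y,\eta)\in[N]\times\{+1,-1\}$, define $H_{y,\eta} := C\,\tilde A_y^\eta \geq 0$ and $(z_{y,\eta})_x := \eta (M^{-1})_{yx}/C$; by the choice of $C$, the vectors $z_{y,\eta}$ satisfy $\|z_{y,\eta}\|_\infty \leq 1$. The identity above then reads $A = \sum_{y,\eta} z_{y,\eta}\otimes H_{y,\eta}$, producing a feasible point in the infimum defining $\|A\|_c$.

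Evaluating the compatibility objective on this decomposition yields
$$\Big\|\sum_{y,\eta} H_{y,\eta}\Big\|_\infty = C\,\Big\|\sum_y |\tilde A_y|\Big\|_\infty = C \cdot \|A\|_M,$$
from which $\|A\|_c \leq \|M^{-1}\|_{\ell_\infty^N \otimes_\epsilon \ell_\infty^N}\,\|A\|_M$ follows at once. The only modeling point requiring attention is the use of the Jordan decomposition: among all splittings $\tilde A_y = P_y - N_y$ with $P_y, N_y \geq 0$, the Jordan choice minimizes $\|P_y + N_y\|_\infty$, so no other $(y,\eta)$-indexed ansatz can improve the estimate. I expect the main effort to be the (routine) identification of the injective tensor norm with the maximum absolute entry; everything else is a direct substitution.
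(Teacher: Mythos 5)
Your proposal is correct and follows essentially the same route as the paper: pass to the rotated observables $\tilde A_y = \sum_x M_{xy}A_x$, split them into positive and negative parts, invert the change of basis to obtain a feasible decomposition for $\|A\|_c$ with scalar vectors built from the rows of $M^{-1}$ rescaled by $\max_{x,y}|(M^{-1})_{xy}| = \|M^{-1}\|_{\ell_\infty^N\otimes_\epsilon\ell_\infty^N}$, and evaluate the objective to get $\|M^{-1}\|_{\ell_\infty^N\otimes_\epsilon\ell_\infty^N}\,\lambda_{\max}\big[\sum_y|\tilde A_y|\big]=\|M^{-1}\|_{\ell_\infty^N\otimes_\epsilon\ell_\infty^N}\,\|A\|_M$. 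The only difference is cosmetic (where the rescaling constant is placed) plus an unnecessary side remark about optimality of the Jordan splitting, which is not needed since any feasible decomposition upper-bounds the infimum.
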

	
	In the main theorems succinctly stated above, we have compared the compatibility tensor norm and the  $M$-Bell-locality norm. It was shown in \cite{wolf2009measurements} that for the CHSH game, the incompatibility of one party's quantum measurements and the violation of a Bell inequality are equivalent. In our setting, this equivalence can be understood as an \emph{equality} of the compatibility norm and the $M$-Bell-locality norm for $M_{\text{CHSH}}$: we have $$\|\cdot\|_c=\|\cdot\|_{M_{\text{CHSH}}}.$$
	
	Having restated this classical result in terms of an equality of tensor norms, it is natural to ask whether this equality goes beyond the case of the CHSH inequality. Incompatibility and Bell non-locality are not, in general, equivalent, as it was shown in \cite{bene2018measurement, hirsch2018quantum}.
	
	From the main theorems above, any game $M$ must satisfy $\|M\|_{\ell_1^N \otimes_{\epsilon} \ell_1^N}\cdot\|M^{-1}\|_{\ell_\infty^N \otimes_\epsilon \ell_\infty^N} \geq 1$. If one wants to conclude $\|\cdot\|_c = \|\cdot\|_M$ from these results, one needs to investigate the equality case in the aforementioned inequality. We show that for any real and invertible matrix $M$, the following holds.
	
	\begin{proposition}
		For any real and invertible matrix $M$, we have:
		$$\|M^{-1}\|_{\ell^N_{\infty}\otimes_{\epsilon}\ell^N_{\infty}}\|M\|_{\ell^N_1\otimes_{\epsilon}\ell^N_1}\geq \frac{N}{\rho(\ell_{\infty}^N,\ell_{\infty}^N)}\geq \sqrt{\frac N 2}\geq 1.$$
		For $N \geq 3$, the last inequality above is strict.
	\end{proposition}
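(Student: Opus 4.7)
The strategy is to combine the trace identity $\Tr(MM^{-1})=N$ with tensor-norm duality to obtain the first inequality, and then to invoke a Banach-space estimate on $\rho(\ell_\infty^N,\ell_\infty^N)$ for the second.

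For the first step, I would write
\[
N \;=\; \Tr(MM^{-1}) \;=\; \sum_{i,j=1}^N M_{ij}(M^{-1})_{ji} \;=\; \bigl\langle M,\,(M^{-1})^{T}\bigr\rangle,
\]
where the right-hand side is the canonical dual pairing between $\ell_1^N\otimes\ell_1^N$ and $\ell_\infty^N\otimes\ell_\infty^N$. The standard duality $(\ell_1^N\otimes_\pi\ell_1^N)^*=\ell_\infty^N\otimes_\epsilon\ell_\infty^N$, together with the invariance of the injective tensor norm under transposition, yields
\[
N \;\leq\; \|M\|_{\ell_1^N\otimes_\pi\ell_1^N}\cdot\|M^{-1}\|_{\ell_\infty^N\otimes_\epsilon\ell_\infty^N}.
\]
Since $\rho(\ell_\infty^N,\ell_\infty^N)$ is (by its definition earlier in the paper) the comparison constant between the projective and injective tensor norms on $\ell_\infty^N\otimes\ell_\infty^N$, and coincides with the same ratio on the predual $\ell_1^N\otimes\ell_1^N$ via the above duality, I would replace $\|M\|_{\ell_1^N\otimes_\pi\ell_1^N}$ by $\rho(\ell_\infty^N,\ell_\infty^N)\,\|M\|_{\ell_1^N\otimes_\epsilon\ell_1^N}$ and rearrange to obtain the first inequality.

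For the second inequality, it suffices to show $\rho(\ell_\infty^N,\ell_\infty^N) \leq \sqrt{2N}$. I would work in the dual formulation on $\ell_1^N\otimes\ell_1^N$, where this amounts to
\[
\sum_{i,j=1}^N|S_{ij}| \;\leq\; \sqrt{2N}\,\max_{\epsilon,\delta\in\{\pm 1\}^N}|\epsilon^{T}S\delta|
\]
for every real $N\times N$ matrix $S$. The Khintchine--Rademacher identity $\E_{\epsilon,\delta}|\epsilon^{T}S\delta|^2 = \|S\|_F^2$ (with $\|S\|_F$ the Frobenius norm) gives $\max_{\epsilon,\delta}|\epsilon^{T}S\delta|\geq\|S\|_F$, and the target bound then follows if one can establish $\sum_{i,j}|S_{ij}|\leq\sqrt{2N}\,\|S\|_F$, a refinement of the naive Cauchy--Schwarz estimate $\sum|S_{ij}|\leq N\|S\|_F$ that exploits the row/column symmetry of $S$. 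The third inequality $\sqrt{N/2}\geq 1$ is immediate, with equality iff $N=2$, which gives the claimed strictness for $N\geq 3$.

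The main obstacle is the sharp estimate $\rho(\ell_\infty^N,\ell_\infty^N) \leq \sqrt{2N}$: elementary arguments yield only the weaker bound $\rho \leq N$, so obtaining the sharper exponent requires a more refined Banach-space argument -- for instance, a little-Grothendieck-type factorization through $\ell_2^N$ combined with the Banach--Mazur distance $d(\ell_\infty^N,\ell_2^N)=\sqrt{N}$, with careful tracking of the universal constant $\sqrt{2}$. All remaining steps are standard tensor-norm manipulations.
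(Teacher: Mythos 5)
Your first step is essentially the paper's own argument: write $N=\Tr(MM^{-1})$ as a duality pairing between $M$ and $(M^{-1})^{T}$ and then trade a projective norm for an injective one at the cost of the factor $\rho$. The only (immaterial) difference is where the factor is inserted: the paper bounds $N\leq\|M\|_{\ell_1^N\otimes_\epsilon\ell_1^N}\|(M^{-1})^T\|_{\ell_\infty^N\otimes_\pi\ell_\infty^N}$ and then uses $\|\cdot\|_{\ell_\infty^N\otimes_\pi\ell_\infty^N}\leq\rho(\ell_\infty^N,\ell_\infty^N)\|\cdot\|_{\ell_\infty^N\otimes_\epsilon\ell_\infty^N}$, whereas you put the projective norm on the $\ell_1$ side and use $\rho(\ell_1^N,\ell_1^N)$; these agree because $\rho(\ell_1^N,\ell_1^N)=\rho(\ell_\infty^N,\ell_\infty^N)$, which is part of the result of \cite{aubrun2020universal} recalled just before the proposition. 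So the main chain of inequalities, and the trivial endgame ($\sqrt{N/2}\geq 1$ with strictness for $N\geq 3$), are correct.

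The genuine problem is in your sketch of the bound $\rho(\ell_\infty^N,\ell_\infty^N)\leq\sqrt{2N}$. The intermediate inequality $\sum_{i,j}|S_{ij}|\leq\sqrt{2N}\,\|S\|_F$ is false: for the all-ones matrix $J$ one has $\sum_{i,j}|J_{ij}|=N^2$ and $\|J\|_F=N$, so it would force $N\leq 2$. The Cauchy--Schwarz estimate $\sum_{i,j}|S_{ij}|\leq N\|S\|_F$ is saturated precisely by flat matrices, so no refinement of this form can exist; in particular the route $\sum|S_{ij}|\leq\sqrt{2N}\|S\|_F\leq\sqrt{2N}\max_{\epsilon,\delta}|\epsilon^TS\delta|$ cannot work, and any proof of the $\sqrt{2N}$ bound must exploit the quantity $\max_{\epsilon,\delta}|\epsilon^TS\delta|$ directly rather than factoring through the Frobenius norm (note that on $J$ itself the ratio $\sum|S_{ij}|/\max_{\epsilon,\delta}|\epsilon^TS\delta|$ equals $1$, so the loss occurs entirely in your first link). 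This gap is, however, confined to an auxiliary estimate that the paper does not prove either: it simply invokes \cite[Proposition 13]{aubrun2020universal}. If you do the same, the rest of your argument is complete and matches the paper's proof.
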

	
	The case $N=2$ needs to be treated separately. We show that for $N=2$ questions, the only games achieving equality are the CHSH game and variants thereof. We summarize this in the following theorem.
	
	\begin{theorem}
		The only invertible non-local game $M \in \mathcal M_N(\mathbb R)$ satisfying 
		$$\|M^{-1}\|_{\ell^N_{\infty}\otimes_{\epsilon}\ell^N_{\infty}}\|M\|_{\ell^N_1\otimes_{\epsilon}\ell^N_1} = 1$$
		have two questions ($N=2$) and are variants of the CHSH game: $M = a M_{\text{CHSH}}$ for some $a \neq 0$.
	\end{theorem}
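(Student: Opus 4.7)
The strategy is to first reduce to $N = 2$ using the preceding proposition, and then classify the matrices in $\mathcal M_2(\mathbb R)$ for which equality is achieved. Indeed, for $N \geq 3$ the previous proposition already yields $\|M^{-1}\|_{\ell^N_\infty \otimes_\epsilon \ell^N_\infty} \|M\|_{\ell^N_1 \otimes_\epsilon \ell^N_1} \geq \sqrt{N/2} > 1$, ruling out equality; so the only remaining case is $N = 2$.

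\emph{Explicit norms for $N=2$.} Writing $M = \begin{pmatrix} a & b \\ c & d \end{pmatrix}$ and setting $F := \max(|a|,|b|,|c|,|d|)$, extreme-point considerations in the unit balls of $\ell^2_1$ and $\ell^2_\infty$ yield
$$\|M^{-1}\|_{\ell^2_\infty \otimes_\epsilon \ell^2_\infty} = \max_{i,j}|(M^{-1})_{ij}| = F/|\det M|, \qquad \|M\|_{\ell^2_1 \otimes_\epsilon \ell^2_1} = \max_{x, y \in \{\pm 1\}^2} |\langle x, My\rangle|.$$
The rank-one sign tensor $u_{ij} := x_i y_j$ is constrained by $u_{11} u_{12} u_{21} u_{22} = 1$, so the second maximum collapses to $E := \max(|e_1|, |e_2|, |e_3|, |e_4|)$, where $(e_1, e_2, e_3, e_4) := (a+b+c+d,\ a+b-c-d,\ a-b+c-d,\ a-b-c+d)$ are the Walsh--Hadamard coordinates of $(a,b,c,d)$. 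A direct expansion also yields the identity $4 \det M = e_1 e_4 - e_2 e_3$.

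\emph{Forcing the CHSH shape.} The desired equality $EF = |\det M|$ now reads $4EF = |e_1 e_4 - e_2 e_3| \leq |e_1||e_4| + |e_2||e_3| \leq 2 E^2$, yielding $F \leq E/2$; equality throughout forces $|e_1|=|e_2|=|e_3|=|e_4|=E$ with $e_1 e_4$ and $e_2 e_3$ of opposite signs, i.e.\ $\prod_i \operatorname{sign}(e_i) = -1$. Separately, Parseval for the $4\times 4$ Hadamard matrix gives $a^2+b^2+c^2+d^2 = \frac{1}{4}(e_1^2+e_2^2+e_3^2+e_4^2) = E^2$; together with $|a|,|b|,|c|,|d| \leq F \leq E/2$ this pins down $|a|=|b|=|c|=|d|=E/2$, so $M/(E/2)$ has $\pm 1$ entries. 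Inverting the Hadamard transform translates the sign condition $\prod_i \operatorname{sign}(e_i) = -1$ into: $M/(E/2)$ has an odd number of $-1$ entries. These $\pm 1$ matrices form precisely the orbit of $M_{\text{CHSH}}$ under row/column permutations and sign flips, which leave both tensor norms invariant; hence, up to these symmetries, $M$ is a scalar multiple of $M_{\text{CHSH}}$.

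\emph{Main obstacle.} The decisive step is combining two different kinds of tightness. The triangle/product bound alone only gives $F \leq E/2$ without fixing any structural pattern; only after invoking the Parseval identity for the Walsh--Hadamard transform does the constraint rigidify enough to force all four absolute values $|a|,|b|,|c|,|d|$ to simultaneously saturate at $E/2$, after which a finite enumeration of $\pm 1$ sign patterns with an odd number of minus signs completes the classification.
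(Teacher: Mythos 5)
Your overall route is sound and is in fact a more elementary, coordinate-explicit version of the paper's argument: the paper proves a general lemma, $\|UXV^*\|_{\ell_\infty\otimes_\epsilon\ell_\infty}\|X\|_{\ell_\infty\otimes_\epsilon\ell_\infty}\geq\frac{1}{N}|\det X|^{2/N}$ with equality iff both matrices are scalar multiples of Hadamard matrices, and applies it with $U=V$ the Fourier matrix via the isometry $\ell_1^2\cong\ell_\infty^2$. Your Walsh--Hadamard coordinates $e_1,\dots,e_4$, the identity $4\det M=e_1e_4-e_2e_3$, and the Parseval step are exactly the ingredients of that lemma unpacked for $N=2$, and your explicit formulas for the two norms and the final identification of the $\pm1$ matrices with an odd number of $-1$ entries (the $2\times 2$ Hadamard matrices, i.e.\ the orbit of $M_{\text{CHSH}}$ under the symmetries of the game) are all correct and consistent with the paper's conclusion.

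There is, however, a genuine gap at the pivot of the argument. From $4EF=|e_1e_4-e_2e_3|\le|e_1||e_4|+|e_2||e_3|\le 2E^2$ you obtain only $F\le E/2$; you then assert ``equality throughout'', but nothing established so far forces $4EF=2E^2$ (the chain is a one-sided bound, and $F\ge E/2$ is false for general matrices, e.g.\ the all-ones matrix has $F=E/4$). Without saturation you cannot conclude $|e_1|=\dots=|e_4|=E$, and your subsequent use of Parseval presupposes exactly that. The fix uses only what you already have on the table: bound the same quantity the other way,
$$4EF=|e_1e_4-e_2e_3|\le\frac{e_1^2+e_4^2}{2}+\frac{e_2^2+e_3^2}{2}=2\left(a^2+b^2+c^2+d^2\right)\le 8F^2,$$
which gives $E\le 2F$. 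Combined with $F\le E/2$ this yields $F=E/2$, and only now are both chains forced to be equalities, giving $|e_i|=E$ for all $i$ with $e_1e_4$ and $e_2e_3$ of opposite signs (first chain) and $|a|=|b|=|c|=|d|=F=E/2$ (second chain). With this reordering the remainder of your classification goes through.
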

	
	\section{Compatibility of quantum measurements}\label{sec:compatibility}

	This section contains the main definitions and results from the theory of quantum measurements, with the focus on (in-)compatibility and noisy measurements. 
	
	In Quantum Mechanics, a system is described by Hilbert space $\mathcal H$. Here, we shall consider only finite-dimensional Hilbert spaces: $\mathcal H \cong \C^d$, for a positive integer $d$, which corresponds to the number of degrees of freedom. For example, quantum bits (qubits) are described by the space $\mathbb C^2$. 
	Quantum states are formalized mathematically by \emph{density matrices}:
	$$\M^{1,+}_d := \{ \rho \in \M_d \, : \, \rho \geq 0 \text{ and } \Tr \rho = 1\},$$
	where $\mathcal M_d$ is the vector space of $d \times d$ complex matrices. Density matrices are positive semidefinite, a relation denoted by $\rho \geq 0$. 
	
	Let us now discuss measurements in Quantum Mechanics. Historically, quantum measurements were modelled by \emph{observables}: Hermitian operators acting on the system Hilbert space. The possible outcomes of the measurement are the eigenvalues of the observable, while the probabilities of occurrence are given by the celebrated \emph{Born rule}.  This formalism not only allows to obtain the probabilities of the different outcomes (via the Born rule), but also the post-measurement state of the quantum system (the \emph{wave function collapse}). In the current research, we are only concerned with the former, and thus we shall use the more general framework of Positive Operator Valued Measures (POVMs) \cite{nielsen00}. We shall write $[n]:= \{1, 2, \ldots, n\}$ for the set of the first $n$ positive integers.

	\begin{definition}
		A \emph{positive operator valued measure} (POVM) on $\M_d$ with $k$ outcomes is a $k$-tuple $A=(A_1, \ldots, A_k)$ of self-adjoint operators from $\M_d$ which are positive semidefinite and sum up to the identity:
		$$\forall i \in [k], \quad A_i \geq 0 \qquad \text{ and } \qquad \sum_{i=1}^k A_i = I_d.$$
		When measuring a quantum state $\rho$ with the apparatus described by $A$, we obtain a random outcome from the set $[k]$:
		$$\forall i \in [k], \qquad \mathbb P(\text{outcome} = i) = \Tr[\rho A_i].$$
	\end{definition}
	
	The vector of outcome probabilities $\left(\Tr[\rho A_i]\right)_{i=1}^k$ is indeed a probability vector; note that the properties of the operators $A_i$, called \emph{quantum effects}, are tailor made for this. This mathematical formalism used to described quantum measurements (or POVMs, or \emph{meters}) does not account for what happens with the quantum particle after the measurement. One can think that the particle is destroyed in the process of measurement (see Figure \ref{fig:measurement}) and thus only the outcome probabilities are relevant. 
	
	\begin{figure}[htb!]
		\centering
		\includegraphics{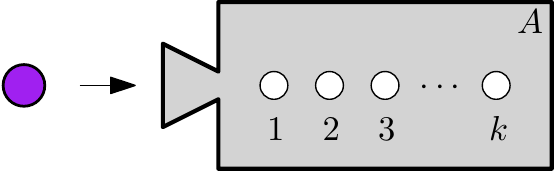}\qquad\qquad\qquad\qquad 
		\includegraphics{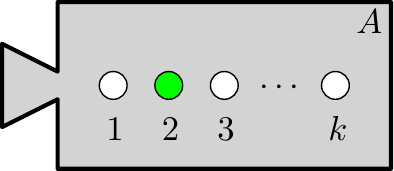}
		\caption{Diagrammatic representation of a quantum measurement apparatus. The device has an input canal and a set of $k$ LEDs which will turn on when the corresponding outcome is achieved. After the measurement is performed, the particle is destroyed, and the apparatus displays the classical outcome (here, $2$).}
		\label{fig:measurement}
	\end{figure}

	Several important classes of POVMs will be discussed in this paper: 
	\begin{itemize}
		\item \emph{von Neumann measurements}, where $A_i = \ketbra{a_i}{a_i}$, $i \in [d]$, for an orthonormal basis $\{\ket{a_i}\}_{i=1}^d$ of $\C^d$;
		\item \emph{trivial measurements}, where the matrices $A_i$ are scalar multiples of the identity: $A_i = p_i I_d$, for some probability vector $p = (p_1, p_2, \ldots, p_k)$. 
	\end{itemize}

	\bigskip
	
	Let us now define the notion of \emph{compatibility} for quantum measurements, which is central to this paper. Historically, in the physics literature, the notion of compatbility was closely related to that of commutativity of the quantum observables \cite{Heisenberg1927, Bohr1928}; indeed, sharp POVMs are compatible if and only if the corresponding observables commute. In the modern setting, suppose we want to measure two different physical quantities (modelled by two POVMs $A$ and $B$) on a given quantum particle in a state $\rho$. Having at our disposal just one copy of the particle, we cannot, in general, measure simultaneously $A$ and $B$. However, one can \emph{simulate} measuring $A$ and $B$ on $\rho$ with the help of a third POVM $C$, by \emph{classically} post-processing the output of $C$ to a pair of outcomes $(i,j)$ for $A$, respectively $B$, see Figure \ref{fig:compatibility}. Importantly, there are many pairs of POVMs $A$ and $B$ for which there is no such $C$, like the position and momentum operators of a particle in one dimension: it is impossible to attribute simultaneously an exact value to both position and momentum observables. 
	
	\begin{figure}[htb!]
		\centering
		\includegraphics[width=0.9\textwidth]{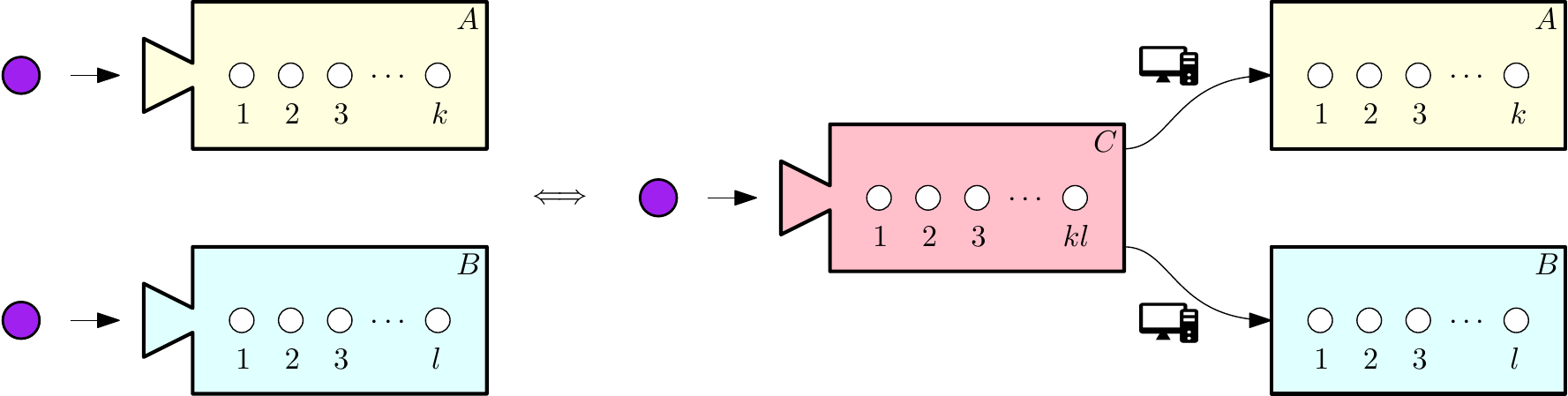}
		\caption{The joint measurement of $A$ and $B$ is simulated by by a third measurement $C$, followed by classical post-processing.}
		\label{fig:compatibility}
	\end{figure}
	
	Mathematically, we can either consider general post-processings or marginalization. We refer the reader to \cite{heinosaari2016invitation,heinosaari2022order} for more details. 
	
	\begin{definition}
		Two POVMs $A=(A_1, \ldots, A_k)$, $B=(B_1, \ldots, B_l)$ on $\M_d$ are called \emph{compatible} if there exists a \emph{joint POVM} $C=(C_{11}, \ldots, C_{kl})$ on $\M_d$ such that $A$ and $B$ are its respective \emph{marginals}:
		\begin{align*}
			\forall i \in [k], \qquad A_i &= \sum_{j=1}^l C_{ij}.\\
			\forall j \in [l], \qquad B_j &= \sum_{i=1}^k C_{ij}.
		\end{align*}
		
		More generally, a $g$-tuple of POVMs $\mathbf A = (A^{(1)}, \ldots, A^{(g)})$ is called compatible if there exists a POVM $C$ with outcome set $[k_1] \times \cdots \times [k_g]$ such that, for all $x \in [g]$, the POVM $A^{(x)}$ is the $x$-th marginal of $C$: 
		\begin{align*}
			\forall i_x \in [k_x], \qquad A^{(x)}_{i_x} &= \sum_{i_1 = 1}^{k_1} \cdots \sum_{i_{x-1} = 1}^{k_{x-1}}\sum_{i_{x+1} = 1}^{k_{x+1}} \cdots \sum_{i_g = 1}^{k_g} C_{i_1i_2 \cdots i_g}\\
			&= \sum_{\substack{\mathbf j \in [k_1] \times \cdots \times [k_g]\\ j_x = i_x}} C_{\mathbf j}.
		\end{align*}
	\end{definition}
	Note that the definition of compatibility given above can be formulated as a (feasibility) semi-definite program (SDP) \cite{boyd2004convex}. One can equivalently formulate the notion of compatibility with more general post-processings. 
	
	\begin{proposition}\label{prop:compatibility-postprocessing}
		An $N$-tuple of POVMs $\mathbf A = (A^{(1)}, \ldots, A^{(N)})$ is compatible if and only if there exists a joint POVM $(C_k)_{k \in [K]}$ and a family of conditional probabilities $\big(p_x(\cdot | \cdot)\big)_{x \in [N]}$ such that
		$$\forall x \in [N], \, \forall i \in [k_x], \qquad A^{(x)}_i = \sum_{k \in [K]} p_x(i | k) C_k.$$
	\end{proposition}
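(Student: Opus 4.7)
The plan is to prove the two implications separately. The forward direction, marginalization $\Rightarrow$ post-processing, is essentially free: given a joint POVM $C = (C_{\mathbf j})_{\mathbf j \in [k_1] \times \cdots \times [k_N]}$ whose $x$-th marginal is $A^{(x)}$, reindex the outcomes by a single label $k \in [K]$ with $K = k_1 k_2 \cdots k_N$ and take the deterministic post-processing $p_x(i \mid \mathbf j) := \delta_{i, j_x}$. Then $\sum_k p_x(i \mid k) C_k$ is exactly the marginal sum that defines $A^{(x)}_i$.

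For the nontrivial converse, I would construct the desired joint POVM by the \emph{product-of-post-processings} ansatz. Given $(C_k)_{k \in [K]}$ and a family of conditional probabilities $\big(p_x(\cdot \mid \cdot)\big)_{x \in [N]}$ with $A^{(x)}_i = \sum_k p_x(i \mid k) C_k$, define
$$D_{\mathbf i} := \sum_{k=1}^K \Bigg(\prod_{x=1}^N p_x(i_x \mid k)\Bigg) C_k \qquad \text{for } \mathbf i = (i_1, \ldots, i_N) \in [k_1] \times \cdots \times [k_N].$$
The claim is that $D = (D_{\mathbf i})_{\mathbf i}$ is a joint POVM with marginals $A^{(1)}, \ldots, A^{(N)}$.

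Three verifications are then needed. Positivity $D_{\mathbf i} \geq 0$ is immediate since the coefficients are nonnegative and $C_k \geq 0$. For normalization, exchange the two sums and use the factorization
$$\sum_{\mathbf i} \prod_{x=1}^N p_x(i_x \mid k) = \prod_{x=1}^N \Bigg(\sum_{i_x = 1}^{k_x} p_x(i_x \mid k)\Bigg) = 1,$$
so $\sum_{\mathbf i} D_{\mathbf i} = \sum_k C_k = I_d$. For the $x$-th marginal, fix $j \in [k_x]$ and sum over $\{\mathbf i : i_x = j\}$; the same factorization gives
$$\sum_{\mathbf i : i_x = j} D_{\mathbf i} = \sum_{k=1}^K p_x(j \mid k)\, C_k \cdot \prod_{y \neq x} \Bigg(\sum_{i_y} p_y(i_y \mid k)\Bigg) = \sum_{k=1}^K p_x(j \mid k) C_k = A^{(x)}_j,$$
as required.

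There is no serious obstacle here; the only conceptual point worth stressing is \emph{why} the product ansatz works. Conditionally on the hidden label $k$, each outcome $i_x$ is drawn independently according to its own post-processing $p_x(\cdot \mid k)$, which avoids introducing spurious correlations between the different $A^{(x)}$'s. Equivalently, one is replacing each stochastic map $p_x$ by its ``parallel composition'' into a single stochastic map from $[K]$ to $[k_1] \times \cdots \times [k_N]$, and then averaging $(C_k)$ against this combined post-processing.
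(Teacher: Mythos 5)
Your proof is correct: the deterministic post-processing $p_x(i\mid\mathbf j)=\delta_{i,j_x}$ handles the easy direction, and the product ansatz $D_{\mathbf i}=\sum_k \bigl(\prod_x p_x(i_x\mid k)\bigr)C_k$ with the factorization of the normalization sum is exactly the standard argument for the converse. The paper itself states this proposition without proof, deferring to the literature on measurement compatibility, so there is nothing to compare against; your write-up supplies the missing (routine but complete) argument, and all three verifications (positivity, normalization, marginals) check out.
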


	We now consider the simplest possible setting, that of two 2-outcome POVMs $\{Q,I-Q\}$ and $\{P,I-P\}$, where $P,Q$ are $d \times d$ self-adjoint matrices satisfying $O \leq P, Q \leq I_d$. The pair of POVMs is compatible if and only if $\epsilon_0 \leq 1$ \cite{wolf2009measurements}, where 
	\begin{align}\label{equ:SDP}
		\epsilon_0:=\inf\big\{\epsilon\, : \,  \exists  \delta\geq 0 \quad \text{s.t.} \quad
		\delta + I-Q-P\geq 0,\,Q+\epsilon I - \delta \geq 0,\,P +\epsilon I - \delta \geq 0\big\},
	\end{align}
	where $\delta$ is a positive semidefinite matrix. The above formula corresponds to the value of a semidefinite program encoding the existence of a joint measurement for the POVMs $\{P, I-P\}$ and $\{Q, I-Q\}$. Generally, every SDP comes with a dual formulation. In our case the dual SDP is given below \cite{wolf2009measurements}:
	\begin{lemma}\label{lem:dual-SDP}
		Given the above optimization problem for deciding compatibility, its dual formulation is given by:
		$$\epsilon^*=\underset{X,Y,Z\geq0}{\sup}\bigg\{\Tr[X(Q+P-I)]-\Tr[YQ]-\Tr[PZ]\, \text{with}\, X\leq Y+Z,\, \Tr[Y+Z]=1\bigg\},$$
	\end{lemma}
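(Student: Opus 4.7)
The plan is to derive the dual by standard Lagrangian duality for semidefinite programs. I would introduce positive semidefinite multipliers $X$, $Y$, $Z$ for the three matrix inequalities $\delta + I - Q - P \geq 0$, $Q + \epsilon I - \delta \geq 0$, $P + \epsilon I - \delta \geq 0$ in \eqref{equ:SDP}, while keeping $\delta \geq 0$ as a hard constraint on the primal variable. The Lagrangian reads
$$L(\epsilon, \delta; X, Y, Z) = \epsilon - \Tr[X(\delta + I - Q - P)] - \Tr[Y(Q + \epsilon I - \delta)] - \Tr[Z(P + \epsilon I - \delta)],$$
which, regrouped by the primal variables, becomes
$$L = \epsilon\,(1 - \Tr Y - \Tr Z) + \Tr[\delta(Y + Z - X)] + \Tr[X(Q + P - I)] - \Tr[YQ] - \Tr[ZP].$$

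Next I would compute the dual function $g(X,Y,Z) = \inf_{\epsilon \in \mathbb{R},\,\delta \geq 0} L$. Since $\epsilon$ ranges over all of $\mathbb{R}$, its coefficient must vanish, producing the scalar normalization $\Tr[Y+Z] = 1$; otherwise $g = -\infty$. For $\delta$, the minimization is over the PSD cone, whose self-duality with respect to the trace pairing is exactly what is needed: the infimum is finite iff $Y + Z - X \geq 0$, i.e.\ the operator inequality $X \leq Y+Z$ holds, in which case it is attained at $\delta = 0$ and equals zero. On the resulting feasible set one reads off
$$g(X,Y,Z) = \Tr[X(Q + P - I)] - \Tr[YQ] - \Tr[ZP],$$
which is precisely the objective appearing in the claimed $\epsilon^*$.

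It remains to justify strong duality. Weak duality $\epsilon^* \leq \epsilon_0$ is immediate by construction. To upgrade this to an equality I would invoke Slater's condition on the primal: take any strictly positive matrix $\delta_\star \succ 0$ (e.g.\ $\delta_\star = I_d$) together with any sufficiently large $\epsilon_\star$; then the three matrices $\delta_\star + I - Q - P$, $Q + \epsilon_\star I - \delta_\star$ and $P + \epsilon_\star I - \delta_\star$ are all strictly positive, so the primal admits a relative interior point. The standard SDP strong duality theorem (see e.g.\ \cite{boyd2004convex}) then yields $\epsilon_0 = \epsilon^*$. The only delicate point in the derivation is tracking the sign of the coefficient of $\delta$ correctly so as to obtain the inequality $X \leq Y+Z$ in the right direction, and verifying Slater feasibility; once the Lagrangian is set up, the calculation itself is routine.
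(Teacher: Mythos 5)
Your proposal is correct and follows essentially the same route as the paper: form the Lagrangian for the primal SDP \eqref{equ:SDP}, read off the dual constraints $\Tr[Y+Z]=1$ and $X\leq Y+Z$ from the vanishing/boundedness of the terms in $\epsilon$ and $\delta$, and invoke Slater's condition for strong duality (the paper introduces an explicit multiplier $C\geq 0$ for the constraint $\delta\geq 0$ and obtains $Y+Z-X=C$, whereas you keep $\delta\geq 0$ as a hard constraint and use self-duality of the PSD cone, which is an equivalent bookkeeping choice). The only nitpick is your Slater point: $\delta_\star = I_d$ gives $\delta_\star + I - Q - P = 2I - Q - P$, which is only guaranteed positive semidefinite (it vanishes on any common eigenvector of $Q$ and $P$ with eigenvalue $1$), so you should take e.g.\ $\delta_\star = 2I_d$ to get strict feasibility of all three constraints.
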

	\begin{proof}
		Let us consider the following Lagrangian, corresponding the primal SDP ~\eqref{equ:SDP}.
		$$\mathcal{L}:=\epsilon-\langle X, \delta +I-Q-P\rangle-\langle Y, \epsilon I+Q-\delta\rangle-\langle Z, \epsilon I +P-\delta\rangle -\langle C,\delta\rangle$$
		Above $X,Y,Z,C$ are positive semidefinite matrices which represent the constraints of the primal optimisation problem. Due to the strict feasibility of the SDP we can calculate its dual optimal value is the same as the optimal one of the primal (Slater's condition, see \cite{boyd2004convex}). Thus, we have the following equality: $$\inf_{\epsilon,\delta} \sup_{X,Y,Z,C} \mathcal{L}=\sup_{X,Y,Z,C}\inf_{\epsilon,\delta}\mathcal{L}.$$
		A simple calculation shows that
		$$\underset{\epsilon,\delta}{\text{inf}}\,\mathcal{L}=\langle X,Q+P-I\rangle-\langle Y,Q\rangle-\langle P,Z\rangle$$
		with $\Tr[Y+Z]=1$ and $Z+Y-X-C=0\iff X\leq Y+Z$, which is precisely the dual formulation from the statement.
	\end{proof}
	
	In the following, we shall use the SDP value  to describe the compatibility threshold of the POVMs, that is the minimal quantity of noise that one needs to mix in, in order to render the POVMs compatible; such quantities go in the literature under the name of \emph{robustness of incompatibility}  \cite{designolle2019incompatibility}.
	
	\begin{definition}
		For a given parameter $\eta\in[0,1]$, given two POVMs $A=(A_1, \ldots, A_k)$, $B=(B_1, \ldots, B_l)$ on $\M_d$ one define their \emph{noisy} version as $A^{\eta}:=(A^{\eta}_1, \ldots, A^{\eta}_k)$, $B^{\eta}:=(B^{\eta}_1, \ldots, B^{\eta}_l)$ with
		\begin{align*}
			\forall i \in [k], \qquad A^{\eta}_i := \eta A_i + (1-\eta)\frac{I}{k}.\\
			\forall j \in [l], \qquad B^{\eta}_j := \eta B_j + (1-\eta)\frac{I}{l}.
		\end{align*}
		
	\end{definition}
	\begin{remark}
		In our simplified setting, we shall only consider POVMs with two outcomes $\mathcal{P}=\{P,\,I-P\}$, $\mathcal{Q}=\{Q,\,I-Q\}$ and their noisy versions. The definition given above can be rewritten as follows: 
		$$\mathcal{P}^{\eta}=\{P^{\eta},I-P^{\eta}\},$$ and $$\mathcal{Q}^{\eta}=\{Q^{\eta},I-Q^{\eta}\}.$$ The measurements $\mathcal{P}^{\eta}$ and $\mathcal{Q}^{\eta}$ can also be seen as convex mixtures of the measurement  $\mathcal{P}$ and $\mathcal{Q}$ with the trivial POVM $\mathcal{I}=(\frac{I}{2},\frac{I}{2})$. One has  $\mathcal{Q}^{\eta}=\eta \mathcal{Q}+(1-\eta)\mathcal{I}$ and $\mathcal{P}^{\eta}=\eta \mathcal{P}+(1-\eta)\mathcal{I}$.
	\end{remark}
	
	Let us now formalize the incompatibility robustness, in the symmetric case, where the same amount of white noise $(I/2, I/2)$ is mixed into the two POVMs; for the asymmetric version, see the incompatibility regions defined in \cite[Section III]{bluhm2018joint}.
	
	\begin{definition}\label{def:Gamma}
		For two (binary) measurements $\mathcal{P}$, $\mathcal Q$, we define their \emph{noise compatibility threshold} as:
		$$
		\Gamma(P,Q) :=\sup\big\{ \eta \in [0,1]\, : \,  \mathcal{P}^{\eta} , \mathcal{Q}^{\eta} \text{ are compatible }\big\}.$$
		
	\end{definition}
	
	\begin{proposition}\label{prop: noise compatibility threshold}
		The noise compatibility threshold for two (binary) measurements $\mathcal{P}$ and $\mathcal{Q}$ is given by: $$\Gamma(P,Q)=\frac{1}{1+2\epsilon^*},$$ where $\epsilon^*$ is the optimal value of the SDP from Lemma \ref{lem:dual-SDP}.
	\end{proposition}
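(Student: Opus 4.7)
The plan is to reduce the compatibility question for the noisy POVMs $\mathcal{P}^\eta, \mathcal{Q}^\eta$ to the SDP \eqref{equ:SDP} via a direct substitution, and then solve for the critical $\eta$.

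First I would write down the feasibility condition for a joint POVM of $\mathcal{P}^\eta = \{P^\eta, I-P^\eta\}$ and $\mathcal{Q}^\eta = \{Q^\eta, I-Q^\eta\}$: by the same argument as in the derivation of \eqref{equ:SDP}, they are compatible if and only if there exists $\tilde\delta \geq 0$ with
\[
\tilde\delta + I - P^\eta - Q^\eta \geq 0, \qquad P^\eta - \tilde\delta \geq 0, \qquad Q^\eta - \tilde\delta \geq 0.
\]
Substituting $P^\eta = \eta P + \frac{1-\eta}{2}I$ and $Q^\eta = \eta Q + \frac{1-\eta}{2}I$, the first constraint becomes $\tilde\delta + \eta I - \eta(P+Q) \geq 0$, while the second and third read $\eta(P - \tilde\delta/\eta) + \frac{1-\eta}{2}I \geq 0$ and $\eta(Q - \tilde\delta/\eta) + \frac{1-\eta}{2}I \geq 0$, provided $\eta > 0$.

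Next I make the change of variable $\tilde\delta = \eta\delta$, which preserves $\tilde\delta \geq 0 \iff \delta \geq 0$. Dividing the three constraints by $\eta > 0$, they become exactly
\[
\delta + I - P - Q \geq 0, \qquad P + \epsilon I - \delta \geq 0, \qquad Q + \epsilon I - \delta \geq 0,
\]
with $\epsilon := \frac{1-\eta}{2\eta}$. Thus $\mathcal{P}^\eta, \mathcal{Q}^\eta$ are compatible if and only if the feasibility SDP \eqref{equ:SDP} admits a solution at level $\epsilon = \frac{1-\eta}{2\eta}$, i.e.\ if and only if $\epsilon_0 \leq \frac{1-\eta}{2\eta}$.

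Finally I solve this inequality in $\eta \in (0,1]$: it is equivalent to $\eta(1+2\epsilon_0) \leq 1$, so that
\[
\Gamma(P,Q) = \sup\Big\{\eta \in [0,1] \, : \, \eta \leq \tfrac{1}{1+2\epsilon_0}\Big\} = \frac{1}{1+2\epsilon_0}.
\]
The case $\eta = 0$ is trivially compatible (both POVMs become the trivial $\mathcal I$), and is consistent with the formula. The last step is to invoke strong duality from Lemma \ref{lem:dual-SDP}, which was justified there by Slater's condition, to replace the primal value $\epsilon_0$ by the dual value $\epsilon^*$, yielding the announced identity $\Gamma(P,Q) = \frac{1}{1+2\epsilon^*}$. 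The only subtle point is to make sure the substitution $\tilde\delta = \eta\delta$ is a genuine bijection between feasible sets (which it is for $\eta > 0$) and to treat $\eta = 0$ separately; the rest is algebraic manipulation.
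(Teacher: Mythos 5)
Your proof is correct and follows essentially the same route as the paper's: both write the joint-measurability feasibility conditions for $\mathcal{P}^\eta,\mathcal{Q}^\eta$, rescale the joint-effect variable by $\eta$, and reparametrize via $\epsilon=(1-\eta)/(2\eta)$ to land exactly on the SDP \eqref{equ:SDP}, finishing with strong duality to replace $\epsilon_0$ by $\epsilon^*$. Your explicit handling of the $\eta=0$ case and of the bijectivity of the substitution $\tilde\delta=\eta\delta$ is a small extra care that the paper omits, but it does not constitute a different argument.
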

	
	\begin{proof}
		Recalling that $\mathcal{P}^{\eta} , \mathcal{Q}^{\eta} \text{ are compatible }$
		is equivalent to $\exists\delta\geq 0$ with the following conditions:
		\begin{align*}
			\eta Q +(1-\eta)\frac{I}{2}-\delta&\geq0\\
			\eta P +(1-\eta)\frac{I}{2}-\delta&\geq0\\  \delta -\eta(P+Q-I)&\geq0
		\end{align*}
		Where it is easy to see that is equivalent to
		\begin{align*}
			Q +\epsilon I-\delta'&\geq0\\
			P +\epsilon I-\delta'&\geq0\\  
			\delta' -(P+Q-I)&\geq0
		\end{align*}
		with $\delta':=\frac{\delta}{\eta}$ and $\epsilon=\frac{1}{2}(\frac{1}{\eta}-1)\iff\eta=\frac{1}{2\epsilon + 1}$.
		By tacking the supremum over $\eta$ to compute the noise compatibility threshold $\Gamma(P,Q)$ with the following constraints:
		\begin{align*}
			Q +\epsilon I-\delta'&\geq0\\
			P +\epsilon I-\delta'&\geq0\\  
			\delta' -(P+Q-I)&\geq0
		\end{align*}
		is given by 
		\begin{align*}
			\Gamma(P,Q)&=\sup\Big\{\,\frac{1}{2\epsilon + 1}\Big|\,\exists \delta'\geq0\,
			,\,Q +\epsilon I-\delta'\geq0
			,\,P +\epsilon I-\delta'\geq0,\,             
			\delta' -(P+Q-I)\geq0\Big\}\\
			&=\frac{1}{2\inf\Big\{\,\epsilon\Big|\,P , Q \, \text{compatible}\Big\}+1}=\frac{1}{2\epsilon_0+1}=\frac{1}{2\epsilon^*+1}
		\end{align*}
		which ends the proof of the proposition.                 
		$$\Gamma(P,Q)=\sup\Big\{\, \eta\Big|\,\exists \delta\geq0
		,\, \eta Q +(1-\eta)\frac{I}{2}-\delta\geq0
		, \,\eta P +(1-\eta)\frac{I}{2}-\delta\geq0\\              
		,\,\delta -\eta(P+Q-I)\geq0\Big\}$$
		$$=\sup\Big\{\eta\Big|\exists\,\delta':=\frac{\delta}{\eta}\geq0, \,Q +(\frac{1}{\eta}-1)\frac{I}{2}-\delta'\geq0,\,P+(\frac{1}{\eta}-1)\frac{I}{2}-\delta'\geq0,\,             \delta' -(P+Q-I)\geq0\Big\}$$            
		
		With the following change of variable $$\epsilon=\frac{1}{2}(\frac{1}{\eta}-1) \iff \eta=\frac{1}{2\epsilon + 1}$$
		The compatibility threshold becomes
		\begin{align*}
			\Gamma(P,Q)&=\sup\Big\{\,\frac{1}{2\epsilon + 1}\Big|\,\exists \delta'\geq0\,
			,\,Q +\epsilon I-\delta'\geq0
			,\,P +\epsilon I-\delta'\geq0,\,             
			\delta' -(P+Q-I)\geq0\Big\}\\
			&=\frac{1}{2\inf\Big\{\,\epsilon\Big|\,P , Q \, \text{compatible}\Big\}+1}
		\end{align*}
		Thus, we have       $\Gamma(P,Q)=2/(2\epsilon^* +1)$, as announced. 
	\end{proof}
	\section{Tensor product of Banach spaces}\label{sec:Tensor Norms}
	In this section we wil give a brief overview of tensor norms with the aim of presenting Bell inequalities in the tensor norm framework. Tensor norms provide the natural mathematical framework for Bell inequalities, see the following survey \cite{palazuelos2016survey} and the reference therein. Let us start by recalling the projective and injective tensor norms for (finite-dimensional) Banach spaces. 
	
	\begin{definition}
		Given two finite-dimensional Banach spaces $X$ and $Y$ with their respective norms $\|\cdot\|_X$ and $\|\cdot\|_Y$, and $z\in X\otimes Y$, we define the \emph{projective tensor norm} of $z$ as:
		$$\|z\|_{X\otimes_{\pi} Y}:= \inf \left\{\sum_{i=1}^N\|x_i\|_X\|y_i\|_Y : z=\sum_{i=1}^N x_i\otimes y_i\right\},$$
		where the infimum is taken over all the decompositions of $z=\sum_{i=1}^N x_i\otimes y_i$ where N is a finite but arbitrary integer. 
		We write $X\otimes_{\pi}Y=(X\otimes Y,\|\cdot\|_{X\otimes_{\pi}Y})$, the Banach space induced by the projective tensor norm on $X\otimes Y$.  
	\end{definition}
	
	Every Banach space comes with a dual: 
	\begin{definition}
		Let $X$ a finite-dimensional Banach space. The space of all bounded linear functionals on $X$ is called its dual space and denoted by $X^*$. It comes equipped with a norm: 
		$$\forall \phi \in X^*, \qquad \|\phi\|_{X^*}:= \sup_{\|x\|_X\leq 1} |\phi(x)|.$$
	\end{definition}
	
	We now introduce the other tensor norm of importance to us. 
	
	\begin{definition}\label{def: injective norm}
		Given two finite-dimensional Banach spaces $X$ and $Y$ with their respective norms $\|\cdot\|_X$ and $\|\cdot\|_Y$, and $z\in X\otimes Y$, we define the \emph{injective tensor norm} of $z$ as:
		$$\|z\|_{X\otimes_{\epsilon}Y}:= \underset{\alpha \in \mathbb{B}(X^*), \beta\in \mathbb {B}(Y^*)}{\sup}|\langle z,\alpha\otimes \beta\rangle|,$$
		where $\mathbb{B}(X^*)$ and $\mathbb{B}(Y^*)$ are the unit balls of $X^*$ and $Y^*$. 
		
		We write $X\otimes_{\epsilon}Y=(X\otimes Y,\|\cdot\|_{X\otimes_{\epsilon}Y})$, the Banach space induced by the injective norm on $X\otimes Y$.
	\end{definition}
	
	It is known that the projective and the injective tensor product play the role of maximal and the minimal norm respectively that we can put naturally in the algebraic tensor product, for that we give the following definition of a reasonable crossnorm.  
	\begin{definition}\label{def: reasonnable norm}
		Let $z\in X\otimes Y$, we say that a norm $\alpha$ on $X\otimes Y$ given by $\|z\|_{X\otimes_{\alpha}Y}$ is a \emph{reasonable crossnorm} (or a \emph{tensor norm}) if for $z=x\otimes y$ we have:
		$$\|z\|_{X\otimes_{\alpha}Y}\leq \|x\|_X\|y\|_Y$$
		and the dual $\phi=\phi_1\otimes\phi_2\in X^*\otimes Y^*$ satisfies $$ \|\phi\|_{X^*\otimes_{\alpha}Y^*}\leq \|\phi_1\|_{X^*}\|\phi_2\|_{Y^*}.$$
		We write $X\otimes_{\alpha}Y=(X\otimes Y,\|\cdot\|_{X\otimes_{\alpha}Y})$, the Banach space induced by $\alpha$ on $X\otimes Y$.
	\end{definition}
	The definition above can be found in \cite[page 127]{ryan2002introduction}, with the following equivalent statement. 
	\begin{proposition}\cite[Proposition 6.1]{ryan2002introduction}\label{Prop: reasonnable norm}
		Consider two finite-dimensional Banach spaces $X$ and $Y$ with their respective norms $\|\cdot\|_X$ and $\|\cdot\|_Y$. A norm $\alpha$ on $X \otimes Y$ is a reasonable crossnorm if and only if for all $z\in X\otimes Y$, we have
		$$\|z\|_{X\otimes_{\epsilon} Y}\leq \|z\|_{X\otimes_\alpha Y}\leq \|z\|_{X\otimes_\pi Y}.$$
	\end{proposition}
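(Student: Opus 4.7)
The plan is to prove the two implications separately, in each case exploiting duality between tensor norms and the characterization of $\|\cdot\|_\pi$ and $\|\cdot\|_\epsilon$ on elementary tensors.

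For the direction ``reasonable crossnorm $\Rightarrow$ sandwich'', I would first establish the upper bound $\|z\|_\alpha \leq \|z\|_\pi$: for any decomposition $z = \sum_i x_i \otimes y_i$, the triangle inequality combined with condition (1) of Definition \ref{def: reasonnable norm} gives
\[
\|z\|_{X \otimes_\alpha Y} \leq \sum_i \|x_i \otimes y_i\|_{X \otimes_\alpha Y} \leq \sum_i \|x_i\|_X \|y_i\|_Y,
\]
and taking the infimum over all decompositions yields the claim. For the lower bound $\|z\|_\epsilon \leq \|z\|_\alpha$, I would use the duality bracket: for any $\phi_1$, $\phi_2$ in the unit balls of $X^*$ and $Y^*$,
\[
|\langle z, \phi_1 \otimes \phi_2 \rangle| \leq \|z\|_{X \otimes_\alpha Y} \, \|\phi_1 \otimes \phi_2\|_{(X \otimes_\alpha Y)^*} \leq \|z\|_{X \otimes_\alpha Y} \, \|\phi_1\|_{X^*} \|\phi_2\|_{Y^*} \leq \|z\|_{X \otimes_\alpha Y},
\]
where the middle inequality is condition (2) of Definition \ref{def: reasonnable norm}. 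Taking the supremum over $\phi_1, \phi_2$ in their respective unit balls delivers $\|z\|_{X \otimes_\epsilon Y} \leq \|z\|_{X \otimes_\alpha Y}$.

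For the converse, assume the sandwich inequalities. Condition (1) is immediate: for an elementary tensor $z = x \otimes y$, the trivial one-term decomposition gives $\|x \otimes y\|_{X \otimes_\pi Y} \leq \|x\|_X \|y\|_Y$, and the hypothesis $\|\cdot\|_\alpha \leq \|\cdot\|_\pi$ transfers this bound to $\alpha$. For condition (2), I would dualize the entire sandwich: the inequalities $\|\cdot\|_\epsilon \leq \|\cdot\|_\alpha \leq \|\cdot\|_\pi$ correspond to an inclusion of unit balls on $X \otimes Y$, which reverses upon passing to the dual space $(X \otimes Y)^* = X^* \otimes Y^*$, yielding $\|\cdot\|_{\pi^*} \leq \|\cdot\|_{\alpha^*} \leq \|\cdot\|_{\epsilon^*}$. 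I would then invoke the standard identification $(X \otimes_\epsilon Y)^* = X^* \otimes_\pi Y^*$, which holds in our finite-dimensional setting, so the upper bound becomes $\|\phi\|_{(X \otimes_\alpha Y)^*} \leq \|\phi\|_{X^* \otimes_\pi Y^*}$. Evaluated on $\phi = \phi_1 \otimes \phi_2$ using again the trivial decomposition for the projective norm gives exactly condition (2).

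The only nontrivial ingredient is the duality identity $(X \otimes_\epsilon Y)^* = X^* \otimes_\pi Y^*$, which is classical in the finite-dimensional setting and follows directly from the definitions; everything else is bookkeeping with the variational formulas for $\|\cdot\|_\pi$ and $\|\cdot\|_\epsilon$. Since both implications are short and mutually dual in flavor, I expect no real obstacles.
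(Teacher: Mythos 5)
The paper does not prove this statement at all: it is quoted verbatim from \cite[Proposition 6.1]{ryan2002introduction}, so there is no in-paper proof to compare against. Your argument is correct and is essentially the standard textbook proof of that cited result: the forward direction uses the triangle inequality with condition (1) for the upper bound and the dual-norm estimate from condition (2) for the lower bound, and the converse evaluates $\pi$ on the one-term decomposition for condition (1) and dualizes $\|\cdot\|_\epsilon \leq \|\cdot\|_\alpha$ together with the identification $(X\otimes_\epsilon Y)^* = X^*\otimes_\pi Y^*$ for condition (2). The only ingredient you flag as nontrivial, that duality identity, is indeed valid here (and is asserted in the paper's own remark following Definition \ref{def: reasonnable norm}); in the finite-dimensional setting it follows from the fact that the unit ball of the projective norm on $X^*\otimes Y^*$ is the convex hull of the elementary tensors $\phi_1\otimes\phi_2$ with $\|\phi_1\|_{X^*},\|\phi_2\|_{Y^*}\leq 1$, so no gap remains.
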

	\begin{remark}
		The injective and the projective tensor norm are dual to each other, in the following sense \cite{ryan2002introduction}: 
		
		\begin{align*}
			\|z\|_{X\otimes_{\pi}Y}:=\underset{\|\alpha\|_{X^*\otimes_{\epsilon}Y^*}\leq 1}{\sup}\langle\alpha,z\rangle,\\
			\|z\|_{X\otimes_{\epsilon}Y}:=\underset{\|\alpha\|_{X^*\otimes_{\pi}Y^*}\leq 1}{\sup}\langle\alpha,z\rangle.
		\end{align*}
	\end{remark}
	In general, for each tensor norm $\|\cdot\|_{X\otimes_{\alpha}Y}$ we can define its \emph{dual tensor norm} we note it by $\alpha^*$ and we have the following definition. 
	\begin{definition}\label{def: dual tensor norm}
		Consider two finite dimensional Banach spaces $X$ and $Y$ with their respective norms $\|\cdot\|_X$ and $\|\cdot\|_Y$. Let $z\in X\otimes Y$ and $v\in X^*\otimes Y^*$, the dual tensor norm $\alpha^*$ of a given tensor norm $\alpha$ is defined by $$\|z\|_{X\otimes_{\alpha^*}Y}:=\sup\{|\langle v,z\rangle|;\,\|v\|_{X^*\otimes_{\alpha}Y^*}\leq 1\}.$$
		We write $X\otimes_{\alpha^*}Y=(X\otimes Y,\|\cdot\|_{X\otimes_{\alpha^*}Y})$, the Banach space induced by the norm $\alpha^*$ on $X\otimes Y$.
	\end{definition}
	\begin{remark}
		With the definition above, we have the nice identification between the dual space of the tensor product of two spaces endowed with a tensor norm $\alpha$ and the dual space of each of the two spaces endowed with the dual norm $\alpha^*$ where we have $$(X\otimes_{\alpha}Y)^*=X^*\otimes_{\alpha^*}Y^*.$$ 
	\end{remark}
	
	One last definition we want to recall that will play a fundamental role for Bell inequalities which is the reasonable norm known as $\gamma_2$ norm.

	\begin{definition}
		Given two finite-dimensional Banach spaces $X$ and $Y$ with their respective norms $\|\cdot\|_X$ and $\|\cdot\|_Y$, define the tensor norm $\gamma_2$ of $z\in X\otimes Y$ by:
		$$\|z\|_{X\otimes_{\gamma_2}Y}:= \inf\left\{\underset{\alpha^*\in \mathbb{B}(X^*)}{\sup}\left(\sum_{i=1}^N|\alpha^*(x_i)|^2\right)^{\frac{1}{2}}\underset{\beta^*\in \mathbb{B}(Y^*)}{\sup}\left(\sum_{j=1}^N|\beta^*(y_j)|^2\right)^{\frac{1}{2}}
		:z=\sum_{i=1}^N x_i\otimes y_i\right\},$$
		where the infimum is taken over all decompositions of $z=\sum_{i=1}^N x_i\otimes y_i$ with $x_i\in X$ and $y_j\in Y$.
	\end{definition}
	
	\section{Bell inequalities and non-local games}\label{sec:non-locality}
	In this section we introduce the notion of \emph{non-local games} and show how it incorporates the non-local properties of the quantum world via the well-known Bell inequalities. We then recall how the natural framework for these games is the metric theory of tensor products of finite-dimensional Banach spaces. 
	
	The non-local aspect of quantum mechanics can be incorporated in the non-local game framework. In this paper, we only consider non-local games with two players, \emph{Alice} and \emph{Bob}, which are collaborating to win the game. During the game, a third person known as the \emph{Referee}, will ask a certain number of question to the players which are not allowed to communicate. The two protagonists reply cooperatively to the referee with some answers. The referee will decide to accept or reject the answers, declaring a win or a loss. Note that in this paper we are going to consider games with arbitrary number of questions $N$, but only with two answers ($+1$ or $-1$); in the general case, Alice and Bob can give answers from a fixed set of given cardinality. 
	
	During the game, players have access to a predetermined set of resources: this determines the type of strategy they are permitted to use.  In this work, we shall consider \emph{classical strategies} and \emph{quantum strategies}. 
	
	In classical strategies, the players share samples from a classical random variable that they can use to produce their answers locally. When using quantum strategies, the players share a bipartite quantum state, on which they can act locally with transformations and measurements. 
	
	We shall focus on \emph{correlation games}, where the payoff of the game depends on the correlation of the $\pm 1$ answers $ab$, weighted by real numbers $M_{xy}$ depending on the question: 
	$$\text{payoff} = \sum_{x,y \in [N]} \sum_{a,b \in \{\pm 1\}} M_{xy}\,ab \cdot \mathbb P(a,b|x,y),$$
	where $\mathbb P(a,b|x,y)$ is the (strategy-dependent) probability that Alice and Bob answer respectively $a$ and $b$, when presented with the questions $x,y \in [N]$. The matrix $M \in \mathcal M_N(\mathbb R)$ encodes the rules of the game, and it is called the \emph{Bell functional} \cite{palazuelos2016survey}. In the following, we discuss the optimal classical and quantum strategies for a given non-local game $M$.
	
	\begin{definition}
		The \emph{classical bias of the game} $M$ is defined as the optimisation problem 
		$$\beta(M):=\sup \Big|\sum_{x,y=1}^N\,\sum_{a,b\in\{\pm1\}}\, M_{xy}\, ab\, \mathbb P_c(a,b|x,y)\Big| $$
		where the supremum is taken over all \emph{classical strategies}
		$$\mathbb P_c(a,b|x,y)=\int_{\Lambda} \mathbb P_A(a|x,\lambda)\,\mathbb P_B(b|y,\lambda)\, \mathrm d\mu(\lambda).$$
		Above, $\mathbb P_A$, resp.~$\mathbb P_B$ correspond to Alice's, resp.~Bob's strategies, which can depend on the shared random variable $\lambda$ having distribution $\mu$.
	\end{definition}
	
	Introducing the expectation values with respect to the outputs $a,b$
	$$A_x(\lambda):=\sum_{a\in\{\pm 1\}} a\, \mathbb P_A(a|x,\lambda) \qquad \text{ and } \qquad  B_y(\lambda):=\sum_{b\in\{\pm1\}} b\, \mathbb P_B(b|y,\lambda),$$
	we have 
	\begin{align*}
		\beta(M)&=\sup_{\mathbb P_A, \mathbb P_B, \mu}\Big|\sum_{x,y=1}^N\,\sum_{a,b\in\{\pm1\}}\, M_{xy}\,a\,b\, \int_{\Lambda} \mathbb P_A(a|x,\lambda)\, \mathbb P_B(b|y,\lambda)\,\mathrm d\mu(\lambda)\Big|\\
		&=\sup_{A_x, 
			B_y, \mu}\Big|\sum_{x,y=1}^N\, M_{xy}\, \int_{\Lambda} A_x(\lambda)\,B_y(\lambda)\,d\mu(\lambda)\Big|\\
		&=\sup_{\gamma}\Big|\sum_{x,y=1}^N\, M_{xy}\,\gamma_{x,y}\Big|,
	\end{align*}
	where the matrix $\gamma=  (\gamma_{x,y})$ is a classical correlation matrix, containing the relevant information from the set of classical strategies.
	\begin{definition}
		We define the set of \emph{classical correlations} as 
		$$\mathbb L:=\left\{\gamma_{x,y}\,\Big|\,\gamma_{x,y}=\int_{\Lambda} A_x(\lambda) B_y(\lambda) \, \mathrm d\mu(\lambda); \, |A_x(\lambda)|,|B_y(\lambda)|\leq 1  \right\} \subseteq \mathcal{M}_N(\mathbb{R})$$
		where $\lambda$ is a random variable shared by Alice and Bob, following a probability distribution $\mu$.
	\end{definition}
	
	Using the definition above, the maximum payoff of a game $M$, using classical strategies, can be understood as the maximum overlap of the Bell functional $M$ defining the game with the set of classical correlations. 
	\begin{proposition}
		The classical bias of the game defined by a Bell functional $M$ is:
		$$\beta(M)=\sup_{\gamma \in \mathbb L}\Bigg \{\Big| \sum_{x,y=1}^N \,M_{xy}\,\gamma_{x,y}\Big| \Bigg\}.$$    
	\end{proposition}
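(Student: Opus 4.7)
The plan is to verify the proposition by carrying out carefully the informal computation already sketched in the excerpt, namely the conversion of a classical strategy $(\mathbb P_A, \mathbb P_B, \mu)$ into a classical correlation matrix $\gamma \in \mathbb L$ and back. First I would fix a classical strategy and rewrite the objective. Since $a,b \in \{\pm 1\}$, the factorization $\mathbb P_c(a,b|x,y) = \int \mathbb P_A(a|x,\lambda)\mathbb P_B(b|y,\lambda)\,\mathrm d\mu(\lambda)$ gives
\[
\sum_{a,b \in \{\pm 1\}} ab\,\mathbb P_c(a,b|x,y) = \int_\Lambda \Bigl(\sum_{a} a\,\mathbb P_A(a|x,\lambda)\Bigr)\Bigl(\sum_{b} b\,\mathbb P_B(b|y,\lambda)\Bigr)\,\mathrm d\mu(\lambda) = \int_\Lambda A_x(\lambda)B_y(\lambda)\,\mathrm d\mu(\lambda),
\]
with $A_x(\lambda), B_y(\lambda)$ as defined in the excerpt. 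Exchanging the finite sum over $(x,y)$ with the integral (Fubini is trivial here since the sum is finite) yields the intermediate expression $\bigl|\sum_{x,y} M_{xy}\int A_x(\lambda)B_y(\lambda)\,\mathrm d\mu(\lambda)\bigr|$, and taking the supremum over strategies establishes the inequality $\beta(M) \leq \sup_{\gamma \in \mathbb L}|\langle M,\gamma\rangle|$.

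For the reverse inequality, I would start from an arbitrary element $\gamma \in \mathbb L$, so that $\gamma_{x,y} = \int_\Lambda A_x(\lambda)B_y(\lambda)\,\mathrm d\mu(\lambda)$ for some measurable functions $A_x, B_y : \Lambda \to [-1,1]$. The key observation is that any such $[-1,1]$-valued function can be realized as the expectation of a $\{\pm 1\}$-valued conditional distribution by setting
\[
\mathbb P_A(a | x, \lambda) := \frac{1 + a\,A_x(\lambda)}{2}, \qquad \mathbb P_B(b | y, \lambda) := \frac{1 + b\,B_y(\lambda)}{2},
\]
which are genuine probability distributions on $\{\pm 1\}$ precisely because $|A_x(\lambda)|, |B_y(\lambda)| \leq 1$, and they satisfy $\sum_a a\,\mathbb P_A(a|x,\lambda) = A_x(\lambda)$ and analogously for $B$. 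Plugging back into the definition of $\beta(M)$ shows that every $\gamma \in \mathbb L$ is attained by a bona fide classical strategy, which yields the opposite inequality and hence equality.

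I do not anticipate a substantial obstacle; the only subtle point is the measurability/boundedness bookkeeping, namely checking that $|A_x(\lambda)| \leq 1$ for probability distributions on $\{\pm 1\}$ (immediate from convexity) and that the inverse construction indeed produces nonnegative probabilities (immediate from $|A_x(\lambda)| \leq 1$). No hidden compactness or limiting argument is needed since $\Lambda$ is merely an abstract probability space and the sums over $x,y$ and $a,b$ are finite, so the proof is essentially a bookkeeping exercise formalizing the displayed chain of equalities in the excerpt.
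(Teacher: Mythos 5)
Your proposal is correct and follows essentially the same route as the paper: the paper's ``proof'' is precisely the displayed chain of equalities preceding the proposition, which converts a classical strategy into a correlation matrix via the conditional expectations $A_x(\lambda), B_y(\lambda)$. The only addition you make is to spell out the reverse inclusion (that every $\gamma \in \mathbb L$ arises from a genuine strategy via $\mathbb P_A(a|x,\lambda) = \tfrac{1}{2}(1 + a\,A_x(\lambda))$), a step the paper leaves implicit; this is a worthwhile but minor tightening, not a different argument.
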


	\medskip
	
	We now move on to the quantum setting, where the players are allowed to use quantum strategies, that is they are allowed to perform local operations on a shared entangled state. 
	
	\begin{definition}
		The \emph{quantum bias of the game} $M$ is defined as the optimisation problem 
		$$\beta^*(M):=\sup \Big|\sum_{x,y=1}^N\,\sum_{a,b\in\{\pm1\}}\, M_{xy}\, a\,b\, \mathbb P_q(a,b|x,y)\Big| $$
		where the supremum is taken over all \emph{quantum strategies}
		$$\mathbb P_q(a,b|x,y)=\Tr\Big[\rho\, (A_{a|x}\otimes B_{b|y})\Big],$$ 
		where $\rho$ is a bipartite shared quantum state (of arbitrary dimension), and, for all questions $x,y$, $(A_{\pm|x})$, resp.~$(B_{\pm|y})$ are POVMs on Alice's, resp.~Bob's quantum system.
	\end{definition}
	
	Introducing the operators 
	$$A_x := \sum_{a \in \{\pm 1\}} a\,A_{a|x} \qquad \text{ and } \qquad B_y := \sum_{b \in \{\pm 1\}} b\,B_{b|y},$$
	and performing a similar computation as in the case of classical strategies, we are led to following definition and expression for the quantum bias of a non-local correlation game $M$. 
	
	\begin{definition}
		We define the set of \emph{quantum correlations} as
		$$\mathbb Q:=\left\{\gamma_{x,y}\,\Big|\,\gamma_{x,y}=\Tr\bigg[\rho \cdot (A_x\otimes B_y) \bigg];\, \|A_x\|_{\infty},\|B_y\|_{\infty}\leq 1\right\}\subseteq \mathcal{M}_N(\mathbb{R}).$$
		Above, $\rho$ is a bipartite quantum state of arbitrary dimension, and $A_x, B_y$ are observables of norm less than one. 
	\end{definition}
	
	\begin{proposition}
		The quantum bias of the game defined by a Bell functional $M$ is:
		$$\beta^*(M)=\sup_{\gamma \in \mathbb Q}\Bigg \{\Big| \sum_{x,y=1}^N \,M_{xy}\,\gamma_{x,y}\Big| \Bigg\}.$$    
	\end{proposition}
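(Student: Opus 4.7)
The plan is to prove this formula in exact parallel with the classical case treated just above, by reducing the supremum over quantum strategies $(\rho, (A_{a|x}), (B_{b|y}))$ to the supremum over correlation matrices $\gamma \in \mathbb{Q}$. The two key observations are that the inner $\pm 1$-weighted sum collapses to a single trace once one introduces the dichotomic observables $A_x$ and $B_y$, and that dichotomic two-outcome POVMs are in natural bijection with self-adjoint operators of operator norm at most one.

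First I would use linearity of the trace together with the definitions $A_x := A_{+|x} - A_{-|x}$ and $B_y := B_{+|y} - B_{-|y}$ to rewrite, for each fixed pair $x,y$,
$$\sum_{a,b\in\{\pm 1\}} a\, b\, \Tr\bigl[\rho\,(A_{a|x}\otimes B_{b|y})\bigr] \;=\; \Tr\bigl[\rho\,(A_x \otimes B_y)\bigr].$$
Weighting by $M_{xy}$ and summing over $x,y$ rewrites the definition of $\beta^*(M)$ as
$$\beta^*(M) \;=\; \sup_{\rho,\, (A_{\pm|x}),\, (B_{\pm|y})} \Big|\sum_{x,y=1}^N M_{xy}\, \Tr\bigl[\rho\,(A_x\otimes B_y)\bigr]\Big|.$$

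Second, I would establish the correspondence between two-outcome POVMs and contractive self-adjoint operators: given a POVM $(A_{+|x}, A_{-|x})$ with $A_{\pm|x}\geq 0$ and $A_{+|x}+A_{-|x}=I$, the operator $A_x = 2 A_{+|x} - I$ is self-adjoint with spectrum contained in $[-1,1]$, hence $\|A_x\|_\infty \leq 1$; conversely, any self-adjoint $A_x$ with $\|A_x\|_\infty\leq 1$ yields a valid two-outcome POVM via $A_{\pm|x} := (I\pm A_x)/2 \geq 0$. Thus the supremum over dichotomic POVMs on Alice's side can equivalently be taken over self-adjoint contractions $A_x$, and similarly for Bob.

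Combining the two steps and recognizing that the resulting quantities $\gamma_{x,y} := \Tr[\rho(A_x\otimes B_y)]$, ranging over bipartite states $\rho$ of arbitrary dimension and self-adjoint contractions $A_x, B_y$, are precisely the matrices populating the set $\mathbb Q$, I obtain the claimed identity. There is no real obstacle here: the argument is a definitional reformulation entirely parallel to the classical case, and the only small subtlety worth spelling out is the POVM/observable bijection of the second step.
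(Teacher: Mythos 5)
Your argument is correct and matches the paper's (implicit) proof: the paper likewise introduces the observables $A_x = \sum_a a\,A_{a|x}$ and $B_y = \sum_b b\,B_{b|y}$ and reduces to the correlation set $\mathbb Q$ by "a similar computation as in the case of classical strategies." Your explicit spelling-out of the bijection between two-outcome POVMs and self-adjoint contractions is a welcome but minor elaboration of the same route.
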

	
	\begin{remark}
		The correlation games discussed above are also known in the literature as \emph{XOR games}, when the set of outputs is $\{0,1\}$ (instead of $\{\pm 1\}$) \cite{regev2015quantum}.
	\end{remark}
	
	Since classical correlations are a subset of the quantum correlations (corresponding to diagonal operators $A_x, B_y$), the quantum bias of the game must be always larger or equal the classical bias. In some cases, the quantum bias $\beta^*(M)$ is strictly larger than the classical one, which can be understood physically as the existence of quantum correlations can not  be reproduced within a classical local hidden variable model. This motivates the following definition.
	
	\begin{definition}
		For a given non-local game described by a \emph{Bell functional} $M$, we say that we have a \emph{Bell violation} if $\beta^*(M)>\beta(M)$.
	\end{definition}
	
	Now we will recall the results on the profound link between the classical bias of a game and its quantum bias within their respective tensor norm description.
	
	\begin{theorem}\cite{palazuelos2016survey}
		Consider a non-local correlation game characterized by the matrix  $M \in \mathcal M_N(\mathbb R)$. 
		\begin{itemize}
			\item The classical bias of the game is equal to the injective tensor norm of $M$:
			$$\beta(M)=\|M\|_{\ell^N_1(\mathbb R)\otimes_{\epsilon}\ell^N_1(\mathbb R)}.$$
			\item The quantum bias of the game is equal to the $\gamma^*_2$ tensor norm of $M$:
			$$\beta^*(M)=\|M\|_{\ell^N_{1}(\mathbb{R})\otimes_{\gamma_2^*} \ell^N_{1}(\mathbb{R})}.$$
		\end{itemize}
		where we recall from the Definition \ref{def: dual tensor norm} that $$\|M\|_{\ell^N_{1}(\mathbb{R})\otimes_{\gamma_2^*} \ell^N_{1}(\mathbb{R})}:=\sup\Big\{|\langle v, M\rangle|\,;\, \|v\|_{\ell^N_{\infty}(\mathbb{R})\otimes_{\gamma_2} \ell^N_{\infty}(\mathbb{R})}\leq 1\Big\}.$$
	\end{theorem}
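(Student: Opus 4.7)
The statement contains two claims; I treat each part separately.

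\textbf{Classical bias.} The goal is the identity $\beta(M)=\|M\|_{\ell_1^N\otimes_\epsilon\ell_1^N}$. Starting from the third line of the computation preceding the theorem, we have $\beta(M)=\sup_{\gamma\in\mathbb L}|\sum_{x,y}M_{xy}\gamma_{xy}|$. The first move is to reduce from shared-randomness strategies to deterministic ones. For fixed $\lambda\in\Lambda$, the quantity $\sum_{x,y}M_{xy}A_x(\lambda)B_y(\lambda)$ is multilinear in the values $(A_x(\lambda))_x$ and $(B_y(\lambda))_y$, each constrained to $[-1,1]$; the supremum of a multilinear function over a product of intervals is attained at an extreme point, so one may assume $A_x(\lambda),B_y(\lambda)\in\{\pm1\}$ (more precisely, a Dirac measure $\mu$ achieves the supremum since the integral of a bounded function cannot exceed its supremum). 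This reduces $\beta(M)$ to $\sup_{|a_x|,|b_y|\leq1}|\sum_{x,y}M_{xy}a_xb_y|$. Since $(\ell_1^N)^*=\ell_\infty^N$ with unit ball $[-1,1]^N$, Definition~\ref{def: injective norm} gives exactly
\[
\|M\|_{\ell_1^N\otimes_\epsilon\ell_1^N}=\sup_{a\in\mathbb B(\ell_\infty^N),\,b\in\mathbb B(\ell_\infty^N)}|\langle M,a\otimes b\rangle|=\sup_{|a_x|,|b_y|\leq1}\Big|\sum_{x,y}M_{xy}a_xb_y\Big|,
\]
closing the loop.

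\textbf{Quantum bias.} The goal is $\beta^*(M)=\|M\|_{\ell_1^N\otimes_{\gamma_2^*}\ell_1^N}$. The first step is to invoke \emph{Tsirelson's theorem}: for any bipartite state $\rho$ and any self-adjoint contractions $A_x,B_y$ there exist a Hilbert space $H$ and unit-norm vectors $u_x,v_y\in H$ with $\operatorname{Tr}[\rho\,(A_x\otimes B_y)]=\langle u_x,v_y\rangle$, and conversely any such family of vectors can be realized by a quantum strategy (e.g.\ via Clifford/CAR representations of $\pm1$ observables on a maximally entangled state). This rewrites
\[
\beta^*(M)=\sup_{\|u_x\|,\|v_y\|\leq1}\Big|\sum_{x,y}M_{xy}\langle u_x,v_y\rangle\Big|.
\]

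Next, I interpret the right-hand side as the $\gamma_2^*$ norm. By Definition~\ref{def: dual tensor norm}, $\|M\|_{\ell_1^N\otimes_{\gamma_2^*}\ell_1^N}$ equals the supremum of $|\langle v,M\rangle|$ over matrices $v=(v_{xy})\in\ell_\infty^N\otimes\ell_\infty^N$ with $\|v\|_{\gamma_2}\leq1$. Given a decomposition $v_{xy}=\sum_i x_i(x)y_i(y)$, one reads off vectors $u_x:=(x_i(x))_i$ and $w_y:=(y_i(y))_i$ in $\ell_2$ such that $v_{xy}=\langle u_x,w_y\rangle$ and $\sup_x\|u_x\|_2\cdot\sup_y\|w_y\|_2$ coincides with the quantity inside the infimum in the definition of $\|v\|_{\gamma_2}$. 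Hence $\|v\|_{\gamma_2}\leq1$ is equivalent to the existence of a Hilbert-space factorisation $v_{xy}=\langle u_x,w_y\rangle$ with $\|u_x\|,\|w_y\|\leq1$. Substituting this characterisation matches Tsirelson's expression for $\beta^*(M)$ exactly.

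\textbf{Main obstacle.} The classical half is essentially a convexity reduction plus a dualisation of $\ell_1^N$, which is routine. The non-trivial input is Tsirelson's theorem: passing from joint expectations of tensor-product observables on a possibly infinite-dimensional entangled state to inner products of unit vectors in a Hilbert space is a structural result that relies on representing the $\pm1$ algebra via Pauli/Clifford generators and reading the Gram matrix off the shared state. Since this result is standard (and the theorem is cited from \cite{palazuelos2016survey}), in the write-up I would either quote it or sketch the Clifford construction, and then spend the remaining effort on the clean identification of the vector-valued supremum with the $\gamma_2^*$ tensor norm via the factorisation characterisation above.
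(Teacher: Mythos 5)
Your proof is correct: the convexity reduction to deterministic $\pm1$ strategies plus the identification $\mathbb{B}((\ell_1^N)^*)=[-1,1]^N$ handles the classical bias, and Tsirelson's theorem together with the factorisation characterisation of the $\gamma_2$ unit ball of $\ell_\infty^N\otimes\ell_\infty^N$ handles the quantum bias. The paper itself states this theorem as a citation to \cite{palazuelos2016survey} without reproducing a proof, and your argument is exactly the standard one from that reference, so there is nothing to compare beyond noting that your write-up fills in the omitted details faithfully.
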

	Tsirelson showed in \cite{tsirel1987quantum} the following theorem that links the classical and the quantum bias of an XOR game with  famous Grothendieck constant $K_{G}^{\R}$ that plays a fundamental role in the theory of tensor product of Banach spaces, see also \cite[Corollary 3.3]{palazuelos2016survey}.
	\begin{theorem}
		Consider a non-local correlation game characterized by the matrix  $M \in \mathcal M_N(\mathbb R)$.
		$$\beta^*(M)\leq K_{G}^{\R}\,\beta(M).$$
	\end{theorem}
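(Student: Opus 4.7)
The plan is to combine Tsirelson's vectorization of quantum correlations with the classical Grothendieck inequality. First I would rewrite the quantum bias in Hilbert-space form: given a shared state $|\psi\rangle \in \mathcal{H}_A \otimes \mathcal{H}_B$ and observables $A_x, B_y$ of operator norm at most one, define
\begin{equation*}
u_x := (A_x \otimes I)|\psi\rangle \in \mathcal{H}_A \otimes \mathcal{H}_B, \qquad v_y := (I \otimes B_y)|\psi\rangle \in \mathcal{H}_A \otimes \mathcal{H}_B.
\end{equation*}
Then $\|u_x\|, \|v_y\| \leq 1$ (since $A_x, B_y$ are contractions and $\||\psi\rangle\| = 1$) and the quantum correlations are recovered as inner products:
\begin{equation*}
\langle \psi | A_x \otimes B_y | \psi \rangle = \langle u_x, v_y \rangle.
\end{equation*}
Consequently $\beta^*(M) \leq \sup\{|\sum_{x,y} M_{xy}\langle u_x, v_y\rangle| : \|u_x\|, \|v_y\| \leq 1 \text{ in a real Hilbert space}\}$.

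Next I would invoke the (real) Grothendieck inequality in its bilinear form: for every real $N\times N$ matrix $M$,
\begin{equation*}
\sup_{\|u_x\|, \|v_y\|\leq 1}\Bigl|\sum_{x,y=1}^N M_{xy}\,\langle u_x, v_y\rangle\Bigr| \;\leq\; K_G^{\mathbb{R}}\,\sup_{|s_x|,|t_y|\leq 1}\Bigl|\sum_{x,y=1}^N M_{xy}\, s_x\, t_y\Bigr|.
\end{equation*}
The right-hand side is exactly the classical bias $\beta(M) = \|M\|_{\ell_1^N \otimes_\epsilon \ell_1^N}$, so chaining the two inequalities yields $\beta^*(M) \leq K_G^{\mathbb{R}}\,\beta(M)$.

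Equivalently, one may phrase the argument purely at the tensor-norm level: the desired inequality reads $\|M\|_{\ell_1^N \otimes_{\gamma_2^*} \ell_1^N} \leq K_G^{\mathbb{R}}\,\|M\|_{\ell_1^N \otimes_\epsilon \ell_1^N}$, which by trace duality (using $(\ell_1 \otimes_\alpha \ell_1)^* = \ell_\infty \otimes_{\alpha^*} \ell_\infty$) is equivalent to the statement that the formal identity map $\ell_\infty^N \otimes_\pi \ell_\infty^N \to \ell_\infty^N \otimes_{\gamma_2} \ell_\infty^N$ has norm at most $K_G^{\mathbb{R}}$; this is precisely the Banach-space formulation of Grothendieck's theorem.

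The main obstacle is philosophical rather than technical: everything reduces to Grothendieck's inequality, which is a deep classical result imported as a black box. The only genuine step belonging to this paper is the vectorization identity, which is a short computation — so the proof itself should be rather concise, essentially a one-line reduction to \cite{palazuelos2016survey} once the notation is aligned.
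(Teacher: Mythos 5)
Your argument is correct and is exactly the standard Tsirelson--Grothendieck proof that the paper imports by citation from Tsirelson and from \cite{palazuelos2016survey}; the paper states this theorem without proof, and your vectorization $u_x=(A_x\otimes I)\ket{\psi}$, $v_y=(I\otimes B_y)\ket{\psi}$ (whose inner products are real by self-adjointness, so the real Grothendieck inequality applies) is precisely the argument those references contain. One small slip in your closing ``equivalently'' remark: the dual formulation is that the identity $\ell_\infty^N \otimes_{\gamma_2} \ell_\infty^N \to \ell_\infty^N \otimes_{\pi} \ell_\infty^N$ has norm at most $K_G^{\mathbb R}$ --- you wrote the arrow in the opposite direction, which is trivially bounded by $1$ since $\gamma_2$ is a reasonable crossnorm --- but this does not affect your main argument.
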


	From the result above, one can easily see that Bell inequality violations ($\beta^*(M) > \beta(M)$) can be understood as tensor norm ratios. The result above shows the intrinsic link between Bell inequality violation and tensor norms, which motivates our framework on using tensor norms.
	
	\bigskip
	
	Let us now discuss the CHSH non-local game \cite{clauser1969proposed}. 
	
	\begin{definition}
		The CHSH game is given by the particular \emph{Bell functional} defined as the following:
		$$M_{\text{CHSH}} = \frac 1 2 \begin{bmatrix}
			1 & 1 \\ 
			1 & -1 \end{bmatrix},$$
		
	\end{definition}
	In this subsection, we recall the result of \cite{wolf2009measurements} where they made the link between the maximal violation of Bell inequality and the compatibility of the quantum measurement in the CHSH game.
	Precisely the maximal violation of the CHSH inequality is equivalent to the dual of formulation of the compatibility problem as an SDP \cite{wolf2009measurements}.
	
	\begin{theorem}
		Two dichotomic measurements $A=(A_0,A_1)$, $B=(B_0,B_1)$ are incompatible if and only if they enable violation of the CHSH inequality. More precisely, the optimal value of the CHSH inequality is related to 
		$$\underset{\psi,B_0,B_1}{\sup}\bra{\psi}\mathbb{B}\ket{\psi}=\frac{1}{\Gamma(A)}.$$
		with $$\mathbb{B}:=\sum_{x,y=0}^1 M_{\textrm{CHSH}}(x,y)A_x\otimes B_y=\frac{1}{2}(A_0\otimes B_0+A_0\otimes B_1+A_1\otimes B_0-A_1\otimes B_1).$$ and $\Gamma(A)$ is the noise compatibility threshold. 
	\end{theorem}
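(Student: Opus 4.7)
The plan is to rewrite both sides of the claimed identity $\sup_{\psi,B_0,B_1}\langle\psi|\mathbb B|\psi\rangle = 1/\Gamma(A)$ as the same supremum over density matrices, combining two ingredients already established in the paper: the closed-form expression for the $M$-Bell-locality tensor norm applied to the CHSH matrix, and Proposition~\ref{prop: noise compatibility threshold} relating $\Gamma(A)$ to the dual SDP of Lemma~\ref{lem:dual-SDP}.

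For the left-hand side, the formula $\|A\|_M=\lambda_{\max}[\sum_y|\sum_x M_{xy}A_x|]$ specialised to $M_{\mathrm{CHSH}}$ yields
\[
\sup_{\psi,B_0,B_1}\langle\psi|\mathbb B|\psi\rangle \;=\; \|A\|_{M_{\mathrm{CHSH}}} \;=\; \lambda_{\max}\!\left[\tfrac{1}{2}|A_0+A_1|+\tfrac{1}{2}|A_0-A_1|\right].
\]
Equivalently, writing $|\psi\rangle$ in Schmidt form and optimising over Bob's observables in closed form via the trace/operator norm duality, this sup also equals
\[
\sup_{\rho\text{ state}}\tfrac{1}{2}\bigl(\|\rho^{1/2}(A_0+A_1)\rho^{1/2}\|_1+\|\rho^{1/2}(A_0-A_1)\rho^{1/2}\|_1\bigr),
\]
with the optimal $\rho$ being Alice's reduced density matrix $\Tr_B|\psi\rangle\langle\psi|$. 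For the right-hand side, Proposition~\ref{prop: noise compatibility threshold} gives $1/\Gamma(A)=1+2\epsilon^*$, where $\epsilon^*$ is the value of the dual SDP from Lemma~\ref{lem:dual-SDP} evaluated at the effects $P=(I+A_0)/2$ and $Q=(I+A_1)/2$.

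The key step is to show $1+2\epsilon^*$ agrees with this common $\sup_\rho$. Substituting the effect values and using $\Tr[Y+Z]=1$, the dual SDP becomes
\[
1+2\epsilon^* \;=\; \sup\bigl\{\Tr[(X-Z)A_0]+\Tr[(X-Y)A_1] \,:\, X,Y,Z\geq 0,\; X\leq Y+Z,\; \Tr[Y+Z]=1\bigr\}.
\]
I would then make the change of variables $\rho:=Y+Z$ (a density matrix), $\tau:=Y-Z$ (self-adjoint with $-\rho\leq\tau\leq\rho$), keeping $X\in[0,\rho]$. Regrouping the objective in terms of $A_0+A_1$ and $A_0-A_1$ decouples the $X$-dependence from the $\tau$-dependence. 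Sequential optimisation via the standard identities $\sup_{0\leq X\leq\rho}\Tr[X\Omega]=\tfrac{1}{2}\|\rho^{1/2}\Omega\rho^{1/2}\|_1+\tfrac{1}{2}\Tr[\rho\Omega]$ and $\sup_{-\rho\leq\tau\leq\rho}\Tr[\tau\Omega]=\|\rho^{1/2}\Omega\rho^{1/2}\|_1$ yields, after cancellation of the two $\tfrac{1}{2}\Tr[\rho(A_0+A_1)]$ contributions, precisely the same supremum over $\rho$ obtained for the left-hand side.

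The step I expect to be the main obstacle is the SDP manipulation itself: hitting on the correct substitution $(Y,Z)\mapsto(\rho,\tau)$ that produces the clean decoupling and cancellation, and handling the Loewner constraints on $X$ and $\tau$ carefully when $\rho$ is singular (one restricts to the support of $\rho$ and takes limits). Once this is in place, both sides are manifestly equal, and the stated ``incompatible iff CHSH-violating'' equivalence follows by observing that $\|A\|_{M_{\mathrm{CHSH}}}>\beta(M_{\mathrm{CHSH}})=1$ is the same as $\Gamma(A)<1$.
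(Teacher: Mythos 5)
Your proposal is correct, but it takes a more self-contained route than the paper. The paper's own proof is essentially a citation: it invokes the identity $\sup_{\psi,B_0,B_1}\langle\psi|\mathbb{B}|\psi\rangle=1+2\epsilon^*$ as a result proved in \cite{wolf2009measurements} and then combines it with Proposition \ref{prop: noise compatibility threshold}, which gives $\Gamma(A)=1/(1+2\epsilon^*)$. You follow the same skeleton (both sides reduce to $1+2\epsilon^*$) but actually derive the key identity from the paper's own dual SDP of Lemma \ref{lem:dual-SDP}: substituting $P=(I+A_0)/2$, $Q=(I+A_1)/2$ and the normalisation $\Tr[Y+Z]=1$ correctly turns the dual objective into $\frac{1}{2}\Tr[(X-Z)A_0]+\frac{1}{2}\Tr[(X-Y)A_1]-\frac{1}{2}$, and your change of variables $\rho=Y+Z$, $\tau=Y-Z$ (with $0\leq X\leq\rho$, $-\rho\leq\tau\leq\rho$) together with the identities $\sup_{0\leq X\leq\rho}\Tr[X\Omega]=\frac{1}{2}\|\rho^{1/2}\Omega\rho^{1/2}\|_1+\frac{1}{2}\Tr[\rho\Omega]$ and $\sup_{-\rho\leq\tau\leq\rho}\Tr[\tau\Omega]=\|\rho^{1/2}\Omega\rho^{1/2}\|_1$ does produce the clean cancellation and yields $1+2\epsilon^*=\sup_\rho\frac{1}{2}\bigl(\|\rho^{1/2}(A_0+A_1)\rho^{1/2}\|_1+\|\rho^{1/2}(A_0-A_1)\rho^{1/2}\|_1\bigr)$, which matches the purified form of $\|A\|_{M_{\mathrm{CHSH}}}$. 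What your approach buys is a proof that does not outsource the central computation to \cite{wolf2009measurements}; what it costs is a dependence on the $\lambda_{\max}$ formula for $\|A\|_M$ (established only later, in Section \ref{sec:non-locality-norm}) and the extra care you rightly flag for singular $\rho$, where one must restrict $X$ and $\tau$ to the support of $\rho$ before parametrising $X=\rho^{1/2}S\rho^{1/2}$, $\tau=\rho^{1/2}T\rho^{1/2}$.
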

	\begin{proof}
		The proof of this theorem is basically the following remark, where in \cite{wolf2009measurements} they have showed that $$\underset{\psi,B_0,B_1}{\sup}\bra{\psi}\mathbb{B}\ket{\psi}=1+2\epsilon^*.$$ By combining this result and the proposition \ref{prop: noise compatibility threshold} where Alice measurement apparatus are described by the POVMs $\{Q,I-Q\}$ and $\{P,I-P\}$ which ends the proof.
	\end{proof}
	
	\section{The tensor norm associated to a game}\label{sec:non-locality-norm}
	In this section we will introduce the notion of non-locality using our framework of tensor norms. For that we consider as the previous section a fixed \emph{quantum game}, and for \emph{fixed Alice measurements} we introduce the \emph{$M$-Bell-(non)locality} notion. This quantity will characterise all the observed non-local effects in Alice side. To do so she will calculate the following tensor norm $\|A\|_M$ given by her fixed measurement apparatus. This quantity is obtained by optimizing over all the shared quantum state and all Bob measurement apparatus. We say that Alice measurement apparatus are \emph{$M$-Bell-local} if such $\|A\|_M$ is \emph{less than or equal the classical bias of the game}  and if is not we say that her measurement are \emph{$M$-Bell non-local}.\\
	
	The physical motivation of such statement can be understood as the following, for a fixed quantum game no matter the optimisation over all the shared quantum states and Bob measurement we cannot do better than the classical bias of the game, which means that one cannot do better than the classical setting even if we use the quantum strategies.\\
	For that we will give the precise definition of $\|A\|_M$ and the \emph{$M$-Bell-(non)locality} notion. We will show the main theorem of this section that $\|A\|_M$ is a \emph{tensor norm} in $(\mathbb{R}^N,\|\cdot\|_M)\otimes (\mathcal{M}_d,\|\cdot\|_{\infty})$ for a fixed invertible quantum game $M$.\\
	
	As a starting point, we give the two main definitions of this section. 
	\begin{definition}\label{def:A-M}
		Consider a fixed $N$-input, 2-outcome non-local game $M \in \mathcal M_N(\mathbb R)$.
		Fix also Alice's measurements, a $N$-tuple of binary observables $A = (A_1, \ldots, A_N) \in \mathcal M^{sa}_d(\mathbb C)^N$. The largest quantum bias of the game $M$, with Alice using the observable $A_x$ to answer question $x \in [N]$, is given by
		$$\sup_{\|\psi\| = 1} \sup_{\|B_y\| \leq 1} \Big \langle \psi \Big | \sum_{x,y = 1}^N\,M_{xy} \,A_x \otimes B_y \Big | \psi \Big \rangle = \underset{\|B_y\|\leq 1}{\sup}\lambda_{\max}\left[\sum_{x,y}^N\,M_{xy}\,A_x\otimes B_y\right]=: \|A\|_M,$$
		where the suprema are taken over bipartite pure states $\psi \in \mathbb C^d \otimes \mathbb C^D$ and over Bob's observables $B = (B_1, \ldots, B_N) \in \mathcal M^{sa}_D(\mathbb C)^N$, where $D$ is a free dimension parameter. We shall later show in Theorem \ref{thm:M-tensor-norm} that this quantity defines a (tensor) norm.
	\end{definition}
	
	\begin{remark}
		In the definition above, the dimension of Alice's measurements is fixed ($d$), while the dimension of Bob's Hilbert space ($D$) is free. In the following we will show that one can assume, without loss of generality, that Alice and Bob have Hilbert spaces of the same dimension ($D=d$ suffices in the optimization problem).
		
		Let us consider $D\geq d$, a quantum state $\ket \psi \in \mathbb C^d \otimes \C^D$, and $N$ binary measurement operators $B_1, \ldots, B_N \in \mathcal M^{sa}_D(\mathbb C)$. The idea is that the Schmidt decomposition of the bipartite pure quantum state $\ket \psi$ will induce a reduction of the effective dimension of Bob's Hilbert space from $D$ to $d$. We start from the Schmidt decomposition of $\ket \psi$
		$$\ket \psi=\sum_{i=1}^d \sqrt{\lambda_i}\ket{a_i}\otimes\ket{b_i}.$$
		Note that in the equation above, the number of terms is bounded by the smallest of the two dimensions, that is $d$. The orthonormal family $\{\ket{b_i}\}_{i \in [d]}$ spans a subspace of dimension $d$ inside $\mathbb C^D$. Consider an arbitrary orthonormal \emph{basis} $\{\ket{\tilde b_i}\}_{i \in [d]}$ of $\mathbb C^d$ and the isometry
		$$V : \mathbb C^d \to \mathbb C^D \quad \text{ such that } \quad  \forall i \in [d], \quad V\ket{\tilde b_i} = \ket{b_i}.$$
		Let us now introduce the quantum state
		$$\mathbb C^d \otimes \mathbb C^d \ni \ket{\tilde \psi}:= \sum_{i=1}^d \sqrt{\lambda_i}\ket{a_i}\otimes\ket{\tilde b_i}$$
		and the measurement operators 
		$$ \mathcal M^{sa}_d(\mathbb C) \ni \tilde B_y := V^* B_y V, \quad \forall y \in [N].$$
		The normalization of the state and the fact that the $\tilde B_y$ are contractions follow from the isometry property of the operator $V$. We now have 
		\begin{align*}
			\Big \langle \psi \Big | \sum_{x,y = 1}^N M_{xy} A_x \otimes B_y \Big | \psi \Big \rangle &= \sum_{x,y = 1}^N M_{xy}\sum_{i,j=1}^d\sqrt{\lambda_i\lambda_j} \langle a_i | A_x | a_j \rangle \underbrace{\langle b_i | B_y | b_j \rangle}_{=\langle \tilde b_i | V^* B_y V | \tilde b_j \rangle}\\
			&= \sum_{x,y = 1}^N M_{xy}\sum_{i,j=1}^d\sqrt{\lambda_i\lambda_j} \langle a_i | A_x | a_j \rangle \langle \tilde b_i | \tilde B_y | \tilde b_j \rangle\\
			&= \Big \langle \tilde\psi \Big | \sum_{x,y = 1}^N M_{xy} A_x \otimes \tilde B_y \Big | \tilde\psi \Big \rangle.
		\end{align*}
		The above computation shows that any correlation that can be obtained with Bob's Hilbert space of dimension $D$ can also be obtain with a Hilbert space of dimension $d$, equal to that of Alice. 
	\end{remark}
	\begin{definition}\label{def:Bell-local}
		Given a non-local game $M$, we say that Alice's measurements $A = (A_1, \ldots, A_N)$ are \emph{$M$-Bell-local} if for any choice of Bob's observables $B$ and for any shared state $\psi$, one cannot violate the Bell inequality corresponding to $M$: 
		$$\|A\|_M\leq \beta(M).$$ 
		If this is not the case, we call Alice's measurements \emph{$M$-Bell-non-local}.
	\end{definition}
	
	Instead of using definition \ref{def:A-M} we will use another simple equivalent formulation of $\|A\|_M$. To do so, we will consider $\|A\|_M$ as an optimization problem using an SDP, and we will give its equivalent formulation as a dual of the primal SDP.
	
	\begin{lemma}
		Given a quantum game $(M_{xy})_{\{x,y=1\}}^N$ we can characterise the following equivalent formulation of $\|A\|_M$ : 
		$$\|A\|_{M}=\lambda_{\max}\left[\sum_{y=1}^N \bigg| \sum_{x=1}^N M_{xy}\,A_x\bigg|\right].$$
	\end{lemma}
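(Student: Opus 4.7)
Let $C_y := \sum_{x=1}^N M_{xy} A_x$, a self-adjoint operator on $\C^d$. The plan is to prove the identity by matching upper and lower bounds.

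For the upper bound, I would observe that for any Hermitian $B_y$ with $\|B_y\|_\infty \leq 1$, the spectral decomposition $C_y = (C_y)_+ - (C_y)_-$ together with $I \pm B_y \geq 0$ yields the operator identity
$$\lvert C_y \rvert \otimes I - C_y \otimes B_y = (C_y)_+ \otimes (I - B_y) + (C_y)_- \otimes (I + B_y) \geq 0.$$
Summing over $y$ gives $\sum_y C_y \otimes B_y \leq \sum_y \lvert C_y \rvert \otimes I$, hence $\lambda_{\max}(\sum_y C_y \otimes B_y) \leq \lambda_{\max}(\sum_y \lvert C_y \rvert)$; taking the supremum over $B_y$ and $\psi$ produces the upper bound $\|A\|_M \leq \lambda_{\max}(\sum_y \lvert C_y \rvert)$.

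For the matching lower bound I would first dualise on Bob's side. For a fixed $|\psi\rangle$, set $\tau_y := \Tr_A[(C_y \otimes I)|\psi\rangle\langle\psi|]$, which is self-adjoint since $C_y$ is; the Hahn--Banach-type duality $\sup_{\|B\|_\infty \leq 1} \Tr[B \tau_y] = \|\tau_y\|_1$ (attained at $B_y = \operatorname{sign}(\tau_y)$) converts the Bob-side supremum into $\sum_y \|\tau_y\|_1$. Invoking the remark preceding the lemma to fix $D = d$ and parametrising $|\psi\rangle = (\sqrt{\rho_A} \otimes I)|\Omega\rangle$, with $|\Omega\rangle = \sum_j |jj\rangle$ the unnormalised maximally entangled vector and $\rho_A = \Tr_B|\psi\rangle\langle\psi|$ Alice's reduced density matrix, one obtains the reformulation $\|A\|_M = \sup_{\rho_A} \sum_y \|\sqrt{\rho_A}\, C_y\, \sqrt{\rho_A}\|_1$, the supremum running over density matrices $\rho_A$ on $\C^d$.

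To saturate, let $\lambda = \lambda_{\max}(\sum_y \lvert C_y \rvert)$, pick a unit top eigenvector $|\phi\rangle$ of $\sum_y \lvert C_y \rvert$, and use the polar decomposition $C_y = s_y \lvert C_y \rvert$ with $s_y = \operatorname{sign}(C_y)$ self-adjoint and $\|s_y\|_\infty \leq 1$. Choosing Bob's observables $B_y = s_y^T$ in a basis adapted to $|\phi\rangle$, together with a purification in $\C^d \otimes \C^d$ of a density matrix $\rho_A$ supported on the top eigenspace of $\sum_y \lvert C_y \rvert$, the expected value collapses to $\langle \phi | \sum_y \lvert C_y \rvert | \phi \rangle = \lambda$. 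The main obstacle will be saturating the dual estimate $\|\sqrt{\rho_A}\, C_y\, \sqrt{\rho_A}\|_1 = \Tr[\rho_A \lvert C_y \rvert]$ jointly in $y$, which requires a careful choice of $\rho_A$ compatible with the signs $s_y$ of all the $C_y$'s simultaneously; this can be arranged by exploiting degeneracies of the top eigenspace of $\sum_y \lvert C_y \rvert$ (as happens when the latter is proportional to the identity, e.g.\ for CHSH-type inequalities) or by first dilating Alice's POVMs to projective measurements via Naimark's theorem, where the signs $s_y$ become honest self-adjoint involutions.
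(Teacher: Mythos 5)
Your upper bound is correct, and in fact cleaner and more rigorous than the paper's argument: the operator inequality $\sum_y C_y\otimes B_y\le\sum_y|C_y|\otimes I$, proved via your identity with $(C_y)_\pm\otimes(I\mp B_y)\ge 0$, immediately gives $\|A\|_M\le\lambda_{\max}\big[\sum_y|C_y|\big]$. Your dualisation on Bob's side, leading to $\|A\|_M=\sup_{\rho}\sum_y\|\sqrt{\rho}\,C_y\sqrt{\rho}\|_1$, is also correct and is exactly the primal optimisation the paper sets up (via the same purification trick) before passing to an SDP dual.

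The obstacle you flag in the lower bound, however, is not something that ``can be arranged'': it is fatal, because the stated identity is false in general. Take $N=2$, $M=I_2$, $A_1=\mathrm{diag}(1,0)$, $A_2=\sigma_X$ (both legitimate binary observables), so that $C_y=A_y$. Then $\sum_y|C_y|=\mathrm{diag}(2,1)$ and the right-hand side equals $2$. On the other hand, writing $\rho=\tfrac12(I+\vec r\cdot\vec\sigma)$, one computes $\|\sqrt\rho\,\mathrm{diag}(1,0)\,\sqrt\rho\|_1=\tfrac{1+r_z}{2}$ and $\|\sqrt\rho\,\sigma_X\sqrt\rho\|_1=\sqrt{r_x^2+1-|\vec r|^2}$, so that $\|A\|_M=\sup_{\vec r}\big[\tfrac{1+r_z}{2}+\sqrt{1-r_y^2-r_z^2}\big]=\tfrac{1+\sqrt5}{2}<2$. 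The obstruction is precisely the one you identify: no single $\rho$ saturates $\|\sqrt\rho C_y\sqrt\rho\|_1=\Tr[\rho|C_y|]$ for all $y$ simultaneously, and neither of your proposed fixes works in general --- the degeneracy argument applies only when $\sum_y|C_y|$ is a scalar matrix (as for traceless qubit observables, which is the case in all of the paper's worked examples, including CHSH), while a Naimark dilation changes Alice's fixed observables and does nothing to make $\sqrt\rho$ commute with the signs $s_y$. For what it is worth, the paper's own proof fails at the corresponding point: in the dual SDP it asserts that $Z_y=2(A'_y)^+$ is ``the smallest'' feasible choice, but the set $\{Z\ge 0,\ Z\ge 2A'_y\}$ has no minimum in the positive semidefinite order (which is not a lattice), and in the example above the alternative $Z_2-A'_2=\mathrm{diag}(t,1/t)$ with $t=(\sqrt5-1)/2$ beats $|A'_2|=I$ and reproduces the correct value $(1+\sqrt5)/2$. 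So your argument is right exactly as far as it goes, and the point at which you honestly stop is the point at which the lemma itself breaks.
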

	
	\begin{proof}
		Remark that the definition above is equivalent to
		$$\|A\|_M=\sup_{\|\psi\| = 1}\quad\sup_{\|B_y\| \leq 1} \Big \langle \psi \Big | \sum_{x,y = 1}^N\,M_{xy}\,A_x \otimes B_y \Big | \psi \Big \rangle$$
		with $\ket{\psi}=\sqrt{\rho}\otimes I \sum_{i=1}^d\ket{ii}$ and $\rho$ is a density matrix (this is the classical purification trick in quantum information theory, expressing any bipartite pure state as a local perturbation of the maximally entangled state $\sum_i \ket{ii}$). Then, one has 
		$$\|A\|_M=\sup_{\rho\geq 0\,;\,\Tr\rho=1}\quad \sup_{\|B_y\| \leq 1}\bigg\{\sum_{x,y=1}^N\,M_{xy}\Tr\Big[\sqrt{\rho}
		\, A_x\, \sqrt{\rho}\, B_y^{\top}\Big]\bigg\}$$
		
		Using the following variable change $B_y^{\top}=2S_y-I$ for all $y$, we have
		
		$$\|A\|_M=\underset{S_y\,,\,\rho}{\sup}\bigg\{\sum_{y=1}^N\Tr\Big[\sqrt{\rho}\, A'_y\, \sqrt{\rho}\, (2S_y-I)\Big]\bigg\}$$
		where $A'_y=\sum_{x=1}^N M_{xy}\,A_x$ and the optimisation problem is on $S_y$ for all $y$ and $\rho$ with the following constraint : $0\leq S_y\leq I$ and $\rho\geq 0$ and $\Tr\rho=1$. 
		
		Then $$\|A\|_M=\underset{S_y\,,\,\rho}{\sup}\bigg\{\sum_{y=1}^N\Tr\Big[A'_y(2\,\sqrt{\rho}\,S_y\,\sqrt{\rho}-\rho)\Big]:\, 0\leq S_y\leq I\, ,\, \rho\geq0\,, \,\Tr\rho=1\bigg\}$$
		
		We consider now one last change of variable $S'_y=\sqrt{\rho}\,S_y\,\sqrt{\rho}$.
		
		$$\|A\|_M=\underset{S'_y\,,\,\rho}{\sup}\bigg\{\sum_{y=1}^N\Tr\Big[A'_y(2S'_y-\rho)\Big]:\,0\leq S'_y\leq \rho\,,\,\rho\geq 0\,,\,\Tr\rho=1\bigg\}$$
		
		where the optimisation problem now is on $S'_y$ for all $y$ and $\rho$ with the constrained above.\\
		
		We will formulate $\|A\|_M$ as an SDP and we will compute its dual. 
		
		For that, we consider the following Lagrangian :
		$$\mathcal{L}=\sum_{y=1}^N\, \Tr\Big[\,A'_y\,(2S'_y-\rho)\Big]\,+\,\langle X,\rho\rangle\, +\,\sum_{y=1}^N\, \langle X_y,S'_y\rangle\, +\, \epsilon(1-\Tr \rho)+\sum_{y=1}^N\, \langle \rho-S'_y,Z_y\rangle.$$
		with $X$, $X_y$, $\epsilon$, $Z_y$ are the constraints respectively for $\rho\geq0$, $S'_y\geq0$, $\Tr\rho=1$ and $S'_y\leq \rho$.\\
		Then by using the SDP duality one has
		$$\|A\|_M=\sup_{S'_y\,,\,\rho}\quad \inf_{X\,,\,X_y\,,\epsilon\,,Z_y} \quad \mathcal
		L=\inf_{X\,,\,X_y\,,\epsilon\,,Z_y} \quad  \sup_{S'_y\,,\,\rho} \quad \mathcal L$$
		
		with the following constraints:
		\begin{itemize}
			\item $X\geq 0$ and $\forall y$ $X_y\,,Z_y\geq 0$ are positive semidefinite matrices
			\item $\epsilon \in \mathbb R$ is unconstrained
		\end{itemize}
		Using the duality given above and by tacking the suprema first over $S'_y$ and $\rho$, we have:  $$\sup_{S'_y\,,\rho}\,\mathcal L=\begin{cases}\,\epsilon\,,\qquad\forall y\,,\, 2A'_y+X_y-Z_y=0\quad\text{and}\quad X-\sum_{y=1}^N A'_y-\epsilon I +\sum_{y=1}^N Z_y=0.\\
			\,+\infty
		\end{cases}$$ 
		
		Now by tacking the infimum over the constraints
		\begin{align*}
			\|A\|_M&=\underset{X\,,\,X_y\,,\,\epsilon\,,\,Z_y}{\inf}\bigg\{\epsilon\Big|\quad \forall y\,,\, 2A'_y+X_y-Z_y=0\,;\,
			X-\sum_{y=1}^N A'_y-\epsilon I +\sum_{y=1}^N Z_y=0\,\bigg\}\\
			&=\underset{X_y\,,\,\epsilon}{\inf}\bigg\{\epsilon\Big|
			\quad\forall y \,,\,2A'_y\leq Z_y\quad\text{and}\quad \sum_{y=1}^N A'_y+\epsilon I \geq \sum_{y=1}^N Z_y\,\bigg\}
		\end{align*}
		where in the last equality we have used that $X\geq 0$, $ Z_y\geq 0$. With the constraints on $Z_y$ and $Z_y\geq 2A'_y$, we can choose $Z_y:=2(A')_y^+$ with $(A')_y^+$ is the positive part of $A'_y=(A')_y^+ - (A')_y^-$; this is the smallest (with respect to the positive semidefinite order) choice for $Z_y.$ Using the optimal value above
		$$\|A\|_M=\underset{\epsilon}{\inf}\bigg\{\epsilon\Big|\,\epsilon I\geq \sum_{y=1}^N (A')_y^+ + (A')_y^-\bigg\}$$
		Then 
		$$\|A\|_{M}=\lambda_{\max}\left[\sum_{y=1}^N \bigg| \sum_{x=1}^N\,M_{xy}\,A_x\bigg|\right]$$
		
	\end{proof}
	In the following, we shall exploit the new formulation of $\|A\|_M$ and we will show in the lemma below that $\|\cdot\|_M$ is a norm for any \emph{invertible} game $M$.
	\begin{lemma}\label{lem:norm-M}
		Given an \emph{invertible} game $M$, the $M$-Bell-locality quantity $\|A\|_M$ verifies the following two properties:
		$$\|A\|_M\geq 0,$$ 
		$$\|A+A'\|_M\leq\|A\|_M+\|A'\|_M.$$
		In particular, $\|\cdot\|_M$ is a norm.
	\end{lemma}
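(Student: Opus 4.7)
The plan is to exploit both formulations of $\|A\|_M$ that have just been established: the variational one from Definition \ref{def:A-M} and the spectral one from the preceding lemma. Each norm axiom falls out of whichever formulation is better suited to it.

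For non-negativity I would argue from the spectral formula: every operator $|\sum_x M_{xy} A_x|$ is the absolute value of a self-adjoint matrix and is therefore positive semidefinite, so the sum over $y$ is PSD and its largest eigenvalue is non-negative. For definiteness, if $\|A\|_M = 0$ then this PSD operator has zero top eigenvalue and must itself vanish; as a sum of PSD matrices, each summand then vanishes, so $\sum_x M_{xy} A_x = 0$ for every $y$. This is precisely the statement that the $N$-tuple $(A_x)$ lies in the kernel of $M^\top$ acting on the first index, so invertibility of $M$ forces $A = 0$. This is the one and only place where the invertibility hypothesis is actually used, which explains why it appears in the statement.

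For the triangle inequality the variational formulation of Definition \ref{def:A-M} is more convenient. Fixing an arbitrary unit vector $\psi$ and any tuple $B=(B_y)$ of self-adjoint contractions on Bob's side, linearity of the expression in Alice's observables gives
$$\Big\langle\psi\,\Big|\,\sum_{xy}M_{xy}(A_x+A'_x)\otimes B_y\,\Big|\,\psi\Big\rangle = \Big\langle\psi\,\Big|\,\sum_{xy}M_{xy}A_x\otimes B_y\,\Big|\,\psi\Big\rangle + \Big\langle\psi\,\Big|\,\sum_{xy}M_{xy}A'_x\otimes B_y\,\Big|\,\psi\Big\rangle,$$
and each summand on the right is bounded above by $\|A\|_M$, respectively $\|A'\|_M$, since the same $\psi$ and $B$ are admissible in their defining suprema. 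Taking the supremum of the left-hand side over $\psi$ and $B$ then yields $\|A+A'\|_M \leq \|A\|_M + \|A'\|_M$.

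Absolute homogeneity $\|\lambda A\|_M = |\lambda|\,\|A\|_M$ is immediate from either formulation, since both the operator $|\cdot|$ and the supremum over unit contractions $B_y$ extract $|\lambda|$ from a scalar multiplier. Combined with the three properties above, this establishes that $\|\cdot\|_M$ is a genuine norm. I do not anticipate any serious obstacle in this proof; the only point requiring care is to observe that definiteness is exactly the step where invertibility of $M$ enters, so that without this hypothesis the quantity $\|\cdot\|_M$ would only be a seminorm on tuples of self-adjoint matrices.
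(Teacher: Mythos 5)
Your proof is correct and follows essentially the same route as the paper's: the spectral formula $\|A\|_M=\lambda_{\max}\big[\sum_y|\sum_x M_{xy}A_x|\big]$ for homogeneity and definiteness (with invertibility of $M$ entering exactly where you say it does), and the variational formulation for the triangle inequality via linearity and subadditivity of suprema. Your spelling out of why a vanishing top eigenvalue of a sum of PSD matrices forces each summand to vanish is slightly more explicit than the paper's, but the argument is the same.
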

	
	\begin{proof}
		To prove the first property we shall prove that :
		\begin{itemize}
			\item $\forall \alpha \in \mathbb{R}\, \text{we have}\, \|\alpha A\|_M=|\alpha|\|A\|_M.$
			\item $\|A\|_M=0 \implies A=0$
		\end{itemize}
		Obviously we have $$\|\alpha A\|_M=\lambda_{\max}\left[\sum_{y=1}^N \bigg| \sum_{x=1}^N \alpha M_{xy}\,A_x\bigg|\right]=|\alpha|\lambda_{\max}\left[\sum_{y=1}^N \bigg| \sum_{x=1}^N M_{xy}\,A_x\bigg|\right]=|\alpha|\|A\|_M$$
		For the second property we have $$\|A\|_M=\lambda_{\max}\left[\sum_{y=1}^N \bigg| \sum_{x=1}^N  \,M_{xy}\,A_x\bigg|\right]=0 \implies \sum_{x=1}^N  \, M_{xy}\, A_x=0\implies A=0$$
		where the first implication is due to the positivity of $|M^{\top}\,A|$ and the second implication is obtained by the assumption of the invertibility of $M$ and thus $M^{\top}$.\\
		For the second property we have
		\begin{align*}
			\|A+A'\|_M&=\sup_{\|\psi\| = 1}\quad\sup_{\|B_y\| \leq 1} \Big \langle \psi \Big | \sum_{x,y = 1}^N\,M_{xy}\,(A_x+A'_x) \otimes B_y \Big | \psi \Big \rangle\\
			&\leq\sup_{\|\psi\| = 1}\quad\sup_{\|B_y\| \leq 1} \Big \langle \psi \Big | \sum_{x,y = 1}^N\,M_{xy}\,A_x \otimes B_y \Big | \psi \Big \rangle+ \sup_{\|\psi\| = 1}\quad\sup_{\|B_y\| \leq 1} \Big \langle \psi \Big | \sum_{x,y = 1}^N\,M_{xy}\,A'_x \otimes B_y \Big | \psi \Big \rangle
		\end{align*}
		Hence we have $$\|A+A'\|_M\leq \|A\|_M+\|A'\|_M.$$
	\end{proof}
	
	In the last lemma we have shown that $\|A\|_M$ is a norm (for an \emph{invertible Bell functional} $M$). We shall call this norm the \emph{$M$-Bell-locality norm}. 
	
	The (real) vector spaces $\mathbb R^N$, resp.~$\mathcal{M}_d^{sa}(\mathbb C)$ shall be endowed with the $\|\cdot\|_M$, resp.~the operator norm (or the Schatten-$\infty$ norm, $\mathcal{S}_{\infty}$). Note that there is an abuse of notation here: we shall use $\|\cdot\|_M$ to denote norms on $\mathbb R^N$ and on $\mathbb R^N \otimes \mathcal{M}_d^{sa}(\mathbb C)$; the situation will be clear from the context. We shall now investigate the properties of the $\|\cdot\|_M$ norm with respect to this tensor product structure. We will consider that for given $N$-tuple of observables $(A_1, A_2, \ldots, A_N)$, we associate the tensor
	\begin{equation*}
		A := \sum_{x=1}^N e_x \otimes A_x \in \mathbb R^N \otimes \mathcal M_d^{sa}(\mathbb C).
	\end{equation*}
	
	\begin{definition}
		Given $p\in \mathbb{R}^N$, we define the following quantity:
		$$\|p\|_M:=\sum_{y=1}^N\bigg|\sum_{x=1}^N\,M_{xy}\,p_x\bigg| = \|M^\top p\|_1.$$
	\end{definition}
	In the lemma below we will show that $\|\cdot\|_M$ is a norm. 
	\begin{lemma}\label{M-norm}
		Given an invertible matrix $M$, the function $\mathbb{R}^N \ni p \mapsto \|p\|_M$ is a norm.
	\end{lemma}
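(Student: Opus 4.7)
The plan is to observe that $\|p\|_M = \|M^\top p\|_1$ is the pullback of the $\ell_1^N$ norm by the linear map $p \mapsto M^\top p$, so it inherits the three norm axioms from $\|\cdot\|_1$ provided this linear map is injective, which is guaranteed by invertibility of $M$. This is the standard ``pullback of a norm by an injection is a norm'' fact; no subtlety is expected.

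Concretely, I would verify the three axioms in order. Absolute homogeneity is immediate: for $\alpha \in \mathbb{R}$, linearity gives $M^\top(\alpha p) = \alpha M^\top p$, and then $\|\alpha p\|_M = \|\alpha M^\top p\|_1 = |\alpha|\, \|M^\top p\|_1 = |\alpha|\,\|p\|_M$. The triangle inequality is the same song: linearity of $M^\top$ yields $M^\top(p+q) = M^\top p + M^\top q$, and the triangle inequality for $\|\cdot\|_1$ gives $\|p+q\|_M \leq \|M^\top p\|_1 + \|M^\top q\|_1 = \|p\|_M + \|q\|_M$.

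For positive definiteness, non-negativity $\|p\|_M \geq 0$ follows directly from non-negativity of $\|\cdot\|_1$. For the separating property, if $\|p\|_M = 0$ then $\|M^\top p\|_1 = 0$, hence $M^\top p = 0$ since $\|\cdot\|_1$ is a norm; by invertibility of $M^\top$ (which is equivalent to invertibility of $M$) this forces $p = 0$. Conversely $p=0$ trivially gives $\|p\|_M=0$.

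There is no genuine obstacle here; the only hypothesis that gets used nontrivially is invertibility of $M$, and it is used at exactly one place, namely to conclude $p=0$ from $M^\top p = 0$. One could remark at the end that this argument in fact shows the same statement for any $\|\cdot\|_q$ in place of $\|\cdot\|_1$, and more generally that pulling back any norm on $\mathbb{R}^N$ by an invertible linear map yields a norm; we only need the $\ell_1$ version here because it is what appears in the definition of $\|A\|_M$.
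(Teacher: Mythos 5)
Your proof is correct and follows essentially the same route as the paper's: both verify homogeneity and the triangle inequality via linearity of $M^\top$ and the triangle inequality for $\ell_1$, and both use invertibility exactly once, to deduce $p=0$ from $M^\top p=0$. Your phrasing as ``pullback of the $\ell_1$ norm by an injective linear map'' is just a cleaner packaging of the same computation.
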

	\begin{proof}
		Obviously we have $\|\alpha\, p\|_M=|\alpha|\,\|p\|_M$ for all $\alpha\in \mathbb{R}$.\\
		Now we will show that $\|p\|_M=0\implies p=0$\\
		$$\|p\|_M=\sum_{y=1}^N\bigg|\sum_{x=1}^NM_{xy}\,p_x\bigg|=0\implies\sum_{x=1}^NM_{xy}\, p_x=0\iff (M^{\top}p)_y=0$$\\
		by using the assumption that $M$ is invertible we have necessarily $p=0$, which ends the proof of $\|p\|_M\geq0$.\\
		Now we prove the triangle inequality $\|p+p'\|_M\leq\|p\|_M+\|p'\|_M$.\\
		Let's consider $$\|p+p'\|_M=\sum_{y=1}^N\bigg|\sum_{x=1}^NM_{xy}\,(p_x+p'_x)\bigg|
		\leq\sum_{y=1}^N\bigg|\sum_{x=1}^NM_{xy}\,p_x\bigg|+\sum_{y=1}^N\bigg|\sum_{x=1}^N\,M_{xy}\,p'_x\bigg|=\|p\|_M+\|p'\|_M$$
		Thus we have shown that $\|\cdot\|_M$ is a norm.
	\end{proof}
	
	By the Lemma \ref{M-norm}, we endow $\R^N$ with the norm $\|\cdot\|_M$, obtaining a Banach space $(\R^N,\|\cdot\|_M)$. In the following, we shall investigate the dual space of $(\R^N,\|\cdot\|_M)$. For that we shall compute the dual norm of $\|\cdot\|_M$ denoted by $\|\cdot\|^*_M$. 
	\begin{proposition}
		The dual norm $\|\cdot\|^*_M$ is given by: 
		$$\forall p \in \mathbb R^N, \qquad \|p\|_M^*=\max_y\Big|\sum_{z=1}^N\,(M^{-1})_{yz}\,p_z\Big| = \|M^{-1}p\|_\infty.$$
	\end{proposition}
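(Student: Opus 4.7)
The plan is to reduce the computation of $\|\cdot\|_M^*$ to the familiar duality between $\ell_1^N$ and $\ell_\infty^N$ by a linear change of variables driven by $M^\top$. Concretely, I will start from the definition of the dual norm,
\[
\|p\|_M^* = \sup\{|\langle p,q\rangle| : q \in \mathbb{R}^N,\ \|q\|_M \leq 1\},
\]
and rewrite the constraint using the identity $\|q\|_M = \|M^\top q\|_1$ already observed just before Lemma \ref{M-norm}.

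Next, since $M$ (and hence $M^\top$) is invertible by assumption, the map $q \mapsto u := M^\top q$ is a bijection of $\mathbb{R}^N$ onto itself, so I can re-parametrize the supremum over $u$ instead of $q$. Under this change of variables one has $q = (M^\top)^{-1} u = (M^{-1})^\top u$, and therefore
\[
\langle p,q\rangle = \langle p,(M^{-1})^\top u\rangle = \langle M^{-1}p,u\rangle,
\]
while the constraint $\|q\|_M \leq 1$ becomes simply $\|u\|_1 \leq 1$. Substituting, I obtain
\[
\|p\|_M^* = \sup\{|\langle M^{-1}p,u\rangle| : u \in \mathbb{R}^N,\ \|u\|_1 \leq 1\}.
\]

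The last step is the standard duality $(\ell_1^N)^* = \ell_\infty^N$, which gives the right-hand side equal to $\|M^{-1}p\|_\infty = \max_y |\sum_{z=1}^N (M^{-1})_{yz}\,p_z|$, as claimed. No step here is delicate: the only point to be careful about is that invertibility of $M$ is used exactly to ensure the change of variable is a bijection (so that one does not implicitly restrict $u$ to the range of $M^\top$); the rest is bookkeeping.
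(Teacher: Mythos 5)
Your proof is correct. It takes a cleaner route than the paper's: you observe that $\|q\|_M=\|M^\top q\|_1$, so the unit ball of $\|\cdot\|_M$ is exactly the image of the $\ell_1^N$ unit ball under $(M^\top)^{-1}$, and then the whole computation collapses to the standard duality $(\ell_1^N)^*=\ell_\infty^N$ after the adjoint identity $\langle p,(M^{-1})^\top u\rangle=\langle M^{-1}p,u\rangle$. The paper instead proves the two inequalities separately by hand: for the upper bound it inserts $I=MM^{-1}$ inside $\langle p,q\rangle$ and applies H\"older-type estimates to get $|\langle p,q\rangle|\le\|M^{-1}p\|_\infty\,\|q\|_M$, and for the lower bound it exhibits the explicit extremizer $q_z=\pm(M^{-1})_{y_0z}$ (where $y_0$ attains the max) and checks directly that $\|q\|_M=\sum_y|(M^{-1}M)_{y_0y}|=1$. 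The two arguments encode the same underlying fact, but yours buys economy and conceptual clarity (the extremizer is produced automatically by the known $\ell_1/\ell_\infty$ duality), while the paper's buys an explicit optimal dual vector, which can be convenient downstream. Your remark that invertibility is needed precisely so the change of variables is a surjection of $\mathbb{R}^N$ is the right point of care; with that noted, the argument is complete.
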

	\begin{proof}
		Let $q,p\in\R^N$ we have 
		\begin{align*}
			|\langle p,q\rangle|=\Big|\sum_{x=1}^N\,p_x\,q_x\Big|&=\Big|\sum_{x,y,z=1}^N q_x\,M_{xy}\,M^{-1}_{yz}p_z\Big|=\Big|\sum_{y=1}^N\Big(\sum_{x=1}^N\,M_{xy}\,q_y\Big)\Big(\sum_{z=1}^N\,M^{-1}_{yz}p_z\Big)\Big|\\
			&\leq\sum_{y=1}^N\Big|\sum_{x=1}^N\,M_{xy}\,q_x\Big|\Big|\sum_{z=1}^N\,M^{-1}_{yz}\,p_z\Big|\leq \Big(\max_y\Big|\sum_{z=1}^N\,M^{-1}_{yz}\,p_z\Big|\Big)\,\sum_{y=1}^N\Big|\sum_{x=1}^N\,M_{xy}q_x\Big|\\
			&=\max_y\Big|\sum_{z=1}^N\,M^{-1}_{yz}\,p_z\Big|\,\|q\|_M.
		\end{align*}
		where we have used in the second equality that $M\cdot M^{-1}=I$. By taking the supremum over $\|q\|_M \leq 1$, we have shown that $\|p\|_M^* \leq \|M^{-1}p\|_\infty$. To show the converse inequality, note that
		$$\max_y\Big|\sum_{z=1}^N\,M^{-1}_{yz}\,p_z\Big| = \langle p, q\rangle, \qquad \text{for} \qquad q_z = \epsilon (M^{-1})_{y_0z}$$
		for some $y_0 \in [d]$ achieving the maximum, and $\epsilon = \pm 1$. In order to conclude, we have to establish that $\|q\|_M \leq 1$. Indeed, we have 
		$$\|q\|_M = \sum_y \Big| \sum_x M_{xy} \epsilon (M^{-1})_{y_0x}\Big| = \sum_y \big| (M^{-1}M)_{y_0y} \big| = 1.$$
	\end{proof}
	
	The Banach space $(\R^N,\|\cdot\|^*_M)$ is the dual of $(\R^N,\|\cdot\|_M)$: $(\R^N,\|\cdot\|_M)^*=(\R^N,\|\cdot\|^*_M)$. Now, we are ready to show the main theorem of this section, that the norm $\R^N\otimes\M_d^{sa}(\C)\ni A\to\|A\|_M$ is a tensor norm (or a reasonable crossnorm) in the sense of Definition \ref{def: reasonnable norm}. To this end, using Proposition \ref{Prop: reasonnable norm}, it suffices to show that:
	$$\|A\|_{\R^N\otimes_{\epsilon}\M^{sa}_d(\C)}\leq\|A\|_M\leq\|A\|_{\R^N\otimes_{\pi}\M^{sa}_d(\C)}$$
	where $\R^N$ is endowed with the norm $\|\cdot\|_M$ and $\M_d^{sa}$ with the norm $\|\cdot\|_{\infty}$.
	Before we show that $\|A\|_M$ is a tensor norm, we shall show the following proposition for tensors of rank one $A=p\otimes B\in\R^N\otimes\M_d^{sa}$.
	\begin{proposition}\label{prop:tensor-norm-M}
		Given $A\in\mathbb R^N \otimes \mathcal M_d^{sa}(\mathbb C)$ with $\mathbb{R}^N$ and $M_d^{sa}(\mathbb C)$ are endowed with $\|\cdot\|_{M}$ and the natural operator norm respectively. Given the particular decomposition $A=p\otimes B$ with $p\in(\mathbb{R}^N,\|\cdot\|_M)$ and $B\in(\mathcal M_d^{sa}(\mathbb C),\|\cdot\|_{\infty})$, one has $$\|p\otimes B\|_M=\|p\|_M\|B\|_{\infty}.$$
	\end{proposition}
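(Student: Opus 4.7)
The plan is to use the explicit spectral formula for $\|A\|_M$ established earlier, namely
$$\|A\|_M = \lambda_{\max}\Bigg[\sum_{y=1}^N \Big|\sum_{x=1}^N M_{xy} A_x\Big|\Bigg],$$
and specialise it to the rank-one tensor $A = p \otimes B$, for which $A_x = p_x B$.

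First, I would pull the scalar coefficients through the inner sum to obtain
$$\sum_{x=1}^N M_{xy} A_x = \Big(\sum_{x=1}^N M_{xy} p_x\Big) B = (M^\top p)_y \, B.$$
Since $B$ is self-adjoint and $(M^\top p)_y$ is a real scalar, the absolute value (in the sense of positive/negative part decomposition of self-adjoint operators) factorises: $\bigl|(M^\top p)_y B\bigr| = |(M^\top p)_y| \cdot |B|$, where $|B| = \sqrt{B^2}$. Summing over $y$ then gives
$$\sum_{y=1}^N \Big|\sum_{x=1}^N M_{xy} A_x\Big| = \Big(\sum_{y=1}^N |(M^\top p)_y|\Big) \, |B| = \|M^\top p\|_1 \, |B| = \|p\|_M \, |B|,$$
using the definition $\|p\|_M = \|M^\top p\|_1$ from Lemma \ref{M-norm}.

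Finally, I would take the largest eigenvalue. Since $\|p\|_M \geq 0$ is a scalar and $|B| \geq 0$, we obtain
$$\|p \otimes B\|_M = \lambda_{\max}\bigl(\|p\|_M \, |B|\bigr) = \|p\|_M \cdot \lambda_{\max}(|B|) = \|p\|_M \cdot \|B\|_\infty,$$
using the elementary fact that the operator norm of a self-adjoint matrix equals the largest eigenvalue of its absolute value. There is no serious obstacle here: the only point requiring a brief justification is the factorisation $|cB| = |c|\cdot|B|$ for $c \in \mathbb R$, which is immediate from the functional calculus for the self-adjoint operator $B$.
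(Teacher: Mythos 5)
Your proof is correct and follows essentially the same route as the paper's: both specialise the spectral formula $\|A\|_M=\lambda_{\max}\bigl[\sum_{y}\bigl|\sum_{x}M_{xy}A_x\bigr|\bigr]$ to $A_x=p_xB$, factor out the scalars via $|cB|=|c|\,|B|$, and conclude with $\lambda_{\max}(|B|)=\|B\|_\infty$. Your write-up is simply a more detailed version of the paper's one-line computation.
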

	\begin{proof}
		Given $A=p\otimes B$ one has
		$$\|p\otimes B\|_M=\lambda_{\max}\bigg[\sum_{y=1}^N\bigg|\sum_{x=1}^NM_{xy}\,p_x B\bigg|\bigg]=\lambda_{\max}[|B|]\sum_{y=1}^N\bigg|\sum_{x=1}^NM_{xy}\,p_x\bigg|=\|B\|_{\infty}\|p\|_M.$$
		Above, we have used the following fact: for selfadjoint matrices $B$, 
		$$\|B\|_\infty = \max_{\lambda \text{ eig.~of $B$}} |\lambda| = \max_{\lambda \text{ eig.~of $|B|$}} \lambda = \lambda_{\max}[|B|].$$
	\end{proof}
	
	We now state and prove the following important result, establishing that the norm $\|\cdot\|_M$ is indeed a tensor norm. 
	
	\begin{theorem}\label{thm:M-tensor-norm}
		For a fixed $N$-input, 2-output invertible non-local game $M$, the quantity $\|\cdot\|_M$ introduced in Definition \ref{def:A-M}, which characterizes the largest quantum bias of the game $M$ when one fixes Alice's dichotomic measurements, is a reasonable crossnorm on $\mathcal M^{sa}_d(\mathbb C)^N \cong \mathbb R^N  \otimes \mathcal M_d^{sa}(\mathbb C)$: $$\|A\|_{\R^N\otimes_{\epsilon}\M^{sa}_d(\C)}\leq\|A\|_M\leq\|A\|_{\R^N\otimes_{\pi}\M^{sa}_d(\C)}$$
		with $(\R^N,\|\cdot\|_M)$ and $(\M_d^{sa}(\C),\|\cdot\|_{\infty})$.
	\end{theorem}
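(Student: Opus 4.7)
The plan is to verify the two sandwich inequalities from Proposition \ref{Prop: reasonnable norm}, which characterises reasonable crossnorms. Both require combining Lemma \ref{lem:norm-M} (triangle inequality for $\|\cdot\|_M$), Proposition \ref{prop:tensor-norm-M} (equality on elementary tensors), and the computation of the dual norm $\|\cdot\|_M^*$.

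For the projective upper bound $\|A\|_M \leq \|A\|_{\R^N \otimes_\pi \M_d^{sa}(\C)}$, I take any decomposition $A = \sum_{i=1}^K p_i \otimes B_i$, apply the triangle inequality of Lemma \ref{lem:norm-M} to get $\|A\|_M \leq \sum_i \|p_i \otimes B_i\|_M$, and then use Proposition \ref{prop:tensor-norm-M} to rewrite each summand as $\|p_i\|_M \|B_i\|_\infty$. Taking the infimum over decompositions yields the bound. This step is routine.

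For the injective lower bound $\|A\|_{\R^N \otimes_\epsilon \M_d^{sa}(\C)} \leq \|A\|_M$, I unpack Definition \ref{def: injective norm}: since the dual spaces are $(\R^N, \|\cdot\|_M^*)$ with $\|\alpha\|_M^* = \|M^{-1}\alpha\|_\infty$ and $(\M_d^{sa}(\C), \|\cdot\|_1)$, it suffices to show that for all $\alpha \in \R^N$ with $\|M^{-1}\alpha\|_\infty \leq 1$ and all selfadjoint $\beta$ with $\|\beta\|_1 \leq 1$, $|\sum_x \alpha_x \Tr[\beta A_x]| \leq \|A\|_M$. Setting $\gamma := M^{-1}\alpha$, so that $\alpha_x = \sum_y M_{xy}\gamma_y$ with $|\gamma_y| \leq 1$, the bilinear expression rewrites as $\sum_{x,y} M_{xy} \gamma_y \Tr[\beta A_x]$, which has exactly the structure appearing in the defining supremum of $\|A\|_M$. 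The task thus reduces to realising this quantity as $\langle \psi | \sum_{x,y} M_{xy} A_x \otimes B_y | \psi\rangle$ for an admissible pair $(\psi, B)$.

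The main obstacle, and the only non-routine step, is this witness construction. I build it by diagonalising $\beta = \sum_i \lambda_i |v_i\rangle\langle v_i|$ with $\sum_i |\lambda_i| \leq 1$, and defining
\[
|\psi\rangle := \sum_{i=1}^d \sqrt{|\lambda_i|}\, |v_i\rangle \otimes |i\rangle \in \C^d \otimes \C^d, \qquad B_y := \gamma_y \sum_i \mathrm{sgn}(\lambda_i)\, |i\rangle\langle i|.
\]
Then $\|B_y\|_\infty \leq 1$, each $B_y$ is selfadjoint, and a direct computation gives $\langle \psi | A_x \otimes B_y | \psi\rangle = \gamma_y \Tr[\beta A_x]$, hence $\langle \psi | \sum_{x,y} M_{xy} A_x \otimes B_y | \psi\rangle = \sum_x \alpha_x \Tr[\beta A_x]$. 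Since $\||\psi\rangle\|^2 = \sum_i |\lambda_i| \leq 1$, renormalising only increases the right-hand side, so the defining supremum of $\|A\|_M$ dominates $\sum_x \alpha_x \Tr[\beta A_x]$; flipping the sign of $\beta$ (which preserves $\|\beta\|_1$) handles the absolute value. Taking the supremum over $(\alpha,\beta)$ concludes the proof.
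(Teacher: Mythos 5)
Your proposal is correct, and the first half (the projective upper bound via the triangle inequality and the factorization $\|p\otimes B\|_M=\|p\|_M\|B\|_\infty$) coincides with the paper's argument. For the injective lower bound, however, you take a genuinely different route. The paper works entirely with the closed-form expression $\|A\|_M=\lambda_{\max}\big[\sum_y\big|\sum_x M_{xy}A_x\big|\big]$: it restricts the dual functional on $\mathcal M_d^{sa}$ to extremal points $\pm\ketbra{\phi}{\phi}$ of the $\mathcal S_1$ unit ball, inserts $MM^{-1}=I$ into the pairing, estimates via H\"older and the decomposition $\sum_z M_{zy}A_z = (\cdot)^+-(\cdot)^-$ to reach $\sum_y\bra{\phi}\big|\sum_z M_{zy}A_z\big|\ket{\phi}$, and takes the supremum over $\phi$. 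You instead return to the operational definition of $\|A\|_M$ as a supremum over shared states and Bob's observables, and for each dual pair $(\alpha,\beta)$ you construct an explicit feasible quantum strategy --- the state $\sum_i\sqrt{|\lambda_i|}\,\ket{v_i}\otimes\ket{i}$ built from the spectral decomposition of $\beta$, together with the diagonal contractions $B_y=\gamma_y\sum_i\operatorname{sgn}(\lambda_i)\ketbra{i}{i}$ --- whose value equals $\sum_x\alpha_x\Tr[\beta A_x]$. Your computation checks out: $\langle\psi|A_x\otimes B_y|\psi\rangle=\gamma_y\Tr[\beta A_x]$, the $B_y$ are selfadjoint contractions since $\gamma=M^{-1}\alpha$ is real with $\|\gamma\|_\infty=\|\alpha\|_M^*\le 1$, and the sign-flip of $\beta$ followed by normalization of $\psi$ (whose squared norm is $\|\beta\|_1\le 1$) correctly handles the absolute value; the only degenerate case $\beta=0$ is trivial. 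What your approach buys is a physically transparent witness (it exhibits the quantum strategy realizing the dual functional), it handles arbitrary $\beta$ in the $\mathcal S_1$ ball without the extremal-point reduction, and it makes this direction independent of the SDP-duality lemma identifying $\|A\|_M$ with the $\lambda_{\max}$ formula. The paper's route is shorter algebra once that formula is available, but is less constructive.
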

	Before we give the proof of the theorem we recall the definitions of the projective and the injective norms in our setting:
	\begin{align*}
		\|A\|_{\R^N\otimes_{\pi}\M^{sa}_d(\C)}&:=\inf\Big\{\sum_{i=1}^k\|p_i\|_M\,\|X_i\|_{\infty},\,A=\sum_{i=1}^k\,p_i\otimes X_i\Big\}.\\
		\|A\|_{\R^N\otimes_{\epsilon}\M^{sa}_d(\C)}&:=\sup\Big\{\langle\pi\otimes\alpha,A\rangle;\,\|\pi\|^*_M\leq 1,\,\|\alpha\|_1\leq1\Big\}.
	\end{align*}
	with $\M_d^{sa}(\C)\ni\alpha\to\|\alpha\|_1=\Tr|\alpha|$ is the Schatten 1-norm (or the nuclear norm). 
	\begin{proof}
		We shall prove first the easy direction: $\|A\|_M\leq\|A\|_{\R^N\otimes_{\pi}\M^{sa}_d(\C)}$. Let us consider a decomposition $A=\sum_{i=1}^k\,p_i\otimes X_i$. We have 
		$$\|A\|_M=\Big\|\sum_{i=1}^k\,p_i\otimes X_i\Big\|_M\leq\sum_{i=1}^k\|p_i\otimes X_i\|_M=\sum_{i=1}^k\|p_i\|_M\|X_i\|_{\infty},$$
		where the factorization property follows by Proposition \ref{prop:tensor-norm-M}.
		Hence we have $$\|A\|_M\leq\|A\|_{\R^N\otimes_{\pi}\M^{sa}_d(\C)}.$$
		We shall now prove that $\|A\|_{\R^N\otimes_{\epsilon}\M^{sa}_d(\C)}\leq\|A\|_M$. Let $\alpha=\pm\ketbra{\phi}{\phi} \in(\M_d^{sa}$ be an extremal point of the unit ball of the $\mathcal S_1$ space and $\pi\in(\R^N,\|\cdot\|^*_M)$. We have 
		\begin{align*}
			|\langle \pi\otimes\alpha,A\rangle|&=\Big|\Big\langle\alpha,\sum_{x=1}^N\,\pi_x\,A_x\Big\rangle\Big|=\Big|\Big\langle\alpha,\sum_{x,y,z=1}^N\, A_z\,M_{zy}\,M^{-1}_{yx}\,\pi_x\Big\rangle\Big|\\
			&=\sum_{y=1}^N\Big|\sum_{z=1}^N\,M_{zy}\,\langle \alpha,A_z\rangle\Big|\Big|\sum_{x=1}M^{-1}_{yx}\pi_{x}\Big|\leq\sum_{y=1}^N\Big|\sum_{z=1}^N\,M_{zy}\,\langle \alpha,A_z\rangle\Big|\max_y\Big|\sum_{x=1}M^{-1}_{yx}\pi_{x}\Big|\\
			&=\|\pi\|^*_M\,\sum_{y=1}^N\Big|\sum_{z=1}^N\,M_{zy}\,\langle \alpha,A_z\rangle\Big|=\|\pi\|^*_M\,\sum_{y=1}^N\Big|\sum_{z=1}^N\,\langle \alpha,M_{zy}\,A_z\rangle\Big|\\
			&=\|\pi\|^*_M\,\sum_{y=1}^N\Big|\sum_{z=1}^N\,\Tr\Big[ \alpha\,M_{zy}\,A_z\Big]\Big|=\|\pi\|^*_M\,\sum_{y=1}^N\Big|\sum_{z=1}^N\,\bra{\phi} \,M_{zy}\,A_z\ket{\phi}\Big|\\
			&=\|\pi\|^*_M\,\sum_{y=1}^N\Big|\bra{\phi} \,\Big(\sum_{z=1}^N\,M_{zy}\,A_z\Big)^+\ket{\phi}-\bra{\phi} \,\Big(\sum_{z=1}^N\,M_{zy}\,A_z\Big)^-\ket{\phi}\Big|\\
			&\leq\|\pi\|^*_M\,\sum_{y=1}^N\Big[\Big|\bra{\phi} \,\Big(\sum_{z=1}^N\,M_{zy}\,A_z\Big)^+\ket{\phi}\Big|+\Big|\bra{\phi} \,\Big(\sum_{z=1}^N\,M_{zy}\,A_z\Big)^-\ket{\phi}\Big|\Big]\\
			&=\|\pi\|^*_M\sum_{y=1}^N\bra{\phi}\Big|\sum_{z=1}^N\,M_{zy}\,A_z\Big|\ket{\phi}.
		\end{align*}
		By taking the supremum $\|\pi\|^*_M\leq1 $ and $\|\alpha\|_{\mathcal{S}_1}\leq1$ on the last expression we have:
		$$\sup\{|\langle\pi\otimes\alpha,A\rangle|;\,\|\pi\|^*_M\leq1\,,\,\|\alpha\|_{\mathcal{S}_1}\leq1\}\leq\sup_{\|\phi\|=1}\sum_{y=1}^N\bra{\phi}\Big|\sum_{z=1}^N\,M_{zy}\,A_z\Big|\ket{\phi}=\lambda_{\max}\Big[\sum_{y=1}^N\Big|\sum_{z=1}^N\,M_{zy}\,A_z\Big|\Big].$$
		Hence we have 
		$$\|A\|_{\R^N\otimes_{\epsilon}\M^{sa}_d(\C)}\leq\|A\|_M$$
		which ends the proof of the theorem.
	\end{proof}

	\section{Dichotomic measurement compatibility via tensor norms}\label{sec: compatibility-norm}
	
	Having addressed in the previous sections the maximum value of a non-local game $M$ with fixed dichotomic observables on Alice's side $A$, we now turn to the second object of our study, quantum measurement (in-)compatibility. We characterize compatibility of dichotomic measurements (or quantum ) using tensor norms, following \cite{bluhm2022incompatibility}. Recall that we associate a dichotomic POVM $(E, I-E)$ to the corresponding observable $A = E - (I-E) = 2E - I$. In other words, the effect $E$ corresponds to the ``$+1$'' outcome, while the effect $I-E$ corresponds to the other outcome, ``$-1$''. This way, the set of dichotomic POVMs is mapped to the set of selfadjoint operators $-I \leq A \leq I$. 
	
	To a $N$-tuple of observables $(A_1, A_2, \ldots, A_N)$, we associate the tensor
	\begin{equation}
		A := \sum_{i=1}^N e_i \otimes A_i \in \mathbb R^N \otimes \mathcal M_d^{sa}(\mathbb C).
	\end{equation}
	
	The (real) vector spaces $\mathbb R^N$, resp.~$\mathcal M_d^{sa}(\mathbb C)$ shall be endowed with the $\ell_\infty$, resp.~the operator norm (or the Schatten-$\infty$ norm, $S_{\infty}$). On the tensor product space 
	$$\mathbb R^N \otimes \mathcal M_d^{sa}(\mathbb C) \cong \left[ \mathcal M_d^{sa}(\mathbb C) \right]^N$$
	we shall consider two tensor norms: the \emph{injective norm}
	\begin{equation}\label{eq:epsilon-norm}
		\|X = (X_1, X_2, \ldots, X_N)\|_\epsilon = \max_{i=1}^N \|X_i\|_\infty
	\end{equation}
	and the compatibility norm, which was introduced in \cite[Proposition 9.4]{bluhm2022incompatibility}. We review next its definition and its basic properties, in order to make the presentation self-contained. We note however that the situation considered in \cite[Section 9]{bluhm2022incompatibility} is more general, going beyond the case of quantum mechanics. 
	
	\begin{definition}\label{def:compatibility-norm}
		For a tensor $X \in \mathbb R^N \otimes \mathcal M_d^{sa}(\mathbb C)$, we define the following quantity, which we call the \emph{compatibility norm} of $X$: 
		\begin{equation}\label{eq:def-norm-c}
			\|X\|_c := \inf \left\{ \Big\|\sum_{j=1}^K H_j\Big\|_\infty \, : \, X = \sum_{j=1}^K z_j \otimes H_j, \, \text{ s.t. } \, \forall j \in [K], \, \|z_j\|_\infty \leq 1 \text{ and } H_j \geq 0\right\}.
		\end{equation}
	\end{definition}
	
	Note that in the case of a single matrix ($N=1$) we have $\|(X_1)\|_c = \|X_1\|_\infty$, and that, in general, we have
	$$\|X\|_c = \inf \left\{ t \, : \, X = \sum_{j=1}^K z_j \otimes H_j, \, \text{ s.t. } \, \sum_{j=1}^K H_j = t I_d \text{ and }  \forall j \in [K], \, \|z_j\|_\infty = 1, H_j \geq 0\right\}.$$
	Indeed, the condition $\|z_j\|=1$ can be imposed by replacing a non-zero term $z_j \otimes H_j$ by $z_j/\|z_j\|_\infty \otimes \|z_j\|_\infty H_j$, while the condition $\sum_j H_j = tI_d$ can be imposed by adding the term $0 \otimes (t I_d - \sum_j H_j)$ to the decomposition.
	
	\begin{proposition}\label{prop:c-norm-tensor}
		The $\|\cdot\|_c$ quantity is a tensor norm on $(\mathbb R^N, \|\cdot\|_\infty) \otimes (\mathcal M_d^{sa}(\mathbb C), \|\cdot\|_\infty)$.
	\end{proposition}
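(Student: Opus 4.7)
The plan is to invoke Proposition \ref{Prop: reasonnable norm}: to show that $\|\cdot\|_c$ is a reasonable crossnorm, it suffices to establish that it is a norm sandwiched between the injective and projective tensor norms, i.e.
$$\|X\|_{\ell_\infty^N \otimes_\epsilon \mathcal M_d^{sa}} \;\leq\; \|X\|_c \;\leq\; \|X\|_{\ell_\infty^N \otimes_\pi \mathcal M_d^{sa}}.$$
I would carry out the projective upper bound first, then the injective lower bound, and deduce the norm axioms along the way.

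For the projective upper bound, I would reduce to the rank-one case. Given a simple tensor $p \otimes B$ with $p \in \mathbb R^N$ and $B \in \mathcal M_d^{sa}(\mathbb C)$, I write the spectral decomposition $B = B^+ - B^-$ with $B^\pm \geq 0$ and $B^+ B^- = 0$, so that $|B| = B^+ + B^-$ and $\|B\|_\infty = \||B|\|_\infty$. Then, assuming $p \neq 0$, the two-term decomposition
$$p \otimes B \;=\; \frac{p}{\|p\|_\infty} \otimes (\|p\|_\infty B^+) \;+\; \frac{-p}{\|p\|_\infty} \otimes (\|p\|_\infty B^-)$$
is admissible in \eqref{eq:def-norm-c}, and yields $\|p \otimes B\|_c \leq \|p\|_\infty \|B\|_\infty$. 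For a general decomposition $X = \sum_i p_i \otimes B_i$, the triangle inequality (proved concurrently by concatenating admissible decompositions and using $\lambda_{\max}(H + K) \leq \lambda_{\max}(H) + \lambda_{\max}(K)$ for $H, K \geq 0$) gives $\|X\|_c \leq \sum_i \|p_i\|_\infty \|B_i\|_\infty$, and taking the infimum over decompositions yields the projective bound.

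For the injective lower bound, I would fix any admissible decomposition $X = \sum_j z_j \otimes H_j$ with $\|z_j\|_\infty \leq 1$ and $H_j \geq 0$. For each coordinate $i \in [N]$, the component $X_i = \sum_j (z_j)_i H_j$ satisfies the PSD sandwich
$$-\sum_j H_j \;\leq\; -\sum_j |(z_j)_i|\, H_j \;\leq\; \sum_j (z_j)_i H_j \;\leq\; \sum_j |(z_j)_i|\, H_j \;\leq\; \sum_j H_j,$$
so $\|X_i\|_\infty \leq \bigl\|\sum_j H_j\bigr\|_\infty$. Taking the maximum over $i$ and then the infimum over admissible decompositions gives $\|X\|_\epsilon = \max_i \|X_i\|_\infty \leq \|X\|_c$.

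Finally, the norm axioms: absolute homogeneity follows by rescaling $H_j \mapsto |\lambda| H_j$ (and flipping the sign of $z_j$ when $\lambda < 0$); the triangle inequality was established above; and non-degeneracy ($\|X\|_c = 0 \Rightarrow X = 0$) follows from the injective lower bound, since the injective norm is itself nondegenerate. The step that requires the most care is the rank-one projective bound, but the positive/negative part decomposition of a self-adjoint matrix handles it cleanly; I do not expect a serious obstacle.
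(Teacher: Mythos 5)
Your proof is correct, and it is organized differently from the paper's. The paper verifies the norm axioms directly, proves the factorization $\|w\otimes T\|_c=\|w\|_\infty\|T\|_\infty$ on simple tensors, and \emph{cites} an external reference (Proposition 3.3 of the tensor-norm companion paper) for the sandwich between the injective and projective norms; you instead prove that sandwich from scratch and then invoke Proposition \ref{Prop: reasonnable norm}, which makes the argument self-contained. The computational ingredients are essentially the same in both treatments: your rank-one projective bound via $B=B^+-B^-$ is the same device the paper uses for the ``$\leq$'' half of its factorization claim, and your operator sandwich $-\sum_j H_j\leq X_i\leq \sum_j H_j$ is the same estimate the paper uses both for non-degeneracy and for the ``$\geq$'' half. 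What your route buys is economy: the injective lower bound simultaneously delivers non-degeneracy and the lower half of the factorization property (since $\|p\otimes B\|_\epsilon=\|p\|_\infty\|B\|_\infty$), so nothing needs to be outsourced. Two minor points to tidy up in a final write-up: handle $p=0$ and $\lambda=0$ explicitly (both trivial), and note that the infimum in \eqref{eq:def-norm-c} need not be attained, so the triangle-inequality argument should work with near-optimal decompositions up to an arbitrary $\delta>0$ (the paper has the same small imprecision when it speaks of ``optimal decompositions'').
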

	\begin{proof}
		Let us start with the triangle inequality, $\|A+B\|_c \leq \|A\|_c + \|B\|_c$. Consider optimal decompositions
		\begin{align*}
			A &= \sum_j z_j \otimes H_j\\
			B &= \sum_k w_k \otimes T_k
		\end{align*}
		such that 
		$$\|A\|_c = \Big\|\sum_j H_j \Big\| \qquad \text{and} \qquad \|B\|_c = \Big\|\sum_k T_k \Big\|.$$ 
		Then, 
		$$A+B = \sum_j z_j \otimes H_j + \sum_k w_k \otimes T_k$$
		is a valid decomposition for $A+B$, hence
		$$\|A+B\|_c \leq \Big\|\sum_j H_j + \sum_k T_k\Big\| \leq \Big\|\sum_j H_j \Big\| + \Big\| \sum_k T_k\Big\| = \|A\|_c + \|B\|_c.$$
		
		The scaling equality $\|\lambda A \|_c = |\lambda| \|A\|_c$ is straightforward, and left to the reader. Consider now $A$ such that $\|A\|_c = 0$. Then, for all $\epsilon > 0$, there is a finite decomposition $A = \sum_j z_j \otimes H_j$ such that $\| \sum_j H_j\| \leq \epsilon$. We have then, for all $x \in [N]$, 
		$$\|A_x\| = \Big\|\sum_j z_j(x) H_j\Big\| \leq \Big\|\sum_j |z_j(k)| H_j\Big\| \leq \Big\|\sum_j H_j\Big\| \leq \epsilon.$$
		Taking $\epsilon \to 0$ shows that $A_x = 0$ for all $x$, and thus $A = 0$.
		
		The fact that the compatibility norm is bounded by the injective and projective norms is established in \cite[Proposition 3.3]{bluhm2022tensor}. Finally, let us show that $\|\cdot\|_c$ factorizes on simple tensors. To this end, consider a (non-zero) product tensor $A = w \otimes T$ with $\|w\|_\infty = 1$ (this can always be enforced by absorbing the norm of $w$ into $T$). On the one hand, we have 
		$$\|A\|_c \leq \|T\| = \|w\|_\infty \|T\|,$$
		establishing one inequality. Consider now an optimal decomposition 
		$$w \otimes T = \sum_j z_j \otimes H_j$$
		with $\|z_j\|_\infty \leq 1$, $H_j \geq 0$, and $\|w \otimes T\|_c = \|\sum_j H_j\|$. Consider an index $k \in [N]$ such that $\|w\|_\infty =  |w(k)|$. We have then $w(k) T = \sum_j z_j(k) H_j$ and thus 
		$$\|w\|_\infty \|T\| = \Big\| \sum_j z_j(k) H_j \Big \| \leq \Big\| \sum_j |z_j(k)| H_j \Big \| \leq \Big\| \sum_j  H_j \Big \| = \|w \otimes T\|_c,$$
		finishing the proof.
	\end{proof}
	
	We specialize now \cite[Theorem 9.2]{bluhm2022incompatibility} to the case of quantum mechanics, showing that the compatibility norm from Definition \ref{def:compatibility-norm}.
	
	\begin{theorem}\label{thm:c-norm-compatibility}
		Let $A = (A_1, \ldots, A_N)$ be a $N$-tuple of self-adjoint $d \times d$ complex matrices. Then:
		\begin{enumerate}
			\item $A$ is a collection of dichotomic quantum observables (i.e.~$\|A_i\|_\infty \leq 1$ $\forall i$) if and only if $\|A\|_\epsilon \leq 1$, where $\|\cdot\|_\epsilon$ quantity is the $\ell_\infty^N \otimes_\epsilon S_\infty^d$ tensor norm.
			\item $A$ is a collection of \emph{compatible} dichotomic quantum observables if and only if $\|A\|_c \leq 1$.
		\end{enumerate}
	\end{theorem}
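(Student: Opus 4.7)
Part (1) is immediate from the explicit formula \eqref{eq:epsilon-norm}, which gives $\|A\|_\epsilon = \max_i \|A_i\|_\infty$; this is $\leq 1$ precisely when each $A_i$ satisfies $-I \leq A_i \leq I$, i.e., when each $A_i$ corresponds to a dichotomic observable via $A_i = 2E_i - I$. So I focus on Part (2).

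The starting point for (2) is a reformulation of compatibility directly in terms of the observables $A_i$. Writing the dichotomic POVMs as $(E_i, I - E_i)$ with $A_i = 2E_i - I$, joint measurability of $(E_i, I-E_i)_{i=1}^N$ is equivalent to the existence of a POVM $(C_s)_{s \in \{\pm 1\}^N}$ on $\mathcal M_d$ with $\sum_s C_s = I$ and
\begin{equation*}
A_i = \sum_{s \in \{\pm 1\}^N} s_i \, C_s \qquad \forall i \in [N].
\end{equation*}
This follows from the standard marginalization formulas $E_i = \sum_{s : s_i = +1} C_s$ combined with $\sum_s C_s = I$. I would state and verify this small equivalence first, because both directions of the proof below translate between the expansion $A = \sum_j z_j \otimes H_j$ in Definition \ref{def:compatibility-norm} and this sign-weighted expression for the $A_i$.

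For the direction ``compatible $\Rightarrow \|A\|_c \leq 1$'': given a joint POVM $(C_s)_{s \in \{\pm 1\}^N}$, I simply observe
\begin{equation*}
A = \sum_{i=1}^N e_i \otimes A_i = \sum_{s \in \{\pm 1\}^N} s \otimes C_s,
\end{equation*}
a decomposition satisfying $\|s\|_\infty = 1$, $C_s \geq 0$ and $\bigl\|\sum_s C_s\bigr\|_\infty = \|I\|_\infty = 1$, which by Definition \ref{def:compatibility-norm} yields $\|A\|_c \leq 1$.

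The converse direction ``$\|A\|_c \leq 1 \Rightarrow$ compatible'' is where the real content sits. Starting from any decomposition $A = \sum_{j=1}^K z_j \otimes H_j$ with $\|z_j\|_\infty \leq 1$, $H_j \geq 0$, and $\bigl\|\sum_j H_j\bigr\|_\infty \leq 1$, I exploit that $[-1,1]^N$ is the convex hull of the hypercube $\{\pm 1\}^N$, so each $z_j$ can be written as $z_j = \sum_{s \in \{\pm 1\}^N} q_j(s)\, s$ for some probability vector $q_j$ (e.g.\ via independent coordinate-wise Bernoulli decomposition). Setting $C_s := \sum_j q_j(s) H_j$ gives $C_s \geq 0$, $\sum_s C_s = \sum_j H_j \leq I$, and
\begin{equation*}
\sum_{s} s_i \, C_s = \sum_j \Bigl(\sum_s s_i\, q_j(s)\Bigr) H_j = \sum_j z_j(i)\, H_j = A_i.
\end{equation*}
To obtain a normalized joint POVM, I add the ``trivial'' term $0 \otimes (I - \sum_j H_j)$ to the decomposition and distribute it uniformly, replacing $C_s$ by $C_s + 2^{-N}(I - \sum_j H_j)$; the marginals are unchanged because $\sum_s s_i = 0$, while the total becomes $I$. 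This produces the required joint POVM and finishes the argument.

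The only step I view as non-routine is the hypercube-convex-combination decomposition of $z_j$; everything else is bookkeeping between the two formalisms. Since the rest of the paper only uses this result in the form $\|A\|_c \leq 1 \Leftrightarrow$ compatibility, I would keep the presentation direct and avoid invoking the abstract machinery of \cite{bluhm2022incompatibility}, as the quantum case admits this short self-contained proof.
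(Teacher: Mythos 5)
Your proof is correct. The forward direction (compatible $\Rightarrow \|A\|_c \leq 1$) coincides with the paper's argument: both express $A = \sum_{s \in \{\pm 1\}^N} s \otimes C_s$ using the joint POVM indexed by sign vectors, and note that $\|s\|_\infty = 1$, $C_s \geq 0$, $\sum_s C_s = I_d$. The converse is where you take a different route. The paper starts from a decomposition $A = \sum_j z_j \otimes H_j$ normalized so that $\sum_j H_j = I_d$, reads off $E_i^\pm = \sum_j \frac{1 \pm z_j(i)}{2} H_j$, and then simply invokes Proposition \ref{prop:compatibility-postprocessing} (compatibility via classical post-processing of a single POVM), which is stated in the paper without proof. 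You instead build the joint POVM explicitly: you write each $z_j \in [-1,1]^N$ as a convex combination of hypercube vertices, set $C_s = \sum_j q_j(s) H_j$, check the marginals via $\sum_s s_i q_j(s) = z_j(i)$, and restore normalization by distributing $I_d - \sum_j H_j$ uniformly over the $2^N$ outcomes (which leaves the marginals unchanged since $\sum_s s_i = 0$). This is in effect an inlined proof of the post-processing proposition in the dichotomic case --- the coordinate-wise Bernoulli weights $q_j(s) = \prod_i \frac{1 + s_i z_j(i)}{2}$ are exactly the product post-processing --- so the mathematical content is the same; what your version buys is self-containedness and an explicit joint observable, at the cost of a few extra lines. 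Your handling of the normalization issue (the paper quietly assumes $\sum_j H_j = I_d$, justified by the remark following the definition of $\|\cdot\|_c$; you add the correction term by hand) is also sound.
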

	\begin{proof}
		The first statement is a direct consequence of ~\eqref{eq:epsilon-norm}. For the second statement, we shall prove the two implications separately. 
		
		First, consider compatible dichotomic observables $A_1, \ldots, A_N$, and their joint POVM $X$, having  $X_\epsilon \geq 0$ indexed by sign vectors $\epsilon \in \{\pm 1\}^N$, such that
		$$\forall i \in [N], \, \forall s \in \{\pm 1\}, \qquad E^s_i = \frac{I_d + s A_i}{2} = \sum_{\epsilon \in \{\pm 1\}^N \, : \, \epsilon_i = s} X_\epsilon.$$
		In particular, we have, for all $i \in [N]$, 
		$$A_i = -I_d + 2 \sum_{\epsilon \in \{\pm 1\}^N \, : \, \epsilon_i = +1} X_\epsilon$$
		and thus
		\begin{align*}
			A &= \sum_{i=1}^N e_i \otimes A_i = \sum_{i=1}^N (-e_i) \otimes I_d + 2 \sum_{\epsilon \in \{\pm 1\}^N} \left(\sum_{i \, : \, \epsilon_i = +1} e_i \right) \otimes X_\epsilon \\
			&= \sum_{\epsilon \in \{\pm 1\}^N} \left( 2\sum_{i \, : \, \epsilon_i = +1} e_i - \sum_i e_i \right) \otimes X_\epsilon \\
			&= \sum_{\epsilon \in \{\pm 1\}^N} \Big(\underbrace{ \sum_i \epsilon_i e_i }_{=:z_\epsilon}\Big) \otimes X_\epsilon.
		\end{align*}
		We have thus obtained above a decomposition of the tensor $A$ with $2^N$ terms, $\|z_\epsilon\|_\infty=1$ and $\sum_\epsilon X_\epsilon = I_d$, proving that $\|A\|_c \leq 1$.
		
		For the reverse implication, start with a decomposition $A = \sum_j z_j \otimes H_j$ with $\|z_j\|_\infty \leq 1$, $H_j \geq 0$ and $\sum_j H_j = I_d$. One can recover the observables and the  from this decomposition: 
		$$A_i = \sum_j z_j(i) H_j \quad \text{ and } \quad E_i^\pm = \sum_j \frac{1 \pm z_j(i)}{2} H_j.$$
		One recognizes in the expression above the description of the compatibility of the POVMs $(E^+_i, E^-_i)_{i \in [N]}$ as post-processing from Proposition \ref{prop:compatibility-postprocessing}:
		$$E^\pm_i = \sum_j p_i(\pm | j) H_j,$$
		where the conditional probabilities $p_i$ are given by
		$$p_i(\pm | j) = \frac{1 \pm z_j(i)}{2} \in [0,1].$$
	\end{proof}
	
	The compatibility norm of a tensor $A$ is related to the noise parameter $\Gamma$ from Definition \ref{def:Gamma}. The following proposition provides an \emph{operational interpretation} of the compatibility norm $\|A\|_c$, as the inverse of the minimal quantity of white noise that needs to be mixed in the measurements $A$ in order to render them compatible. 
	
	\begin{proposition}\label{prop: compatibility norm and the noise threshold}
		For any $N$-tuple of observables $A = (A_1, A_2, \ldots, A_N) \neq 0$, 
		$$\Gamma(A) = \frac{1}{\|A\|_c}.$$
	\end{proposition}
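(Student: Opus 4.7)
The plan is to reduce the compatibility of the noisy observables $A^\eta$ to a scaling of the compatibility norm, and then apply the characterization of compatibility from Theorem~\ref{thm:c-norm-compatibility}.

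First, I would translate the definition of the noisy POVM into observable language. For a dichotomic POVM with effects $\{(I+A_i)/2,\,(I-A_i)/2\}$, mixing with weight $1-\eta$ against the trivial POVM $(I/2,I/2)$ produces the effects $(I\pm \eta A_i)/2$, which corresponds to the observable $\eta A_i$. Encoding this as tensors gives $A^{\eta} = \eta A \in \R^N \otimes \M_d^{sa}(\C)$. This is the only nontrivial conceptual step; once it is in place, the rest is a direct application of properties already established.

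Next, I would combine Theorem~\ref{thm:c-norm-compatibility}(2) with the positive homogeneity of $\|\cdot\|_c$ (proved within Proposition~\ref{prop:c-norm-tensor}). The tuple $A^{\eta}$ is compatible if and only if $\|\eta A\|_c \leq 1$, and since $\eta \geq 0$ this is equivalent to $\eta\,\|A\|_c \leq 1$, i.e.\ $\eta \leq 1/\|A\|_c$. Taking the supremum over $\eta$ of Definition~\ref{def:Gamma} then yields $\Gamma(A) = 1/\|A\|_c$, where the hypothesis $A \neq 0$ guarantees $\|A\|_c > 0$ so the reciprocal is well-defined.

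I do not expect any real obstacle: once the effect-to-observable translation $A^{\eta} = \eta A$ is recognized, the proof is essentially a one-line reduction via homogeneity of the norm. The only minor bookkeeping point is the boundary case $\|A\|_c \leq 1$, in which $A$ is already compatible and the supremum in Definition~\ref{def:Gamma} is attained at $\eta=1 \leq 1/\|A\|_c$; this simply reflects that the equality $\Gamma(A)=1/\|A\|_c$ is the meaningful one in the incompatible regime $\|A\|_c \geq 1$, and one may implicitly extend the range of $\eta$ beyond $1$ to make the formula hold verbatim in all cases.
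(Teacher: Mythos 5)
Your proposal is correct and follows essentially the same route as the paper: translate the noise mixing into the scaling $A^\eta = \eta A$ at the level of observables, invoke the characterization of compatibility as $\|\cdot\|_c \leq 1$, and conclude by homogeneity of the norm. Your remark about the range of $\eta$ in Definition~\ref{def:Gamma} is a point the paper's proof silently glosses over, but it does not change the substance of the argument.
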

	\begin{proof}
		Note first that, on the level of observables, adding noise to a dichotomic measurement corresponds to scaling: 
		$$A^\eta = \left[\eta E + (1-\eta)\frac I 2\right] - \left[I - \eta E - (1-\eta)\frac I 2\right] = 2\eta E - \eta I = \eta A.$$
		Hence, 
		\begin{align*}
			\Gamma(A) &= \max\{\eta \, : \, (A_1^\eta, \ldots, A_N^\eta) \text{ compatible}\}\\
			&= \max\{\eta \, : \, \|A_1^\eta, \ldots, A_N^\eta\|_c \leq 1\}\\
			&= \max\{\eta \, : \, \eta \|A_1, \ldots, A_N\|_c \leq 1\}\\
			&= \frac{1}{\|A\|_c}.
		\end{align*}
	\end{proof}
	
	\begin{example}
		Let us consider the example of the unbiased Pauli measurements, 
		$$\frac 1 2 (I_2 \pm x \sigma_X), \quad \frac 1 2 (I_2 \pm y \sigma_Y), \quad \frac 1 2 (I_2 \pm z \sigma_Z),$$
		where $(x,y,z) \in [0,1]^3$ are real parameters describing the noise in the measurements. These three POVMs correspond to the observables 
		$$A_X = x \sigma_X, \quad A_Y = y \sigma_Y, \quad A_Z = z \sigma_Z.$$
		It is known \cite{busch86,brougham2007estimating} that these observables are compatible if and only if $x^2 +y^2+z^2 \leq 1$, hence 
		$$\|(A_X, A_Y, A_Z)\|_c = \sqrt{x^2+y^2+z^2} =  \|(x,y,z)\|_2.$$
	\end{example}
	
	\section{The relation between non-locality and incompatibility}\label{sec: non-locality and incompatibility}
	
	Having introduced the main conceptual definitions of the \emph{(in)compatibility norm} and \emph{M-Bell-locality norm} in the previous sections that formalize the compatibility and the non-locality physical notions for fixed measurements on Alice's side apparatus and invertible non-local games $M$, we bring together and compare the two norms. In this section we introduce \emph{the main theorems} of the paper. It was shown in \cite{wolf2009measurements} that the two notions are equivalent in the case of the CHSH game. Using the framework of tensor norms, we shall give a quantitative and precise answer to the following question:
	
	\medskip
	
	\emph{When is measurement incompatibility equivalent to non-locality for general games?} 
	
	\medskip
	
	\noindent It turns out that the answer to this question is given by a comparison between the \emph{compatibility norm} and the \emph{M-Bell-locality norm}. For the reader's convenience, we recall the definitions of the two tensor norms that we introduced in the Sections \ref{sec:non-locality-norm} and \ref{sec: compatibility-norm}, in relation to, respectively, Bell inequality violations and measurement incompatibility.
	\begin{itemize}
		\item The \emph{$M$-Bell-locality norm} (see Definition \ref{def:A-M} and Theorem \ref{thm:M-tensor-norm})
		$$\|A\|_{M}:=\sup_{\|\psi\| = 1} \sup_{\|B_y\| \leq 1} \Big \langle \psi \Big | \sum_{x,y = 1}^N\, M_{xy} \,A_x \otimes B_y \Big | \psi \Big \rangle=\lambda_{\max}\bigg[\sum_{y=1}^N \bigg| \sum_{x=1}^N  \, M_{xy}\,A_x\bigg|\bigg].$$
		
		\item The \emph{compatbility norm} (see Definition \ref{def:compatibility-norm} and Theorem \ref{thm:c-norm-compatibility})
		$$\|A\|_c := \inf \left\{ \Big\|\sum_{j=1}^K H_j\Big\|_\infty \, : \, A = \sum_{j=1}^K z_j \otimes H_j, \, \text{ s.t. } \, \forall j \in [K], \, \|z_j\|_\infty \leq 1 \text{ and } H_j \geq 0\right\}.$$
	\end{itemize}
	
	In what follows, we shall compare these two norms, in order to relate, in a quantitative manner, the two fundamental physical phenomena of Bell non-locality and measurement incompatibility. 
	
	We start with a reformulation, using the language of tensor norms, of the following well established fact: an observed \emph{violation of the Bell inequality} $M$ implies necessarily the \emph{incompatibility} of Alice's measurements. Mathematically, this corresponds to upper bounding the $M$-Bell-locality norm of Alice's measurements by their compatibility norm. 
	
	\begin{theorem}\label{thm:M-leq-rho}
		Consider a $N$-input, 2-output non-local inevrtible game $M$, corresponding to a matrix $M \in \mathcal M_N(\mathbb R)$. Then, for any $N$-tuple of self-adjoint matrices $A = (A_1, \ldots, A_N)$, we have
		\begin{equation}\label{eq:M-leq-rho}
			\|A\|_M\leq\|A\|_{c}\|M\|_{\ell_1^N \otimes_{\epsilon} \ell_1^N} = \|A\|_{c} \, \beta(M).
		\end{equation}
		In particular, if Alice's measurements $A$ are $M$-Bell-non-local (in the sense of Definition \ref{def:Bell-local}), then they must be incompatible. 
	\end{theorem}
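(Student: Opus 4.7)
The plan is to start from the explicit formula $\|A\|_M = \lambda_{\max}\bigl[\sum_{y=1}^N | \sum_{x=1}^N M_{xy} A_x|\bigr]$ established in Section \ref{sec:non-locality-norm}, and to feed into it a near-optimal decomposition realising the compatibility norm. For any $\varepsilon > 0$, I would pick $A_x = \sum_j z_j(x) H_j$ with $\|z_j\|_\infty \leq 1$, $H_j \geq 0$ and $\|\sum_j H_j\|_\infty \leq \|A\|_c + \varepsilon$. Substituting yields
$$\sum_{x=1}^N M_{xy} A_x = \sum_j \alpha_{j,y} H_j, \qquad \alpha_{j,y} := \sum_{x=1}^N M_{xy} z_j(x),$$
reducing the task to controlling the absolute value of a linear combination of positive operators with real scalar coefficients.

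The key operator-level step is the inequality $|P-Q| \leq P+Q$, valid for all positive semidefinite $P,Q$ (it follows from $\pm(P-Q) \leq P+Q$ together with the variational characterisation of $|\cdot|$ as the smallest positive operator sandwiching a selfadjoint one). Splitting $\sum_j \alpha_{j,y} H_j = \sum_j \alpha_{j,y}^+ H_j - \sum_j \alpha_{j,y}^- H_j$ as a difference of positive operators and applying this inequality yields $|\sum_j \alpha_{j,y} H_j| \leq \sum_j |\alpha_{j,y}| H_j$. Summing over $y$ and using positivity of the $H_j$'s,
$$\sum_{y=1}^N \Big| \sum_j \alpha_{j,y} H_j \Big| \;\leq\; \sum_j \Big( \sum_{y=1}^N |\alpha_{j,y}| \Big) H_j.$$

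To finish, I would identify the scalar factor with $\beta(M)$. By the dual characterisation of the injective tensor norm and optimisation over the sign vector,
$$\|M\|_{\ell_1^N \otimes_\epsilon \ell_1^N} \;=\; \sup_{\|w\|_\infty \leq 1} \sum_{y=1}^N \Big| \sum_{x=1}^N M_{xy} w_x \Big|,$$
so $\sum_y |\alpha_{j,y}| \leq \beta(M)$ for every $j$ since $\|z_j\|_\infty \leq 1$. Inserting these scalar bounds (again using $H_j \geq 0$) gives the operator inequality $\sum_y |\sum_j \alpha_{j,y} H_j| \leq \beta(M) \sum_j H_j$. Taking $\lambda_{\max}$ and then letting $\varepsilon \to 0$ delivers $\|A\|_M \leq \beta(M)\,\|A\|_c$, which is the claim; the second statement about $M$-Bell-non-locality forcing incompatibility is then immediate by contraposition, using Theorem \ref{thm:c-norm-compatibility}.

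I do not anticipate a serious obstacle: once the compatibility decomposition is inserted, the argument is essentially algebraic. The only delicate point is the operator inequality $|P-Q| \leq P+Q$, which must be invoked even though $P$ and $Q$ need not commute; this is exactly the step that couples the noncommutative compatibility structure with the scalar Bell functional $M$, and it is what allows the operator absolute value inside the definition of $\|A\|_M$ to be exchanged with a classical $\ell_1$-type sum amenable to the injective norm bound.
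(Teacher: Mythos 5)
Your route is the same as the paper's own: insert a (near-)optimal compatibility decomposition $A_x=\sum_j z_j(x)H_j$ into the formula $\|A\|_M=\lambda_{\max}\bigl[\sum_y|\sum_x M_{xy}A_x|\bigr]$, pull the absolute value onto the scalars $\alpha_{j,y}=\sum_x M_{xy}z_j(x)$, and bound $\sum_y|\alpha_{j,y}|$ by $\|M\|_{\ell_1^N\otimes_\epsilon\ell_1^N}$. The scalar bookkeeping is fine, but the step you single out as the ``key operator-level step'' is a genuine gap: the inequality $|P-Q|\leq P+Q$ is \emph{false} for non-commuting positive semidefinite matrices, and the justification you give does not exist --- the operator absolute value is \emph{not} the smallest positive operator $Y$ with $\pm X\leq Y$ (the Loewner order is not a lattice, and $|X|$ need not lie below such a $Y$). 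Concretely, take
$$P=\frac{1}{2}\begin{bmatrix}3&\sqrt3\\ \sqrt3&1\end{bmatrix},\qquad Q=\frac{1}{2}\begin{bmatrix}1&\sqrt3\\ \sqrt3&3\end{bmatrix};$$
both are positive semidefinite (rank one, trace $2$), $P-Q=\mathrm{diag}(1,-1)$ so $|P-Q|=I_2$, yet $P+Q$ has smallest eigenvalue $2-\sqrt3<1$, so $|P-Q|\not\leq P+Q$. Hence your claimed inequality $|\sum_j\alpha_{j,y}H_j|\leq\sum_j|\alpha_{j,y}|H_j$ fails in general, and even its weaker $\lambda_{\max}$ form can fail for a non-optimal decomposition (adjoin a third term $H_3=3\ketbra{\phi_0}{\phi_0}$ with $\phi_0=(1,-1)/\sqrt2$ and $M=I_2$ to the example above and compare the two sides). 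To be fair, the paper's own proof performs exactly the same exchange of the operator absolute value with the scalar one, so you have reproduced its weak point rather than introduced a new one; but as written the step would not survive scrutiny.

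The theorem itself is true, and the clean repair is to avoid the operator absolute value entirely by working with the bilinear form defining $\|A\|_M$. For any unit vector $\psi$ and contractions $B_y$, positivity of $H_j\otimes(I\pm B_y)$ gives $|\langle\psi|H_j\otimes B_y|\psi\rangle|\leq\langle\psi|H_j\otimes I|\psi\rangle=:h_j\geq0$, whence
$$\Big\langle\psi\Big|\sum_{x,y=1}^N M_{xy}\,A_x\otimes B_y\Big|\psi\Big\rangle=\sum_j\sum_{y=1}^N\alpha_{j,y}\,\langle\psi|H_j\otimes B_y|\psi\rangle\leq\sum_j\Big(\sum_{y=1}^N|\alpha_{j,y}|\Big)h_j\leq\beta(M)\sum_j h_j\leq\beta(M)\Big\|\sum_jH_j\Big\|_\infty.$$
Taking the suprema over $\psi$ and $B$ and letting $\varepsilon\to0$ yields \eqref{eq:M-leq-rho}; the triangle inequality is now applied only to the real numbers $\langle\psi|H_j\otimes B_y|\psi\rangle$, where it is harmless, and the rest of your argument (including the identification of $\sup_{\|w\|_\infty\leq1}\sum_y|\sum_x M_{xy}w_x|$ with $\beta(M)$ and the contrapositive for the final claim) goes through unchanged.
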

	\begin{proof}
		Let us consider the optimal decomposition $\|A\|_{c}=\|\sum_{j=1}^N C_j\|_{\infty}$ with $A=\sum_{j=1}^N z_j\otimes C_j$, $\|z_j\|_{\infty}\leq 1$ and $C_j\geq0$ for all $j$. Thus we have $A_x=\sum_{j=1}^Nz_j(x)C_j$.\\
		We compute the upper bound of the $M$-Bell-locality norm $\|A\|_M$.
		\begin{align*}
			\|A\|_M&=\lambda_{\max}\left[\sum_{y=1}^N\bigg|\sum_{x=1}^N    
			\,M_{xy}\,A_x\bigg|\right]=\lambda_{\max}\left[\sum_{y=1}^N\bigg|\sum_{j=1}^N\sum_{x=1}^N \,M_{xy}\,z_j(x)\,C_j\bigg|\right]\\
			&\leq \lambda_{\max}\left[\sum_{j=1}^N\sum_{y=1}^N\bigg|\sum_{x=1}^N\, M_{xy}\,z_j(x)\bigg|\,C_j\right]=\lambda_{\max}\left[\sum_{j=1}^N\sum_{y=1}^N\sum_{x=1}^N \,\epsilon_y \,M_{xy}\,z_j(x)\,C_j\right]
		\end{align*}
		We have used  $$ \sum_{y=1}^N\bigg|\sum_{x=1}^N\, M_{xy}\,z_j(x)\bigg|=\sum_{y=1}^N\sum_{x=1}^N \,\epsilon_y \,M_{xy}\,z_j(x)$$
		with $\epsilon=\{\pm 1\}^N$.
		
		Then we have 
		\begin{align*}
			\|A\|_M&\leq\lambda_{\max}\left[\sum_{j=1}^N\sum_{y=1}^N\sum_{x=1}^N\,                   \epsilon_y\,M_{xy}\,z_j(x)\,C_j\right]\leq\lambda_{\max}\left[\sum_{j=1}^N \,C_j\,\|M\|_{\ell_1 \otimes_\epsilon \ell_1}\right]\\
			&=\lambda_{\max}\left[\sum_{j=1}^N C_j\right] \|M\|_{\ell_1 \otimes_\epsilon \ell_1}=\|A\|_{c} \|M\|_{\ell_1 \otimes_\epsilon \ell_1}
		\end{align*}
		where $\|M\|_{\ell_1 \otimes_\epsilon \ell_1}=\underset{\|\epsilon\|_{\infty}\leq 1,\|z_j\|_{\infty}\leq 1}{\sup} \langle M,z_j\otimes \epsilon \rangle$.   
	\end{proof}
	
	In the following we will show, for \emph{invertible Bell functionals}, that the compatibility is upper bounded by the $M$-Bell-locality norm. 
	
	\begin{theorem}\label{thm:rho-leq-M}
		Consider a $N$-input, 2-output non-local game $M$, corresponding to an \emph{invertible} matrix $M \in \mathcal M_N(\mathbb R)$. Then, for any $N$-tuple of self-adjoint matrices $A = (A_1, \ldots, A_N)$, we have
		\begin{equation}\label{eq:rho-leq-M}
			\|A\|_c \leq \|A\|_M \|M^{-1}\|_{\ell^N_\infty \otimes_\epsilon \ell^N_\infty}.
		\end{equation}
	\end{theorem}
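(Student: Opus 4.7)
The plan is to build, from the invertibility of $M$, an explicit decomposition of $A$ of the kind witnessing the compatibility norm, with the right eigenvalue bound coming from $\|A\|_M$ and a normalization factor coming from the entries of $M^{-1}$.

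First I would introduce the transformed observables $\tilde A_y := \sum_{x=1}^N M_{xy}\,A_x$, so that by the closed formula for the $M$-Bell-locality norm we have
$$\sum_{y=1}^N |\tilde A_y| \;\leq\; \|A\|_M \cdot I_d,$$
since $\lambda_{\max}\bigl[\sum_y |\tilde A_y|\bigr] = \|A\|_M$. Next, I would use invertibility of $M$ to invert the linear change of variables: since $\tilde A = M^\top A$ (viewing $A$ as a vector of matrices), we have $A_x = \sum_{y=1}^N (M^{-1})_{yx}\,\tilde A_y$. Decomposing each $\tilde A_y$ into its positive and negative parts $\tilde A_y = \tilde A_y^+ - \tilde A_y^-$ with $\tilde A_y^\pm \geq 0$ and $|\tilde A_y| = \tilde A_y^+ + \tilde A_y^-$, we obtain
$$A_x = \sum_{y=1}^N (M^{-1})_{yx}\,\tilde A_y^+ \;-\; \sum_{y=1}^N (M^{-1})_{yx}\,\tilde A_y^-.$$

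The key step is to package this into a valid $\|\cdot\|_c$-decomposition. Let $c := \|M^{-1}\|_{\ell_\infty^N \otimes_\epsilon \ell_\infty^N} = \max_{y,x} |(M^{-1})_{yx}|$ (this is the value of the injective norm, since the unit ball of $(\ell_\infty^N)^* = \ell_1^N$ has extremal points $\pm e_i$). Define the vectors $v_y^\pm \in \mathbb R^N$ by $(v_y^\pm)_x := \pm (M^{-1})_{yx}/c$, so that $\|v_y^\pm\|_\infty \leq 1$, and set $H_y^\pm := c\,\tilde A_y^\pm \geq 0$. Then
$$A \;=\; \sum_{x=1}^N e_x \otimes A_x \;=\; \sum_{y=1}^N v_y^+ \otimes H_y^+ \;+\; \sum_{y=1}^N v_y^- \otimes H_y^-$$
is a decomposition of $A$ of the type appearing in Definition~\ref{def:compatibility-norm}.

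Applying the definition of $\|\cdot\|_c$ to this decomposition yields
$$\|A\|_c \;\leq\; \Bigl\|\sum_{y=1}^N (H_y^+ + H_y^-)\Bigr\|_\infty \;=\; c\,\Bigl\|\sum_{y=1}^N |\tilde A_y|\Bigr\|_\infty \;=\; c\,\|A\|_M \;=\; \|M^{-1}\|_{\ell_\infty^N \otimes_\epsilon \ell_\infty^N}\,\|A\|_M,$$
which is exactly \eqref{eq:rho-leq-M}. I do not expect any step to be genuinely difficult: the only thing one has to watch is the bookkeeping of indices and transposes when inverting the map $A \mapsto \tilde A = M^\top A$, and the identification of $\|M^{-1}\|_{\ell_\infty^N \otimes_\epsilon \ell_\infty^N}$ with the entrywise max $\max_{y,x}|(M^{-1})_{yx}|$, which is what feeds the uniform control $\|v_y^\pm\|_\infty \leq 1$ needed for the compatibility decomposition.
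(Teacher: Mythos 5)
Your proof is correct and follows essentially the same route as the paper's: both invert the change of variables $\tilde A = M^\top A$, split $\tilde A_y$ into positive and negative parts, and normalize the coefficient vectors by $\|M^{-1}\|_{\ell_\infty^N \otimes_\epsilon \ell_\infty^N} = \max_{y,x}|(M^{-1})_{yx}|$ to obtain a valid decomposition witnessing the compatibility norm. The only (immaterial) difference is that you absorb the normalization constant into the positive operators $H_y^\pm$ rather than keeping it as an explicit scalar factor in front of the rank-one terms.
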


	\begin{proof}
		Let us consider $$C_y=\sum_{x=1}^N M_{xy} A_x=(M^\top A)_y\implies A_x=((M^\top)^{-1}C)_x=\sum_{y=1}^N(M^{-1})_{y,x}C_y.$$
		Let us also consider the following decomposition of $A=\sum_{x=1}^Ne_x\otimes A_x$ with $e_x$ the canonical basis vectors.
		We have
		$$A=\sum_{x=1}^Ne_x\otimes A_x=\sum_{y=1}^N\sum_{x=1}^N(M^{-1})_{y,x}e_x\otimes C_y$$
		$$=\sum_{y=1}^N\left[\sum_{x=1}^N(M^{-1})_{y,x}e_x\right]\otimes C^+_y+\sum_{y=1}^N\left[-\sum_{x=1}^N(M^{-1})_{y,x}e_x\right]\otimes C^-_y$$
		$$=\sum_{y=1}^N e'_y\otimes C^+_y+\sum_{y=1}^N-e'_y\otimes C^-_y$$
		where we have decomposed $C_y=C^+_y-C^-_y$ into positive and negative parts $C_y^\pm \geq 0$ for all $y \in [N]$ and $e'_y:=\sum_{x=1}^N(M^{-1})_{y,x}e_x$. 
		
		Observe that 
		$$\|e'_y\|_{\infty}=\Big\|\sum_{x=1}^N(M^{-1})_{y,x}e_x\Big\|_{\infty}=\Big\|(M^{-1}e)_y\Big\|_{\infty}\leq\|M^{-1}\|_{\infty}\|e\|_{\infty}=\|M^{-1}\|_{\ell^N_\infty \otimes_\epsilon \ell^N_\infty}$$
		where we recall by the Definition \ref{def: injective norm} that  $$\|M^{-1}\|_{\ell^N_\infty \otimes_\epsilon \ell^N_\infty}:=\sup_{a,b\in\mathbb B(\ell_1^N(\R))}|\langle M^{-1},a\otimes b\rangle|=\max_{i,j}|(M^{-1})_{i,j}|=\|M^{-1}\|_{\infty}.$$
		With the norm formulation above, $\|M^{-1}\|_{\ell^N_\infty \otimes_\epsilon \ell^N_\infty}$ and $\|M^{-1}\|_{\infty}$ are equal if we consider $M^{-1}$ in its matrix or tensor representations. Hence, we have
		$$\|e'_y\|_{\infty}\leq\|M^{-1}\|_{\ell^N_\infty \otimes_\epsilon \ell^N_\infty}.$$
		We consider now the normalised vectors
		$$a_y:=\frac{e_y}{\|M^{-1}\|_{\ell^N_\infty \otimes_\epsilon \ell^N_\infty}}\in\mathbb{B}(\ell_{\infty}(\mathbb{R}^N))$$
		
		By normalising the vectors one has 
		$$A=\sum_{y=1}^N\, \|M^{-1}\|_{\ell^N_\infty \otimes_\epsilon \ell^N_\infty}\, a_y\otimes C^+_y\,-\,\sum_{y=1}^N\,\|M^{-1}\|_{\ell^N_\infty \otimes_\epsilon \ell^N_\infty}\, a_y\otimes C^-_y$$
		We recognize above a valid decomposition of the tensor $A$ as in Eq.~\eqref{eq:def-norm-c}.
		Hence
		\begin{align*}
			\|A\|_c&\leq\Big\|\sum_{y=1}^N\,\|M^{-1}\|_{\ell^N_\infty\otimes_\epsilon             \ell^N_\infty}\,(C^+_y\,+\,C^-_y)\Big\|_{\infty}\\
			&=\|M^{-1}\|_{\ell^N_\infty \otimes_\epsilon \ell^N_\infty}\lambda_{max}\Big(\sum_{y=1}^N\Big|C_y\Big|\Big)=\|M^{-1}\|_{\ell^N_\infty \otimes_\epsilon \ell^N_\infty}\|A\|_M.
		\end{align*}
	\end{proof}
	
	Putting together Theorems \ref{thm:M-leq-rho} and \ref{thm:rho-leq-M}, we recover the main result from \cite{wolf2009measurements}: for $N=2$ and the CHSH matrix 
	$$M_{\textrm{CHSH}} = \frac 1 2 \begin{bmatrix}
		1 & 1 \\ 
		1 & -1 \end{bmatrix},$$
	we have 
	$$\beta(M_{\textrm{CHSH}}) = 1 \qquad  \text{ and } \qquad (M_{\textrm{CHSH}})^{-1} = \begin{bmatrix}
		1 & 1 \\ 
		1 & -1 \end{bmatrix}.$$
	It follows thus, from Eqs.~\eqref{eq:M-leq-rho} and \eqref{eq:rho-leq-M} that
	\begin{equation}\label{eq:c-CHSH-equal}
		\|\cdot\|_c = \|\cdot\|_{M_{\textrm{CHSH}}}.    
	\end{equation}
	
	\begin{remark}
		We have seen in Section \ref{sec:non-locality} that for the CHSH game we have $$\|A\|_{M_{\text{CHSH}}}=\frac{1}{\Gamma(A)}.$$
		One can see also that within Proposition \ref{prop: compatibility norm and the noise threshold} we have, with respect to the compatibility norm, 
		$$\|A\|_c=\frac{1}{\Gamma(A)}.$$
	\end{remark}

	In what follows, we compare the compatibility norm and the Bell-locality norm in two special settings, a modified CHSH game, and the pure correlation $I_{3322}$ game. As we shall see, in these situations, the two norms are different. 
	
	\bigskip

	Let us first consider the following modified CHSH game defined by the following matrix
	$$M_t :=\begin{bmatrix}
		1 & 1 \\ 
		1 & -t \end{bmatrix},$$
	where $t$ is a real parameter taking values in $\mathbb{R}\setminus\{-1\}$, such that the matrix $M_t$ is invertible. 
	We start by normalizing the matrix $M_t$ such that its classical bias $\beta$ is equal to 1. A simple calculation shows that
	$$\beta(M_t):=\begin{cases}3-t\qquad \text{for} \quad t\leq1\\
		1+t \qquad \text{for} \quad t>1
	\end{cases} = 2 + |t-1|.$$
	We are thus going to work with the normalized version
	$$M'_t =\frac{1}{\beta(M_t)}\begin{bmatrix}
		1 & 1 \\ 
		1 & -t \end{bmatrix},$$
	for which $\beta(M'_t) = 1$. 
	We consider the following pair of spin observables 
	$$A:=(\sigma_X,y\sigma_Y),$$
	where $y \in [-1,1]$ is a parameter we shall vary. These two observables correspond to, respectively, a sharp measurement in the eigenbasis of $\sigma_X$ and a noisy measurement in the eigenbasis of $\sigma_Y$. 
	
	In the following, we calculate $\|A\|_{c}$ and $\|A\|_{M'_t}$, for different values of the parameters $t$ and $y$. Since the $t=1$ value corresponds to the $CHSH$ game (for which $\|\cdot\|_c = \|\cdot\|_{M_{\textrm{CHSH}}}$), the compatibility norm reads
	$$\|A\|_{c}=\|A\|_{M'_{t=1}}=\|A\|_{M_{\textrm{CHSH}}}=\lambda_{\max}\bigg[\sum_{y=1}^2\bigg|\sum_{x=1}^2M_{xy}A_x\bigg|\bigg]=\frac{1}{2}\lambda_{\max}\big[|\sigma_X+y\sigma_Y|+|\sigma_X-y\sigma_Y|\big].$$
	A simple calculation shows that$$\|A\|_{c}=\sqrt{1+y^2}=:r.$$
	
	We now compute $\|A\|_{M'_t}$ for the normalized modified CHSH game:
	$$\|A\|_{M'_t}=\frac{1}{2+|t-1|}\lambda_{\max}\big[|\sigma_X+y\sigma_Y|+|\sigma_X-ty\sigma_Y|\big]=\frac{r_t+r}{2+|t-1|}.$$
	with $r_t:=\sqrt{1+(yt)^2}$; above, we have used the following fact: 
	$$\forall x,y \in \mathbb R, \qquad |x \sigma_X + y \sigma_Y| = \left| \begin{bmatrix} 0 & x - \mathrm{i} y \\ x + \mathrm{i} y & 0 \end{bmatrix} \right|  =\sqrt{x^2+y^2} \, I_2.$$
	. We plot the norm $\|A\|_{M'_t}$ in Figure \ref{fig:norme A_Mt}, the region $t$ and $y$ where Alice observes a Bell inequality violation $\|A\|_{M'_t}> \beta(M
	_t) = 1$ in Figure \ref{fig: Alice observe violation}, and the ratio of the two norms in Figure \ref{fig:modified-CHSH}. Note that the plot for $t=1$ corresponds to the CHSH game: the two norms are equal (see Eq.~\eqref{eq:c-CHSH-equal}). At $y=1$, Alice's measurements are sharp: $A = (\sigma_X, \sigma_Y)$. One observes violations of the game $M'_t$ for the parameter values
	$$\|A_{y=1}\|_{M'_t} > 1 \iff t > \frac{9-4\sqrt 2}{7}=:t_*.$$
	The values at $y=0$ also have a special meaning, since, in this case,
	$$A = (\sigma_X, 0) = (1,0) \otimes \sigma_X.$$
	By the tensor norm property of the compatibility norm (see Proposition \ref{prop:c-norm-tensor}), we have 
	$$\|A\|_c = \|(1,0)\|_{\ell_\infty} \cdot \|\sigma_X\|_{S_\infty} = 1.$$
	Similarly, the tensor norm property of the Bell-locality norm yields
	$$\|A\|_{M_t'} = \|(1,0)\|_{M_t'} \cdot \|\sigma_X\|_{S_\infty} = \frac{2}{\beta(M_t)} \cdot 1 = \frac{2}{2+|t-1|} \leq 1.$$
	In Figure \ref{fig:modified-CHSH}, the dashed curve corresponds to the limit $|t| \to \infty$, in which case
	$$\lim_{|t| \to \infty} \frac{\|A\|_{M_t'}}{\|A\|_c} = \frac{|y|}{\sqrt{1+y^2}}.$$
	Finally, the dotted line corresponds to the game $M_t'$ for $t=-1$. This game is not invertible, so the quantity $\|\cdot\|_{M_{-1}'}$ is not a norm. 
	\begin{figure}[htb!]
		\centering
		\includegraphics[width=\textwidth]{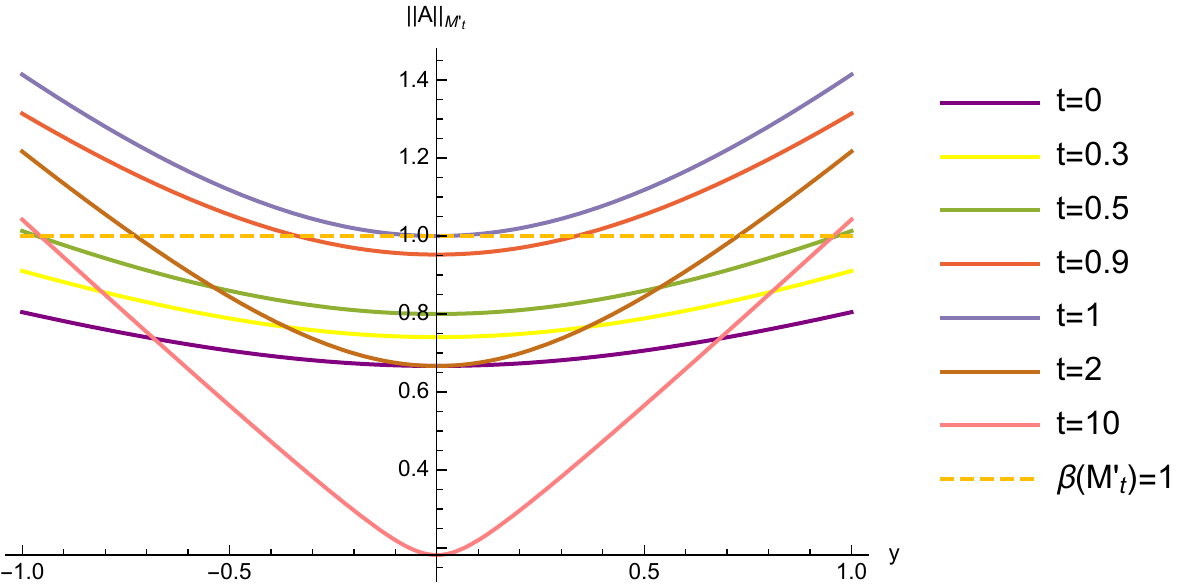}\qquad\qquad\qquad\qquad 
		\caption{The norm $\|A\|_{M'_t}$ for $y\in[-1,1]$ and different value of $t$. The measurements $A$ are $\sigma_X$ and $y\sigma_Y$, a noisy version of $\sigma_Y$. For $t=1$ (the CHSH game), one observes violations (i.e.~$\|A\|_{M'_t} > \beta(M'_t)=1$) for every value of $y \in [-1, 1]$.}
		\label{fig:norme A_Mt}
	\end{figure}
	
	\begin{figure}[htb!]
		\centering
		\includegraphics[width=0.7\textwidth]{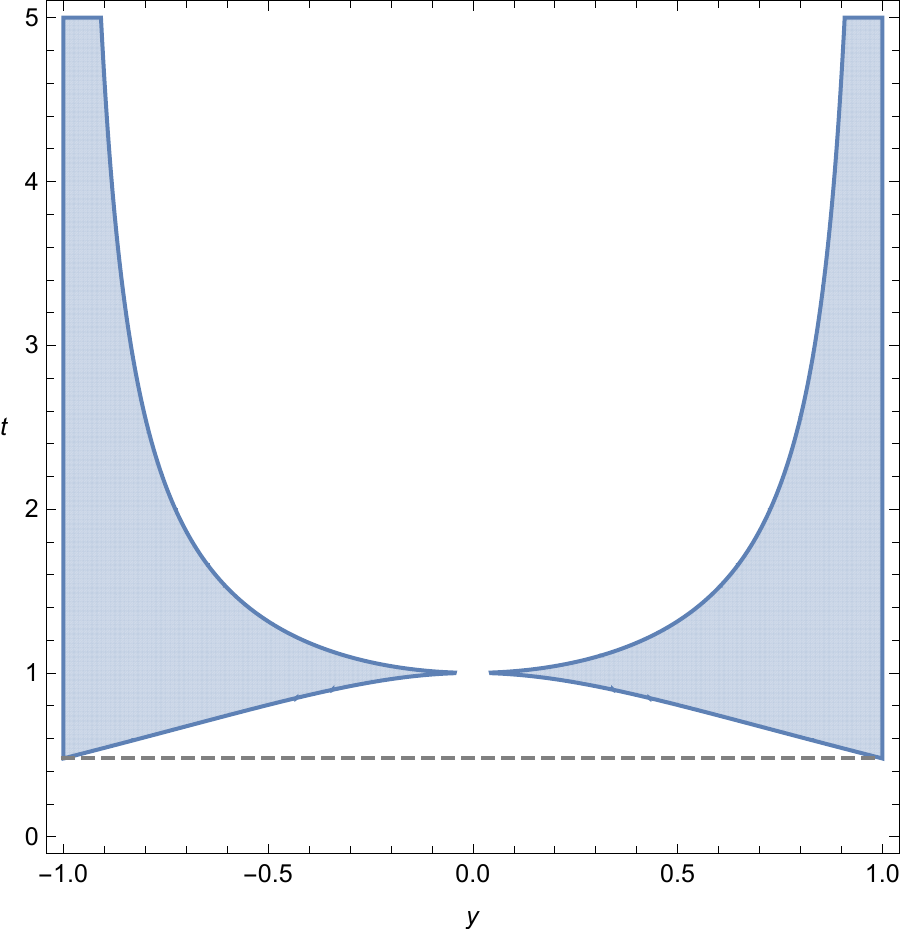}\qquad\qquad\qquad\qquad 
		\caption{The filled region corresponds to the set of parameters $(y,t)$ for which  Alice's measurements ($\sigma_X$ and $y\sigma_Y$) are Bell non-local (for the game $M'_t$):  $\|A\|_{M'_t}>1 = \beta(M'_t)$. Note that for $t > t_* = (9-4\sqrt 2)/7$, the game $M'_t$ does not allow quantum violations when Alice's measurements are of the form $A= (\sigma_X, y\sigma_Y).$}
		\label{fig: Alice observe violation}
	\end{figure}
	
	\begin{figure}[htb!]
		\centering
		\includegraphics[width=\textwidth]{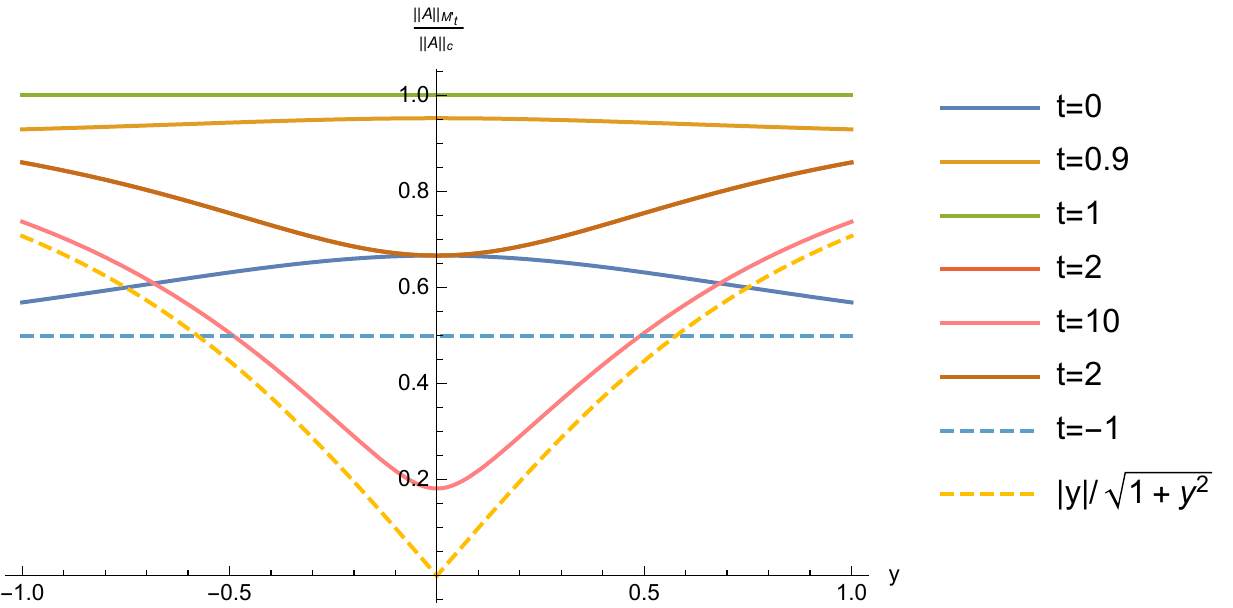}\qquad\qquad\qquad\qquad 
		\caption{The ratio $\|A\|_{M'_t} \, / \, \|A\|_{c}$ for $y\in[-1,1]$ and different value of $t$. We notice that the ration is always smaller than 1, except for $t=1$, which is the CHSH game.}
		\label{fig:modified-CHSH}
	\end{figure}

	\bigskip
	In the same spirit as the example above we shall now analyze another deformation of the CHSH game that was considered in \cite{lawson2010biased}. In the following we shall recall the game, and analyse it as the game considered before with the tools that we introduced. 
	
	The deformation of the CHSH game that was considered in \cite{lawson2010biased}, was given by
	$$G(p,q) = \begin{bmatrix}
		p\,q & p\,(1-q)\\
		q\,(1-p)& -(1-q)\,(1-p)
	\end{bmatrix},$$
	where $p,q\in[0,1]^2$. Note that this matrix is invertible for all $(p,q) \in (0,1)^2$.
	
	In the following we give the classical bias $\beta(G(p,q))$ for different $p,q\in[0,1]^2$.
	\begin{itemize}
		\item For $p$ and $q$ satisfying $p,q\geq \frac{1}{2}$, the classical bias of the game $\beta(G(p,q))$ is given by $$\beta(G(p,q))=1-2\,(1-p)\cdot(1-q)$$
		\item For $p$ and $q$ satisfying  $p\leq\frac{1}{2},q\geq \frac{1}{2}$, the classical bias of the game $\beta(G(p,q))$ is given by $$\beta(G(p,q))=1-2\,p\cdot(1-q)$$
		\item For $p$ and $q$ satisfying  $q\leq\frac{1}{2},p\geq \frac{1}{2}$, the classical bias of the game $\beta(G(p,q))$ is given by $$\beta(G(p,q))=1-2\,q\cdot(1-p)$$
		\item For $p$ and $q$ satisfying  $p,q\leq \frac{1}{2}$, the classical bias of the game $\beta(G(p,q))$ is given by
		$$\beta(G(p,q))=1-2p\cdot q$$
	\end{itemize}.
	\begin{remark}
		The classical bias of the game $\beta(G(p,q))$ for $p,q\geq \frac{1}{2}$ was already shown in \cite{lawson2010biased}.
	\end{remark}
	To give the classical bias of the game $\beta(G(p,q))$  for different $p,q\in[0,1]^2$, one shall use $\min(\cdot,\cdot)$ defined as 
	$$\min(x,y):=\frac{1}{2}(x+y-|x-y|).$$
	with $x,y\in\mathbb{R}$.
	
	One can easily check that for $p\in[0,1]$ we have
	$$\min(p,1-p)=\frac{1}{2}(1-|2p-1|)=\begin{cases}
		1-p\qquad \text{for}\quad p\geq\frac{1}{2}&\\
		p\qquad \text{for}\quad p\leq \frac{1}{2}&
	\end{cases}$$
	The same results hold for $\min(q,1-q)$ with $q\in[0,1]$.
	
	It can be easily seen that the classical bias of the game for $p,q\in[0,1]^2$ is given by:
	$$\beta(G(p,q))=1-2\cdot\min(p,1-p)\min(q,1-q).$$
	
	In our setting, we shall consider the normalised game $G'(p,q)$ for all $p,q\in[0,1]^2$
	$$G'(p,q)=\frac{1}{\beta(G(p,q))} \begin{bmatrix}
		p\,q & p\,(1-q)\\
		q\,(1-p)& -(1-q)\,(1-p)
	\end{bmatrix}. $$
	As in the example above we consider the following pair of spin observables
	$$A:=(\sigma_X,y\sigma_Y),$$
	where $y \in [-1,1]$ is a parameter we shall vary. In the following we will compute the $\|A\|_{G'(p,q)}$.
	\begin{align*}
		\|A\|_{G'_{p,q}}&:=\lambda_{\max}\bigg[\sum_{y=1}^2\bigg|\sum_{x=1}^2 G'(p,q)_{x,y}A_x\bigg|\bigg]\\
		&=\frac{1}{|\beta(G(p,q))|}\lambda_{\max}\Big(\Big|p\,q\,\sigma_X+y\,q\,(1-p)\sigma_Y\Big|+\Big|p\,(1-q)\sigma_X-y\,(1-p)\,(1-q)\,\sigma_Y\Big|\Big).\end{align*}
	A simple calculation shows that
	$$\|A\|_{G'(p,q)}=\frac{1}{|1-2\min(p,1-p)\min(q,1-q)|}\Big(\Big[p^2\,q^2+y^2\,q^2\,(1-p)^2\Big]^{\frac{1}{2}}+\Big[p^2\,(1-q)^2+y^2\,(1-p)^2\,(1-q)^2\Big]^{\frac{1}{2}}\Big).$$
	
	Note that for $p=1/2$, we have a simplification: 
	$$\|A\|_{G'(1/2,q)}=\frac{\sqrt{1+y^2}}{2\max(q,1-q)} = \frac{\|A\|_c}{2\max(q,1-q)}.$$
	
	We plot in Figure \ref{fig: Alice violation for G'(p,q)} the set of pairs $(p,q)$ such that $\|A\|_{G'(p,q)}>1$, that is the game parameter region where Alice observes a Bell violation for different values of $y\in[0,1]$. In Figure $\ref{fig: norme A_G'(p,q) }$ we plot the norm $\|A\|_{G'(p,q)}$ while in Figure $\ref{fig: Rapport norme A_G(p,q) et A_c}$ the ratio of $\|A\|_{G'(p,q)}/\|A\|_c$ for fixed value of $p$ and $q$ .

	\begin{figure}[htb!]
		\centering
		\includegraphics[width=0.8\textwidth]{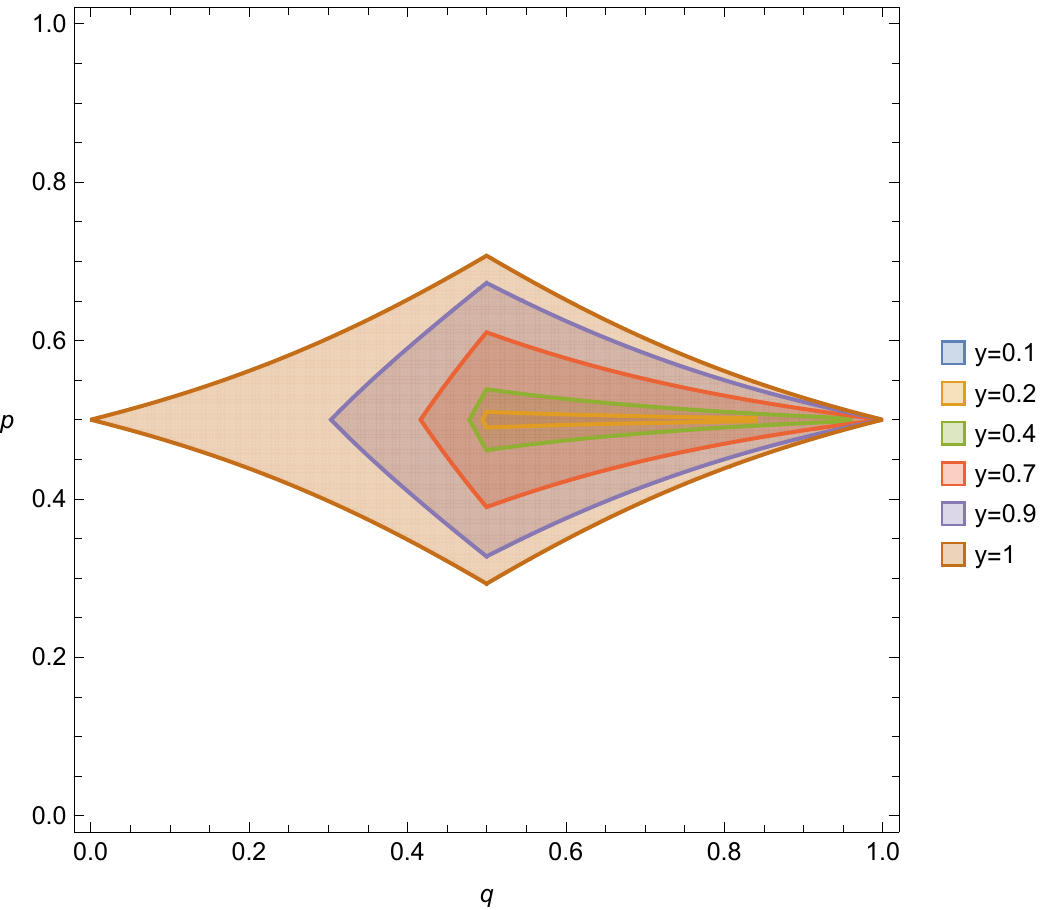}\qquad\qquad\qquad\qquad 
		\caption{The $p$,$q$ region where Alice observe a violation $\|A\|_{G'(p,q)}>1$ for different value of $y$.}
		\label{fig: Alice violation for G'(p,q)}
	\end{figure}
	
	\begin{figure}[htb!]
		\centering
		\includegraphics[width=\textwidth]{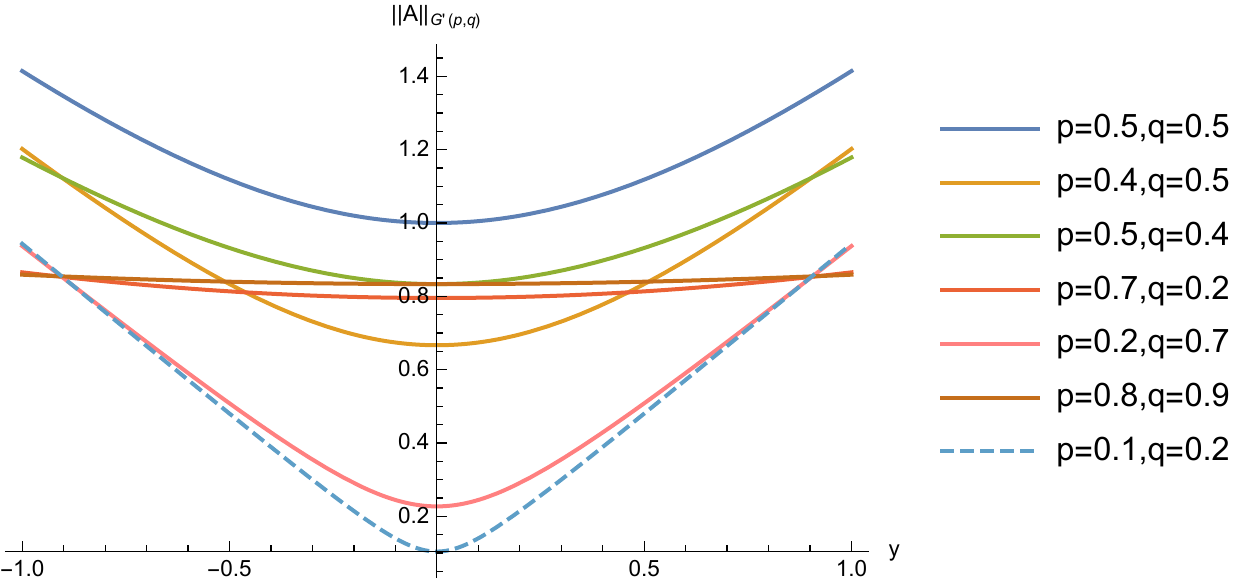}\qquad\qquad\qquad\qquad 
		\caption{The norm $\|A\|_{G'(p,q)}$ for $y\in[-1,1]$ and different value of $p$ and $q$.}
		\label{fig: norme A_G'(p,q) }
	\end{figure}

	\begin{figure}[htb!]
		\centering
		\includegraphics[width=\textwidth]{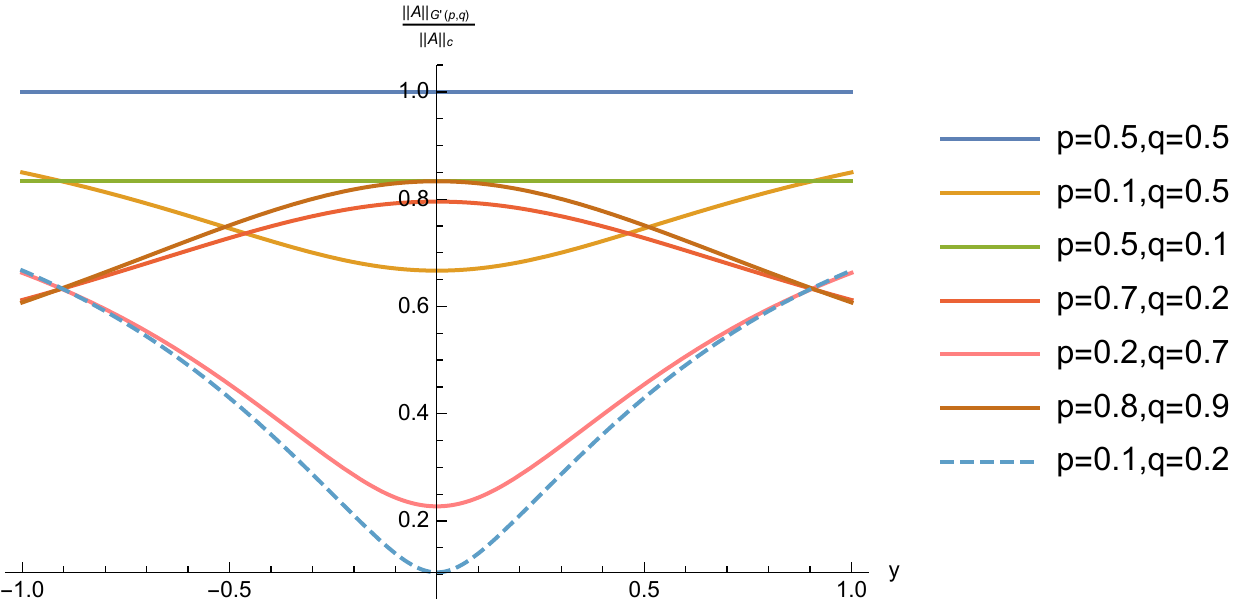}\qquad\qquad\qquad\qquad 
		\caption{The ratio $\|A\|_{G'(p,q)}/\|A\|_c$ for $y\in[-1,1]$ and different value of $p$ and $q$.}
		\label{fig: Rapport norme A_G(p,q) et A_c}
	\end{figure}

	\bigskip
	
	We now move on to the last example, the pure correlation part of the $I_{3322}$ tight Bell inequality (here, $N=3$):
	$$M_{\textrm{3322}} = \frac 1 4 \begin{bmatrix}
		1 & 1 & 1 \\
		1 & 1 & -1 \\
		1 & -1 & 0 
	\end{bmatrix}.$$
	The inverse of the matrix above has entries with absolute value 2, so our main result does not apply. Indeed, one can see that
	$$\|s(\sigma_X,\sigma_Y,\sigma_Z)\|_c \leq 1 \iff s \leq \frac{1}{\sqrt 3},$$
	while
	$$\|s(\sigma_X,\sigma_Y,\sigma_Z)\|_{M_{3322}} \leq 1 \iff s \leq \frac{4}{\sqrt 2 + 2\sqrt 3} > \frac{1}{\sqrt 3}.$$
	This shows that, for tensors in the positive direction $(\sigma_X,\sigma_Y,\sigma_Z)$, for parameter values 
	$$s \in \left( \frac{1}{\sqrt 3}, \frac{4}{\sqrt 2 + 2 \sqrt 3} \right],$$
	we have 
	$$\|s(\sigma_X,\sigma_Y,\sigma_Z)\|_{M_{3322}} \leq 1 < \|s(\sigma_X,\sigma_Y,\sigma_Z)\|_c,$$
	so there exist incompatible dichotomic Pauli measurements which do not violate the pure correlation $I_{3322}$ Bell inequality \cite{quintino2014joint}.
	
	\section{Non-local games which characterize incompatibility}\label{sec:norm-equality}
	
	Up to this point, we have seen the following two inequalities relating the $M$-Bell-locality norm $\|\cdot \|_M$ and the compatibility norm $\|\cdot\|_c$ of a tuple of dichotomic quantum measurements: 
	$$\|A\|_M\leq\|A\|_{c}\|M\|_{\ell_1^N \otimes_{\epsilon} \ell_1^N} \qquad \text{ and } \qquad \|A\|_c \leq \|A\|_M \|M^{-1}\|_{\ell^N_\infty \otimes_\epsilon \ell^N_\infty}.$$
	In this section, we ask for which (invertible) non-local games $M$, these two inequalities, used together, allow us to conclude that $\|\cdot \|_M = \|\cdot\|_c$. Such an equality would prove a strong equivalence of Bell inequality violations and incompatibility for the game $M$, in the spirit of \cite{wolf2009measurements}.
	
	First, note that,  for an invertible game $M$ and a non-zero tuple of measurements $A$, we have 
	$$\|A\|_M\leq\|A\|_{c}\|M\|_{\ell_1^N \otimes_\epsilon \ell_1^N}\leq \|A\|_M\|M^{-1}\|_{\ell_\infty^N \otimes_\epsilon \ell_\infty^N}\|M\|_{\ell_1^N \otimes_\epsilon \ell_1^N},$$
	hence 
	\begin{equation}\label{eq:uncertainty}
		\|M^{-1}\|_{\ell_\infty^N \otimes_\epsilon \ell_\infty^N}\|M\|_{\ell_1^N \otimes_\epsilon \ell_1^N}\geq 1.
	\end{equation}
	In order to deduce that $\|\cdot \|_M = \|\cdot\|_c$, one requires
	$$\beta(M) = \|M\|_{\ell_1^N \otimes_\epsilon \ell_1^N} = 1 \qquad \text{ and } \qquad \|M^{-1}\|_{\ell_\infty^N \otimes_\epsilon \ell_\infty^N} = 1.$$
	Up to rescaling, this is equivalent to requiring that the inequality \eqref{eq:uncertainty} should be saturated. We now study the equality case in \eqref{eq:uncertainty}, which can be seen as an ``uncertainty relation'' for the non-local game $M$.
	
	Let us first show that \eqref{eq:uncertainty} cannot be saturated for $N \geq 3$. We recall and use a definition from \cite{aubrun2020universal} to understand the ratio of $\|\cdot\|_{X\otimes_{\pi}Y}$and $\|\cdot\|_{X\otimes_{\epsilon}Y}$ for two given  Banach spaces $X$ and $Y$.
	\begin{definition}\cite{aubrun2020universal}\label{def 8.7}
		Given two finite-dimensional Banach spaces $X$ and $Y$. There will always exist a constant $1\leq C<\infty$ such that:
		$$\|\cdot\|_{X\otimes_{\epsilon}Y}\leq\|\cdot\|_{X\otimes_{\pi}Y}\leq C \|\cdot\|_{X\otimes_{\epsilon}Y} $$
		One denotes $\rho(X,Y)$ the smallest $C$ satisfying this inequality. Equivalently one has 
		$$\rho(X,Y)=\sup_{0\neq z\in X\otimes Y}\frac{\|z\|_{X\otimes_{\pi}Y}}{\|z\|_{X\otimes_{\epsilon}Y}}$$
	\end{definition}
	We recall one of the important properties of $\rho(X,Y)$ in the case of $\ell_1$ and $\ell_\infty$ spaces.
	\begin{proposition}\cite[Proposition 13]{aubrun2020universal}
		For all $N \geq 2$, we have
		$$\rho(\ell_1^N,\ell_1^N) = \rho(\ell_\infty^N,\ell_\infty^N) \leq \sqrt{2N}.$$
	\end{proposition}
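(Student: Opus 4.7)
The plan is to treat the equality and the inequality separately. For the equality $\rho(\ell_1^N, \ell_1^N) = \rho(\ell_\infty^N, \ell_\infty^N)$, I would invoke the duality between the injective and projective tensor norms: in finite dimensions, $(X \otimes_\pi Y)^* = X^* \otimes_\epsilon Y^*$ and $(X \otimes_\epsilon Y)^* = X^* \otimes_\pi Y^*$, as recalled in the remarks after Definitions~\ref{def: injective norm} and \ref{def: dual tensor norm}. Combined with the elementary identity $\sup_{v \neq 0} \|v\|_\alpha / \|v\|_\beta = \sup_{v^* \neq 0} \|v^*\|_\beta^* / \|v^*\|_\alpha^*$ valid for any two norms on a finite-dimensional vector space, this yields $\rho(X,Y) = \rho(X^*, Y^*)$ in general, which specializes to the desired equality since $(\ell_1^N)^* = \ell_\infty^N$.

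For the inequality, I would work in the $\ell_1$ formulation, where both norms admit transparent expressions. Identifying $z \in \ell_1^N \otimes \ell_1^N$ with a matrix $M \in \mathcal M_N(\mathbb R)$, the isometric identification $\ell_1(I) \otimes_\pi \ell_1(J) \cong \ell_1(I \times J)$ gives $\|z\|_{\ell_1^N \otimes_\pi \ell_1^N} = \sum_{i,j} |M_{ij}|$, while unfolding Definition~\ref{def: injective norm} and reducing to extreme points of the $\ell_\infty$ unit ball yields $\|z\|_{\ell_1^N \otimes_\epsilon \ell_1^N} = \max_{\beta \in \{\pm 1\}^N} \|M\beta\|_1$ (after maximizing over the sign vector $\alpha$ to recover the $\ell_1$-norm). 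It thus suffices to prove
\[
\sum_{i,j} |M_{ij}| \leq \sqrt{2N} \, \max_{\beta \in \{\pm 1\}^N} \|M\beta\|_1.
\]
The key step is a probabilistic argument. Sampling $\beta \in \{\pm 1\}^N$ uniformly, I would compute
\[
\mathbb{E}_\beta \|M\beta\|_1 = \sum_{i=1}^N \mathbb{E}_\beta \bigl|\langle M_{i,:}, \beta \rangle\bigr| \geq \frac{1}{\sqrt 2} \sum_{i=1}^N \|M_{i,:}\|_2,
\]
using Khintchine's inequality with its sharp constant $1/\sqrt 2$. A Cauchy--Schwarz application $\|M_{i,:}\|_1 \leq \sqrt{N}\,\|M_{i,:}\|_2$ then gives $\mathbb{E}_\beta \|M\beta\|_1 \geq \frac{1}{\sqrt{2N}} \sum_{i,j} |M_{ij}|$, and existence of a realizing sign vector $\beta_0 \in \{\pm 1\}^N$ finishes the proof.

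The main technical point is the use of the sharp form of Khintchine's inequality: without the optimal constant $1/\sqrt 2$, one would only recover a bound $C\sqrt{N}$ for some $C > \sqrt 2$. The duality reduction and the Cauchy--Schwarz chain are otherwise routine; the worst case is in fact tight up to constants, as can be checked on a random Rademacher matrix $M \in \{\pm 1\}^{N \times N}$, for which $\sum|M_{ij}| = N^2$ while $\|M\|_{\ell_1^N \otimes_\epsilon \ell_1^N}$ scales as $N^{3/2}$.
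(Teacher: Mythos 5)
The paper does not prove this statement itself; it is quoted directly from the cited reference, so there is no internal proof to compare against. Your argument is correct and is essentially the standard one: the duality identity $\rho(X,Y)=\rho(X^*,Y^*)$ (from $(X\otimes_\pi Y)^*=X^*\otimes_\epsilon Y^*$ together with the elementary fact that the optimal constant comparing two norms equals the optimal constant comparing their duals in the reverse order) gives the equality, and the computation $\|M\|_{\ell_1^N\otimes_\pi\ell_1^N}=\sum_{i,j}|M_{ij}|$, $\|M\|_{\ell_1^N\otimes_\epsilon\ell_1^N}=\max_{\beta\in\{\pm1\}^N}\|M\beta\|_1$, combined with sharp Khintchine ($L_1$ lower constant $1/\sqrt2$) and Cauchy--Schwarz, yields the bound $\sqrt{2N}$. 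All steps check out, including your remark that the sharp Khintchine constant is what produces the constant $\sqrt 2$ and that a random sign matrix shows the $\sqrt N$ order is optimal.
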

	
	With the help of the definition of $\rho(X,Y)$ and proposition above, we can\footnote{We thank Carlos Palazuelos for the proof of the proposition.} improve the inequality \eqref{eq:uncertainty}.
	\begin{proposition}\label{prop:norm-inequality-M-inverse}
		Let $M$ a real and invertible matrix. Then one has
		$$\|M^{-1}\|_{\ell^N_{\infty}\otimes_{\epsilon}\ell^N_{\infty}}\|M\|_{\ell^N_1\otimes_{\epsilon}\ell^N_1}\geq \frac{N}{\rho(\ell_{\infty}^N,\ell_{\infty}^N)}\geq \sqrt{\frac N 2}\geq 1.$$
		for $N\geq 2$. In particular, for $N \geq 3$, the last inequality above is strict.
	\end{proposition}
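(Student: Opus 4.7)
My plan is to exploit the duality between the injective and projective tensor norms together with the trace pairing $\mathrm{Tr}(MM^{-1}) = N$, which is the only way to interact $M$ with $M^{-1}$ while producing a dimension-dependent constant.

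The first step is to use the duality recalled in the excerpt: for any matrices $M, K \in \mathcal M_N(\mathbb R)$, viewed as tensors in $\mathbb R^N \otimes \mathbb R^N$ with the natural pairing $\langle M, K\rangle = \sum_{i,j} M_{ij} K_{ij}$, we have
\begin{equation*}
|\langle M, K\rangle| \leq \|M\|_{\ell_1^N \otimes_\epsilon \ell_1^N}\,\|K\|_{\ell_\infty^N \otimes_\pi \ell_\infty^N},
\end{equation*}
since $\ell_\infty^N \otimes_\pi \ell_\infty^N$ is the dual of $\ell_1^N \otimes_\epsilon \ell_1^N$. Next I would invoke Definition \ref{def 8.7} to dominate the projective norm by the injective norm, getting
\begin{equation*}
|\langle M, K\rangle| \leq \rho(\ell_\infty^N, \ell_\infty^N)\,\|M\|_{\ell_1^N \otimes_\epsilon \ell_1^N}\,\|K\|_{\ell_\infty^N \otimes_\epsilon \ell_\infty^N}.
\end{equation*}

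The key move is then the clever choice $K := (M^{-1})^\top$. On the one hand, $\langle M, K\rangle = \sum_{i,j} M_{ij}(M^{-1})_{ji} = \mathrm{Tr}(M M^{-1}) = N$. On the other hand, the injective norm on $\ell_\infty^N \otimes_\epsilon \ell_\infty^N$ is just the entrywise $\max$, which is invariant under transposition, so $\|K\|_{\ell_\infty^N \otimes_\epsilon \ell_\infty^N} = \|M^{-1}\|_{\ell_\infty^N \otimes_\epsilon \ell_\infty^N}$. Plugging in and rearranging yields
\begin{equation*}
\|M^{-1}\|_{\ell_\infty^N \otimes_\epsilon \ell_\infty^N}\,\|M\|_{\ell_1^N \otimes_\epsilon \ell_1^N} \geq \frac{N}{\rho(\ell_\infty^N, \ell_\infty^N)},
\end{equation*}
which is the first inequality in the statement.

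For the second inequality I would cite directly the proposition of Aubrun–Müller-Hermes–Plávala that $\rho(\ell_\infty^N, \ell_\infty^N) \leq \sqrt{2N}$, giving $N/\rho \geq N/\sqrt{2N} = \sqrt{N/2}$. The final inequality $\sqrt{N/2} \geq 1$ is trivial for $N \geq 2$, and strict for $N \geq 3$ since $\sqrt{3/2} > 1$. I do not expect any real obstacle here: the only mild subtlety is correctly identifying which duality and which transpose to use so that the trace pairing evaluates to $N$ and the norms simplify; all the heavy lifting (the bound on $\rho$) is imported as a black box.
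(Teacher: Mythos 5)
Your proposal is correct and follows essentially the same route as the paper's proof: the trace pairing $\Tr(M^{-1}M)=N$ with $K=(M^{-1})^\top$, the duality bound by $\|M\|_{\ell_1^N\otimes_\epsilon\ell_1^N}\|K\|_{\ell_\infty^N\otimes_\pi\ell_\infty^N}$, the passage from projective to injective via $\rho(\ell_\infty^N,\ell_\infty^N)$, and the imported bound $\rho\leq\sqrt{2N}$. No gaps.
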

	\begin{proof}
		Let $$N=\Tr[M^{-1}M]=\langle \tilde{M},M\rangle_{H.S}\leq \|M\|_{\ell^N_1\otimes_{\epsilon}\ell^N_1}\|\tilde{M}\|_{\ell^N_{\infty}\otimes_{\pi}\ell^N_{\infty}}$$
		Where $\tilde{M}:=(M^{-1})^T$
		Thus we have by definition \ref{def 8.7} and we recall $\langle A,B\rangle_{H.S}:=\Tr[A^*B] $ $$N\leq\|M\|_{\ell^N_1\otimes_{\epsilon}\ell^N_1}\|\tilde{M}\|_{\ell^N_{\infty}\otimes_{\epsilon}\ell^N_{\infty}}\rho(\ell^N_{\infty},\ell^N_{\infty})$$
		Thus we have $$1\leq\frac{N}{\rho(\ell^N_{\infty},\ell^N_{\infty})}\leq\|M\|_{\ell^N_1\otimes_{\epsilon}\ell^N_1}\|M^{-1}\|_{\ell^N_{\infty}\otimes_{\epsilon}\ell^N_{\infty}}.$$
	\end{proof}
	
	Having shown that inequality \eqref{eq:uncertainty} cannot be saturated for $N \geq 3$, we now focus on the $N=2$ case. We need the following lemma\footnote{We thank Zbigniew Pucha{\l}a for this result.}. 
	
	\begin{lemma}\label{uncertaincty prop}
		For any matrix $X\in\mathcal{M}_N(\mathbb{C})$ and for any unitary operators $U,V \in \mathcal{U}_N$, we have
		$$\|UXV^*\|_{\ell_\infty^N \otimes_\epsilon \ell_\infty^N}\|X\|_{\ell_\infty^N \otimes_\epsilon \ell_\infty^N}\geq \frac 1 N|\det X|^{2/N}.$$
		Equality holds if and only if both $X$ and $UXV^*$ are scalar multiples of Hadamard matrices. 
	\end{lemma}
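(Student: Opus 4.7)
The plan is to reduce everything to the entrywise maximum norm and then invoke Hadamard's inequality. First, I will observe that by the definition of the injective tensor norm with $\ell_\infty^N{}^* = \ell_1^N$, one has
$$\|Y\|_{\ell_\infty^N \otimes_\epsilon \ell_\infty^N} = \sup_{\|a\|_1,\|b\|_1 \leq 1} \Big| \sum_{i,j} Y_{ij} a_i b_j \Big| = \max_{i,j} |Y_{ij}| =: \|Y\|_{\max},$$
so the claim becomes $\|UXV^*\|_{\max}\,\|X\|_{\max} \geq N^{-1}|\det X|^{2/N}$.

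Next I would establish the single-matrix estimate $\|Y\|_{\max} \geq N^{-1/2} |\det Y|^{1/N}$ by Hadamard's determinant inequality. Letting $r_i$ denote the $i$-th row of $Y$,
\begin{equation*}
|\det Y| \;\leq\; \prod_{i=1}^N \|r_i\|_2 \;\leq\; \prod_{i=1}^N \sqrt{N}\,\|Y\|_{\max} \;=\; N^{N/2}\,\|Y\|_{\max}^N,
\end{equation*}
which rearranges to the required bound. Applying this to both $Y = X$ and $Y = UXV^*$ (and using $|\det(UXV^*)| = |\det X|$ because $U,V$ are unitary) yields two inequalities whose product is exactly the claimed one. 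No geometric-mean trick is even needed; direct multiplication gives
$$\|UXV^*\|_{\max}\,\|X\|_{\max} \;\geq\; N^{-1/2}|\det X|^{1/N} \cdot N^{-1/2}|\det X|^{1/N} \;=\; N^{-1}|\det X|^{2/N}.$$

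For the equality case I would trace both inequalities in the Hadamard chain. Equality in $|\det Y| \leq \prod_i \|r_i\|_2$ forces the rows of $Y$ to be pairwise orthogonal (or some row to vanish, which is excluded when $\det Y \neq 0$; the inequality is trivial when $\det X = 0$). Equality in $\|r_i\|_2 \leq \sqrt N\,\|Y\|_{\max}$ forces every entry of every row to have modulus exactly $\|Y\|_{\max}$. Together these two conditions say precisely that $Y/\|Y\|_{\max}$ is a (complex) Hadamard matrix, i.e.~$Y$ is a scalar multiple of a Hadamard matrix. Since the overall equality requires both single-matrix inequalities to be saturated, we get equality if and only if both $X$ and $UXV^*$ are scalar multiples of Hadamard matrices, as stated.

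The proof is essentially mechanical once the norm is identified with $\|\cdot\|_{\max}$; the only mildly delicate point is the equality analysis, where one must carefully separate the two sources of slack in the Hadamard bound and recognize their joint saturation as the definition of a (complex) Hadamard matrix. The degenerate case $\det X = 0$ is handled trivially since the right-hand side vanishes.
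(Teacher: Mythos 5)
Your proof is correct, and it takes a genuinely different route from the paper's. You decouple the two factors entirely: you prove the single-matrix bound $\|Y\|_{\max}\geq N^{-1/2}|\det Y|^{1/N}$ via Hadamard's determinant inequality, apply it separately to $X$ and to $UXV^*$, and glue the two estimates together using only the unitary invariance of the determinant. The paper instead keeps the two matrices coupled: it vectorizes, uses that $y=(U\otimes\bar V)x$ has the same Euclidean norm as $x$ (unitary invariance of the Hilbert--Schmidt norm), bounds $\|y\|_\infty\|x\|_\infty\geq\|x\|_2^2/N^2=\sum_i\sigma_i(X)^2/N^2$, and only then brings in the determinant through an AM--GM inequality on the singular values. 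The two arguments are of course cousins --- Hadamard's inequality packages the orthogonalization and AM--GM steps into one named result --- but yours is more modular and arguably more elementary, since it never needs the tensor-product structure $U\otimes\bar V$ on the vectorization. Your equality analysis is also sound and matches the paper's conclusion by a slightly different decomposition of the slack: you require, for each matrix separately, flat entries plus pairwise orthogonal rows (the definition of a scalar multiple of a complex Hadamard matrix), whereas the paper requires flat entries for both matrices plus equal singular values of $X$; these conditions are equivalent. Your handling of the degenerate case $\det X=0$ is adequate (the bound is trivial there, and equality forces $X=0$, consistent with the stated characterization). The only stylistic remark is that you could note explicitly that the product of the two saturated inequalities characterizes equality because both lower bounds are strictly positive when $\det X\neq 0$; you do implicitly address this, so there is no gap.
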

	\begin{proof}
		Let $x \in \mathbb C^{N^2}$ be the vectorization of $X$; we have 
		$$\|X\|_{\ell_\infty^N \otimes_\epsilon \ell_\infty^N} = \max_{i,j \in [N]} |X_{ij}| =  \|x\|_{\ell_\infty^{N^2}}.$$
		Moreover, the vectorization of $UXV^*$ is given by
		$$y:= (U \otimes \bar V) x.$$
		Using the unitarity of $U,V$ and the fact that for all vectors $z \in \mathbb C^{N^2}$, $\|z\|_{\ell_\infty^{N^2}} \geq \|z\|_2/N$, we have 
		\begin{align*}
			\|UXV^*\|_{\ell_\infty^N \otimes_\epsilon \ell_\infty^N}\|X\|_{\ell_\infty^N \otimes_\epsilon \ell_\infty^N} &= \|y\|_{\ell_\infty^{N^2}}\|x\|_{\ell_\infty^{N^2}} \geq \frac{1}{N^2}\|y\|_2\|x\|_2 \\
			&= \frac{1}{N^2}\|x\|_2^2 = \frac{1}{N^2}\|X\|_2^2=\frac{1}{N^2}\sum_{i=1}^N\sigma_i(X)^2.
		\end{align*}
		Above, $\sigma_i(X)$ denote the singular values of $X$. Using now the arithmetic mean-geometric mean (AM-GM) inequality, we have
		$$\frac{1}{N}\sum_{i=1}^N\sigma_i(X)^2\geq\left(\prod_{i=1}^N\sigma_i(X)^2\right)^{\frac{1}{N}}=|\det X|^{\frac{2}{N}}.$$
		Hence 
		$$\|UXV^*\|_{\ell_\infty^N \otimes_\epsilon \ell_\infty^N}\|X\|_{\ell_\infty^N \otimes_\epsilon \ell_\infty^N}\geq \frac 1 N|\det X|^{\frac{2}{N}},$$
		proving the inequality.
		
		In the derivation above, we have used three inequalities: the lower bound on the $\ell_\infty$ norm of the vectors $x,y$ by their $\ell_2$ norms, and the arithmetic and geometric inequality. If the former, equality holds iff the entries of, respectively, $x$ and $y$ are flat; this corresponds to the matrices $X$ and $UXV^*$ having, respectively, entries of identical absolute values. The latter corresponds to the singular values of $X$ being identical, which corresponds to $X$ being a scalar multiple of a unitary matrix. The announced equality condition follows from these considerations.
	\end{proof}
	Recall that the Fourier matrix, also known as the discrete Fourier transform (DFT), is given by $$F = N^{-1/2} \Big[\omega^{ij}\Big]_{i,j=0}^{N-1},$$
	where $\omega = \exp(2 \pi \mathrm{i}/N)$. 
	
	\begin{proposition}\label{prop:equality-N-2}
		For $N=2$, all the invertible Bell functionals $M\in\mathcal{M}_2(\mathbb{R})$ satisfying $$\|M^{-1}\|_{\ell^2_{\infty}\otimes_{\epsilon}\ell^2_{\infty}}\|M\|_{\ell^2_1\otimes_{\epsilon}\ell^2_1} = 1$$
		are of the form
		$$M =a\begin{bmatrix}
			1 & 1 \\ 
			1 & -1 \end{bmatrix}$$
		with $a\in\mathbb R$, $a \neq 0$.
	\end{proposition}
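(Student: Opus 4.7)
The plan is to exploit the $N=2$ structure together with the uncertainty-type inequality of Lemma~\ref{uncertaincty prop}. First, for an invertible $M=\begin{bmatrix} a & b \\ c & d \end{bmatrix}\in\mathcal M_2(\mathbb R)$, the adjugate formula $M^{-1} = (\det M)^{-1}\begin{bmatrix} d & -b \\ -c & a \end{bmatrix}$ shows that the entrywise maxima of $M$ and $M^{-1}$ coincide up to a factor of $|\det M|$, so
\begin{equation*}
\|M^{-1}\|_{\ell^2_\infty \otimes_\epsilon \ell^2_\infty} \;=\; \frac{\|M\|_{\ell^2_\infty \otimes_\epsilon \ell^2_\infty}}{|\det M|},
\end{equation*}
and the saturation condition of the proposition reduces to $\beta(M)\,\|M\|_{\ell^2_\infty \otimes_\epsilon \ell^2_\infty} = |\det M|$.

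Second, I reinterpret $\beta(M)$ as an injective norm on $\ell^2_\infty\otimes\ell^2_\infty$ by conjugating with the Hadamard matrix $H=\begin{bmatrix} 1 & 1 \\ 1 & -1 \end{bmatrix}$. Since the rows of $H$ together with their negatives exhaust $\{\pm 1\}^2$, a direct calculation gives $(HMH)_{ij}=h_i^\top M h_j$ for $h_1=(1,1)^\top$, $h_2=(1,-1)^\top$, and hence $\beta(M) = \max_{\epsilon,\delta\in\{\pm 1\}^2}|\epsilon^\top M \delta| = \|HMH\|_{\ell^2_\infty \otimes_\epsilon \ell^2_\infty}$. Applying Lemma~\ref{uncertaincty prop} with $X=M$ and $U=V=H/\sqrt 2$ (which is unitary because $H^\top H=2I$) yields $\|HMH/2\|_{\ell^2_\infty \otimes_\epsilon \ell^2_\infty}\,\|M\|_{\ell^2_\infty \otimes_\epsilon \ell^2_\infty}\geq |\det M|/2$, i.e.\ $\beta(M)\,\|M\|_{\ell^2_\infty \otimes_\epsilon \ell^2_\infty}\geq|\det M|$. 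Its equality clause then forces both $M$ and $HMH$ to be scalar multiples of (complex) Hadamard matrices.

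Finally, I translate this equality condition back to the statement. Since $M$ is real, absorbing a global phase into the scalar forces $M=\alpha K$ with $\alpha\in\mathbb R\setminus\{0\}$ and $K$ a real $2\times 2$ Hadamard matrix (entries in $\{\pm 1\}$, orthogonal rows). There are only eight such $K$, each obtained from the CHSH matrix $\begin{bmatrix} 1 & 1 \\ 1 & -1 \end{bmatrix}$ by permuting rows/columns and/or flipping signs of rows/columns --- precisely the natural game-theoretic symmetries (relabeling questions and answers) under which the CHSH class is invariant. A short direct check on these eight matrices confirms that $HKH$ is again a scaled Hadamard in every case, so the companion equality condition is automatic and the equality case coincides with the ``variants of CHSH'' as claimed.

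The main obstacle I foresee is cleanly bridging the \emph{literal} statement ``$M=a\begin{bmatrix} 1 & 1 \\ 1 & -1 \end{bmatrix}$'' with the actual equality set, which is the full orbit of eight real Hadamards; making this rigorous requires either explicitly invoking the game's permutation/sign-flip symmetries to collapse the orbit or fixing a canonical representative, as the Hadamards in other orbits (e.g.\ $\begin{bmatrix} 1 & -1 \\ 1 & 1 \end{bmatrix}$) also saturate the inequality without being literal scalar multiples of the stated CHSH form.
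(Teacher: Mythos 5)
Your proof is correct and follows essentially the same route as the paper's: the adjugate identity giving $\|M^{-1}\|_{\ell^2_\infty\otimes_\epsilon\ell^2_\infty}=\|M\|_{\ell^2_\infty\otimes_\epsilon\ell^2_\infty}/|\det M|$, the identification of $\beta(M)$ with $\|HMH\|_{\ell^2_\infty\otimes_\epsilon\ell^2_\infty}$ via the Hadamard/Fourier isometry $\ell_1^2\to\ell_\infty^2$, and the equality case of Lemma~\ref{uncertaincty prop}. Your closing remark that the equality set is really the eight-element orbit of real $2\times 2$ Hadamard matrices (the CHSH matrix up to permutations and sign flips of rows and columns, i.e.\ relabelings of questions and answers) is a point the paper's own proof passes over silently, but it refines rather than alters the argument.
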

	\begin{proof}
		Since $M\in\mathcal{M}_2(\mathbb R) $ one note that
		$$\|M^{-1}\|_{\ell^2_{\infty}\otimes_{\epsilon}\ell^2_{\infty}}=\frac{1}{|\det(M)|}\|M\|_{\ell^2_{\infty}\otimes_{\epsilon}\ell^2_{\infty}}$$
		and $$\|M\|_{\ell^2_1\otimes_{\epsilon}\ell^2_1}=\|(T\otimes T) M\|_{\ell^2_{\infty}\otimes_{\epsilon}\ell^2_{\infty}}=2\,\|(F\otimes F) M\|_{\ell^2_{\infty}\otimes_{\epsilon}\ell^2_{\infty}}$$
		where $T =\begin{bmatrix}
			1 & 1 \\ 
			1 & -1 \end{bmatrix} = \sqrt 2F$, is an isometry $T:\ell^2_1\to\ell^2_{\infty}$; geometrically, this corresponds to the fact that the unit ball of $\ell^2_1$ (a diamond) is a scaled rotation of the unit ball of $\ell^2_{\infty}$ (a square).\\
		Now using the lemma above one has 
		$$\|M^{-1}\|_{\ell^2_{\infty}\otimes_{\epsilon}\ell^2_{\infty}}\|M\|_{\ell^2_1\otimes_{\epsilon}\ell^2_1}=\frac{2}{|\det(M)|}\|M\|_{\ell^2_{\infty}\otimes_{\epsilon}\ell^2_{\infty}}\|(F\otimes F) M\|_{\ell^2_{\infty}\otimes_{\epsilon}\ell^2_{\infty}}\geq \frac{2}{|\det (M)|}\frac{|\det(M)|}{2}=1$$
		The equality holds as the lemma above iff $M$ is a unitary and the entries of $(F\otimes F)M$ and $M$ are flat (i.e.~have the same absolute value). Then $M$ is a multiple of a \emph{Hadamard matrix}:
		$$M =a\begin{bmatrix}
			1 & 1 \\ 
			1 & -1 \end{bmatrix} \qquad  a\in \mathbb{R},\, a\neq 0.$$
	\end{proof}

	Gathering Propositions \ref{prop:norm-inequality-M-inverse} and \ref{prop:equality-N-2}, we obtain the following important characterization of non-local games $M$ achieving equality in \eqref{eq:uncertainty}. 
	
	\begin{theorem}
		The only invertible non-local games $M \in \mathcal M_N(\mathbb R)$ satisfying 
		$$\|M^{-1}\|_{\ell^2_{\infty}\otimes_{\epsilon}\ell^2_{\infty}}\|M\|_{\ell^2_1\otimes_{\epsilon}\ell^2_1} = 1$$
		have two questions ($N=2$) and are variants of the CHSH game: $M = a M_{\textrm{CHSH}}$ for some $a \neq 0$.
	\end{theorem}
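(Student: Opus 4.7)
The plan is to combine the two previous propositions, which together cover all $N \geq 2$. First, I would invoke Proposition \ref{prop:norm-inequality-M-inverse}, which shows that for any real invertible $M \in \mathcal M_N(\R)$,
$$\|M^{-1}\|_{\ell^N_{\infty}\otimes_{\epsilon}\ell^N_{\infty}}\|M\|_{\ell^N_1\otimes_{\epsilon}\ell^N_1}\geq \frac{N}{\rho(\ell_{\infty}^N,\ell_{\infty}^N)}\geq \sqrt{\tfrac N 2}.$$
For $N \geq 3$ the right-hand side is strictly larger than $1$, so equality in the uncertainty-type inequality \eqref{eq:uncertainty} is impossible. Hence any $M$ saturating the inequality must satisfy $N=2$, reducing the problem to the two-input case.

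Next, I would apply Proposition \ref{prop:equality-N-2} to the $N=2$ setting. That proposition exactly classifies the invertible matrices $M \in \mathcal M_2(\R)$ achieving $\|M^{-1}\|_{\ell^2_{\infty}\otimes_{\epsilon}\ell^2_{\infty}}\|M\|_{\ell^2_1\otimes_{\epsilon}\ell^2_1} = 1$: they are precisely the nonzero scalar multiples of the Hadamard matrix $\begin{bmatrix} 1 & 1 \\ 1 & -1\end{bmatrix}$, i.e.\ $M = a M_{\textrm{CHSH}}$ (up to the normalization factor already built into $M_{\textrm{CHSH}}$). Combining the two steps yields the theorem.

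Since the proof is essentially an assembly, the main conceptual step has already been done in the two preceding propositions, and no further obstacle remains. The short proof can be written as:

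\begin{proof}
By Proposition \ref{prop:norm-inequality-M-inverse}, for $N \geq 3$ we have $\|M^{-1}\|_{\ell^N_{\infty}\otimes_{\epsilon}\ell^N_{\infty}}\|M\|_{\ell^N_1\otimes_{\epsilon}\ell^N_1} \geq \sqrt{N/2} > 1$, ruling out equality. Hence the equality case forces $N = 2$. Proposition \ref{prop:equality-N-2} then identifies the only such matrices as $M = a \begin{bmatrix} 1 & 1 \\ 1 & -1\end{bmatrix}$ with $a \neq 0$, which are precisely the nonzero scalar multiples of $M_{\textrm{CHSH}}$.
\end{proof}
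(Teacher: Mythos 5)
Your proposal is correct and follows exactly the paper's route: the theorem is obtained by gathering Proposition \ref{prop:norm-inequality-M-inverse} (which excludes $N\geq 3$ since the product is at least $\sqrt{N/2}>1$) with Proposition \ref{prop:equality-N-2} (which classifies the $N=2$ equality case as nonzero multiples of the $2\times 2$ Hadamard matrix, i.e.\ of $M_{\textrm{CHSH}}$). Nothing is missing.
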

	
	Note that saturating inequality \eqref{eq:uncertainty} is just a \emph{sufficient} condition for having $\|\cdot \|_M = \|\cdot \|_c$. We leave the general case open: for which non-local games $M$, does one have $\|\cdot \|_M = \|\cdot \|_c$? 
	
	\section{Conclusion}
	Two of the most fundamental features of quantum mechanics are measurement incompatibility and the non-locality of correlations. In this work, we address the relation between the two concepts within the natural framework of  tensor norms. It was well-known that in order to observe correlation non-locality in a Bell-type experiment, one should use incompatible measurements. Moreover, it was shown that in some particular cases, such as the CHSH game, incompatibility and the violation of the Bell inequality are equivalent. In the current paper, we introduced a natural framework in which one can directly compare the two notions. We have shown that the incompatibility is not in general equivalent to the non-locality, by comparing two tensor norms. 
	
	Finally, let us address some questions we have left open. First and foremost, our setting is only adapted to dichotomic (2-outcome) POVMs; it would be interesting to extend the results in this paper to measurements with an arbitrary number of outcomes. The main obstacle here is encoding the outcomes of the $g$ POVMs with more than 2 outcomes into a relevant tensor. In Section \ref{sec:norm-equality}, we have shown that the two tensor norms, the one associated to a non-local game and the one associated to compatibility, cannot be shown to be equal using a simple chain of inequalities (except in the case of the CHSH game and its variants). The question whether the two norms can be equal (using other methods) is open. Here, one would need to rely on a more general argument instead of relying on some particular inequalities. Finally, our methods only cover XOR games with pure correlation terms; associating a tensor norm to more general games (such as the full $I_{3322}$ game) is an interesting open problem. 
	\bigskip
	
	\noindent\textit{Acknowledgements.} The authors would like to thank Carlos Palazuelos and Zbigniew Pucha{\l}a for help with the proofs of some results in Section \ref{sec:norm-equality}. The two anonymous referees provided us with an extensive list of suggestions and corrections; we are very thankful for their interest in our work, helping us greatly improve the presentation of our results. 
	
	The authors were supported by the ANR project \href{https://esquisses.math.cnrs.fr/}{ESQuisses}, grant number ANR-20-CE47-0014-01, and by the PHC programs \emph{Sakura} (Random Matrices and Tensors for Quantum Information and Machine Learning), \emph{Star} (Applications of random matrix theory and abstract harmonic analysis to quantum information theory), and \emph{Procope} (Entanglement Preservation in Quantum Information Theory). I.N.~was also supported by the ANR project \href{https://www.math.univ-toulouse.fr/~gcebron/STARS.php}{STARS}, grant number ANR-20-CE40-0008.

	\bibliographystyle{alpha}
	\bibliography{refs}

\newcommand{\etalchar}[1]{$^{#1}$}
\begin{thebibliography}{WPGF09}

\bibitem[AGR82]{Aspect}
Alain Aspect, Philippe Grangier, and G\'erard Roger.
\newblock Experimental realization of einstein-podolsky-rosen-bohm
  gedankenexperiment: A new violation of bell's inequalities.
\newblock {\em Phys. Rev. Lett.}, 49:91--94, 7 1982.

\bibitem[ALP{\etalchar{+}}20]{aubrun2020universal}
Guillaume Aubrun, Ludovico Lami, Carlos Palazuelos, Stanis{\l}aw~J Szarek, and
  Andreas Winter.
\newblock Universal gaps for xor games from estimates on tensor norm ratios.
\newblock {\em Communications in Mathematical Physics}, 375(1):679--724, 2020.

\bibitem[BA07]{brougham2007estimating}
Thomas Brougham and Erika Andersson.
\newblock Estimating the expectation values of spin-1/2 observables with finite
  resources.
\newblock {\em Physical Review A}, 76(5):052313, 2007.

\bibitem[BBV04]{boyd2004convex}
Stephen Boyd, Stephen~P Boyd, and Lieven Vandenberghe.
\newblock {\em Convex optimization}.
\newblock Cambridge university press, 2004.

\bibitem[Bel64]{Bell}
J.~S. Bell.
\newblock On the einstein podolsky rosen paradox.
\newblock {\em Physics Physique Fizika}, 1:195--200, 11 1964.

\bibitem[BJN22]{bluhm2022incompatibility}
Andreas Bluhm, Anna Jen{\v{c}}ov{\'a}, and Ion Nechita.
\newblock Incompatibility in general probabilistic theories, generalized
  spectrahedra, and tensor norms.
\newblock {\em Communications in Mathematical Physics}, 393(3):1125--1198,
  2022.

\bibitem[BN18]{bluhm2018joint}
Andreas Bluhm and Ion Nechita.
\newblock Joint measurability of quantum effects and the matrix diamond.
\newblock {\em Journal of Mathematical Physics}, 59(11):112202, 2018.

\bibitem[BN22]{bluhm2022tensor}
Andreas Bluhm and Ion Nechita.
\newblock A tensor norm approach to quantum compatibility.
\newblock {\em Journal of Mathematical Physics}, 63(6):062201, 2022.

\bibitem[Boh28]{Bohr1928}
Niels Bohr.
\newblock The quantum postulate and the recent development of atomic theory.
\newblock {\em Nature}, 121(3050):580--590, 1928.

\bibitem[Bus86]{busch86}
Paul Busch.
\newblock Unsharp reality and joint measurements for spin observables.
\newblock {\em Phys. Rev. D}, 33:2253--2261, 1986.

\bibitem[BV18]{bene2018measurement}
Erika Bene and Tam{\'a}s V{\'e}rtesi.
\newblock Measurement incompatibility does not give rise to bell violation in
  general.
\newblock {\em New journal of Physics}, 20(1):013021, 2018.

\bibitem[CHSH69]{clauser1969proposed}
John~F Clauser, Michael~A Horne, Abner Shimony, and Richard~A Holt.
\newblock Proposed experiment to test local hidden-variable theories.
\newblock {\em Physical review letters}, 23(15):880, 1969.

\bibitem[DFK19]{designolle2019incompatibility}
S{\'e}bastien Designolle, M{\'a}t{\'e} Farkas, and J{k{e}}drzej Kaniewski.
\newblock Incompatibility robustness of quantum measurements: a unified
  framework.
\newblock {\em New Journal of Physics}, 21(11):113053, 2019.

\bibitem[HBD{\etalchar{+}}15]{hensen2015loophole}
Bas Hensen, Hannes Bernien, Ana{\"\i}s~E Dr{\'e}au, Andreas Reiserer, Norbert
  Kalb, Machiel~S Blok, Just Ruitenberg, Raymond~FL Vermeulen, Raymond~N
  Schouten, Carlos Abell{\'a}n, et~al.
\newblock Loophole-free bell inequality violation using electron spins
  separated by 1.3 kilometres.
\newblock {\em Nature}, 526(7575):682--686, 2015.

\bibitem[Hei27]{Heisenberg1927}
Werner Heisenberg.
\newblock {{\"U}ber den anschaulichen Inhalt der quantentheoretischen Kinematik
  und Mechanik}.
\newblock {\em Zeitschrift f{\"u}r Physik}, 43(3):172--198, 1927.

\bibitem[HJN22]{heinosaari2022order}
Teiko Heinosaari, Maria~Anastasia Jivulescu, and Ion Nechita.
\newblock Order preserving maps on quantum measurements.
\newblock {\em arXiv preprint arXiv:2202.00725}, 2022.

\bibitem[HMZ16]{heinosaari2016invitation}
Teiko Heinosaari, Takayuki Miyadera, and M{\'a}rio Ziman.
\newblock An invitation to quantum incompatibility.
\newblock {\em Journal of Physics A: Mathematical and Theoretical},
  49(12):123001, 2016.

\bibitem[HQB18]{hirsch2018quantum}
Flavien Hirsch, Marco~T{\'u}lio Quintino, and Nicolas Brunner.
\newblock Quantum measurement incompatibility does not imply bell nonlocality.
\newblock {\em Physical Review A}, 97(1):012129, 2018.

\bibitem[LLP10]{lawson2010biased}
Thomas Lawson, Noah Linden, and Sandu Popescu.
\newblock Biased nonlocal quantum games.
\newblock {\em arXiv preprint arXiv:1011.6245}, 2010.

\bibitem[NC00]{nielsen00}
Michael~A. Nielsen and Isaac~L. Chuang.
\newblock {\em Quantum Computation and Quantum Information}.
\newblock Cambridge University Press, 2000.

\bibitem[PV16]{palazuelos2016survey}
Carlos Palazuelos and Thomas Vidick.
\newblock Survey on nonlocal games and operator space theory.
\newblock {\em Journal of Mathematical Physics}, 57(1):015220, 2016.

\bibitem[QVB14]{quintino2014joint}
Marco~T{\'u}lio Quintino, Tam{\'a}s V{\'e}rtesi, and Nicolas Brunner.
\newblock Joint measurability, einstein-podolsky-rosen steering, and bell
  nonlocality.
\newblock {\em Physical review letters}, 113(16):160402, 2014.

\bibitem[RV15]{regev2015quantum}
Oded Regev and Thomas Vidick.
\newblock Quantum xor games.
\newblock {\em ACM Transactions on Computation Theory (ToCT)}, 7(4):1--43,
  2015.

\bibitem[Rya02]{ryan2002introduction}
Raymond~A Ryan.
\newblock {\em Introduction to tensor products of Banach spaces}, volume~73.
\newblock Springer, 2002.

\bibitem[Tsi87]{tsirel1987quantum}
Boris~S Tsirel'son.
\newblock Quantum analogues of the bell inequalities. the case of two spatially
  separated domains.
\newblock {\em Journal of Soviet Mathematics}, 36(4):557--570, 1987.

\bibitem[WPGF09]{wolf2009measurements}
Michael~M. Wolf, David P\'erez-Garc\'ia, and Carlos Fern\'andez.
\newblock Measurements incompatible in quantum theory cannot be measured
  jointly in any other local theory.
\newblock {\em Physical Review Letters}, 103:230402, 2009.

\end{thebibliography}
	
\end{document}